\newtheorem{thm}{Theorem}[]
\newtheorem{cor}{Corollary}[]
\theoremstyle{remark}
\newtheorem{rem}[]{Remark}
\begin{document}

\title{Partial Decode-Forward Binning Schemes for the Causal Cognitive Relay Channels}

\author{\authorblockN{Zhuohua Wu and Mai Vu\\}
\thanks{The authors are with the Department of Electrical and Computer Engineering, McGill University, Montreal, Canada
(e-mails: zhuohua.wu@mail.mcgill.ca, mai.h.vu@mcgill.ca).}
}


\maketitle
\begin{abstract}
The causal cognitive relay channel (CRC) has two sender-receiver
pairs, in which the second sender obtains information from the first
sender causally and assists the transmission of both senders. In this
paper, we study both the full- and half-duplex modes. In each mode, we
propose two new coding schemes built successively upon one another to
illustrate the impact of different coding techniques. The first scheme
called \emph{partial decode-forward binning} (PDF-binning) combines
the ideas of partial decode-forward relaying and Gelfand-Pinsker
binning. The second scheme called \emph{Han-Kobayashi partial
decode-forward binning} (HK-PDF-binning) combines PDF-binning with
Han-Kobayashi coding by further splitting rates and applying
superposition coding, conditional binning and relaxed joint decoding.

In both schemes, the second sender decodes a part of the
message from the first sender, then uses Gelfand-Pinsker binning
technique to bin against the decoded codeword, but in such a way that
allows both state nullifying and forwarding. For the Gaussian
channels, this PDF-binning essentializes to a correlation between the
transmit signal and the binning state, which encompasses the
traditional dirty-paper-coding binning as a special case when this
correlation factor is zero. We also provide the closed-form optimal
binning parameter for each scheme.

The 2-phase half-duplex schemes are adapted from the full-duplex ones
by removing block Markov encoding, sending different message parts in
different phases and applying joint decoding across both
phases. Analysis shows that the HK-PDF-binning scheme in both modes
encompasses the Han-Kobayashi rate region and achieves both the
partial decode-forward relaying rate for the first sender and
interference-free rate for the second sender. Furthermore, this scheme
outperforms all existing schemes.
\end{abstract}



\section{Introduction}\label{sec:intro}

\begin{figure*}[t]
\centering
\includegraphics[scale=0.65]{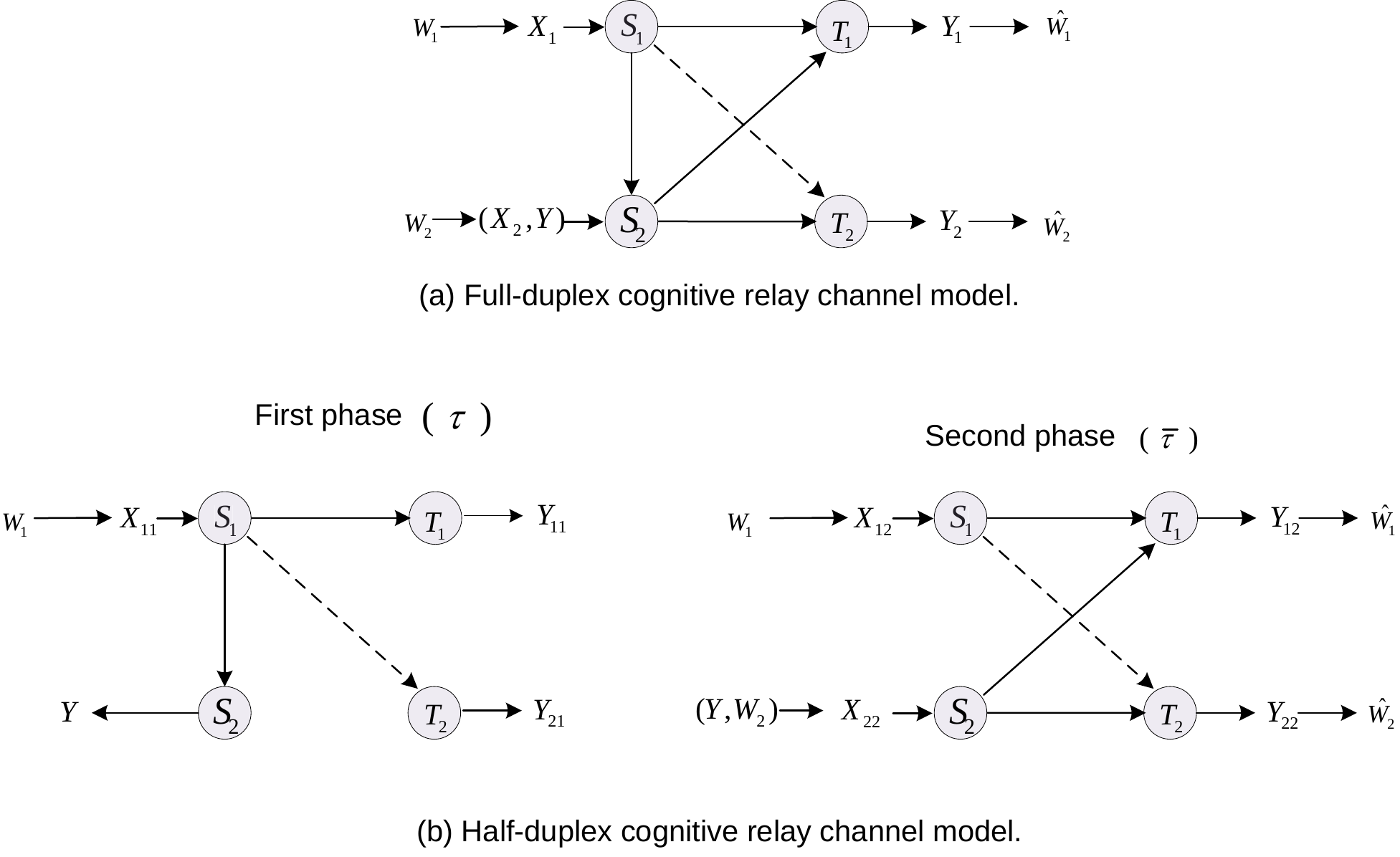}
\caption{The full- and half-duplex modes for the cognitive relay channel.}
\label{fig:FD_and_HD_DM_CRC}
\end{figure*}

\IEEEPARstart{T}{he} causal Cognitive Relay Channel (CRC) is a four-node channel with two senders and two receivers, in which
the second sender obtains information from the first sender causally, then uses that to assist the transmissions of the first sender and its own message.
Different from the assumption in the traditional cognitive channel that the secondary user knows the primary user's message non-causally,
we propose several coding schemes in which the secondary user first decodes the primary user's message causally, then transmits the decoded message and its own message cognitively.

In this paper, we study the cognitive relay channel in both full- and half-duplex modes.
Analysis for the full-duplex mode gives us insights into the optimal coding schemes,
while application to the half-duplex mode is more practical.
In the full-duplex mode, there is no time division into sub-phases;
both senders transmit all messages during the whole transmission.
In the half-duplex mode, however, the transmission is divided into two phases with different message parts sent during each phase.
In the first phase, the second user obtains a message from the first sender causally. In the second phase, these two senders transmit the messages concurrently.

This cognitive relay channel has not been studied much in the literature.
But it has tight relationships with the relay channel (RC), the interference channel (IC) and the cognitive interference channel (CIC).
On the one hand, the second sender serves as a relay and helps forward the message from the first sender.
On the other hand, these two senders interfere with each other during the transmission, and they can also cooperate cognitively.
The closest channel to the CRC is the interference channel with source cooperation (IC with SC), in which both senders can exchange messages causally.

Next, we review existing work related to the cognitive relay channel in both full- and half-duplex modes, then briefly summarize our main results.

\subsection{Related work}
\subsubsection{Full-duplex case}
\emph{i) Relay channel:}
Van der Meulen first proposes the concept of relay channel in \cite{Meulen71}. Cover and El Gamal further design several important techniques for relay channels, including decode-forward, compress-forward, and mixed decode-forward and compress-forward in \cite{Cover79}.
A variation of the decode-forward scheme is partial decode-forward, in which the relay only decodes a part of the message from the source and forwards it to the destination instead of decoding the whole message.
Kramer, Gastpar and Gupta
\cite{Kramer2005TIT} extend these schemes to the multiple-node relay networks and propose several rate regions based on decode-forward, compress-forward and mixed strategies.
Lim, Kim, El Gamal and Chung
\cite{Lim11TIT} propose a new scheme called noisy network coding (NNC) based on compress-forward relaying.
These relay coding techniques have been widely applied in other channels.
For example, in \cite{Liang07a}, Liang and Kramer study the relay broadcast channel using the idea of rate splitting, block Markov encoding and partial decode-forward relaying.

\emph{ii) Interference channel:}
Carleial first introduces the interference channel and proposes inner and outer bounds as well as capacity results for several special cases in \cite{Carleial78}.
Sato studies the capacity for the Gaussian interference channel with strong interference in \cite{Sato81TIT}.
Han and Kobayashi propose the well-known Han-Kobayashi coding technique in \cite{Han} using rate splitting at the transmitters and joint decoding at the receivers, which to date achieves the largest rate region for the interference channel.
Chong, Motani, Garg and El Gamal
\cite{Chong08TIT} propose a variant scheme based on superposition coding, which achieves the same rate region as the original Han-Kobayashi scheme but has fewer auxiliary random variables and hence reduces the encoding and decoding complexities.

\emph{iii) Cognitive interference channel:}
The cognitive interference channel is another closely related channel, which plays a significant role in improving spectrum efficiency.
Devroye, Mitran and Tarokh
first propose the concept in \cite{Devroye2006TIT} and provide an achievable rate region based on combining Gelfand-Pinsker coding \cite{GP1980} with Han-Kobayashi scheme.
They study both the genie-aided (non-causal) and the non genie-aided (causal) cases.
Maric, Yates and Kramer
determine the capacity region for the channel with very strong interference in \cite{Maric2007TIT}.
Wu, Vishwanath and Arapostathis
determine the capacity region for the weak interference case in \cite{Wu2007TIT}.
Other coding schemes for the cognitive interference channel can be seen in \cite{JiangTIT2008, CaoACSSC2008,MaricEuroTele2008}.
Jovicic and Viswanath \cite{Jovicic2009TIT} analyze the Gaussian cognitive channel and give the largest rate for the cognitive user under the constraint
 that the primary user experiences no rate degradation and uses single-user decoder.
Rini, Tuninetti and Devroye \cite{Rini2011TIT}
further propose several new inner, outer bounds and capacity results based on rate spitting, superposition coding, a broadcast channel-like binning scheme and Gelfand-Pinsker coding.

An important technique used in all cognitive coding is the binning technique proposed by Gelfand and Pinsker in \cite{GP1980}.
In Gelfand-Pinsker binning, the state of the channel is known at the input, but unknown at the output.
Marton \cite{Marton79} proposes the double binning scheme and applies it to the broadcast channel.
Kim, Sutivong and Cover \cite{Kim2008TIT} further analyze Gelfand-Pinsker binning to allow the decoding of a part of state information at the destination at a reduced information rate.
Costa \cite{Costa83TIT} applies Gelfand-Pinsker binning to the Gaussian channel and proposes the well-known dirty paper coding (DPC) scheme,
which achieves the same rate as if the channel is interference free.
A surprising feature of DPC binning is that the transmit signal is independent of the state.

\emph{iv) Interference channel with source cooperation:}
Host-Madsen \cite{HostTIT2006} studies outer and inner bounds for the interference channel with either destination or source cooperation.
The achievable rate for source cooperation is based on block Markov encoding and dirty paper coding,
which includes the rate for decode-forward relaying but not the Han-Kobayashi region.
Prabhakaran and Viswanath \cite{PrabhakaranTIT2011} investigate the Gaussian interference channel with source cooperation and propose an achievable rate region built on
block Markov encoding, superposition coding and Han-Kobayashi scheme, but without binning, as well as several upper bounds on the sum rate.
Wang and Tse \cite{WangTIT2011} study the Gaussian interference channel with conferencing transmitters and propose an achievable rate region within 6.5 bits/s/Hz of the capacity for all channel parameters.
The channel is based on conferencing model, in which the common message parts are known though noiseless conference links between the two transmitters before each block transmission begins, hence there is no need for block Markovity.
The scheme utilizes Marton's double binning for the cooperative private messages and superposition coding but not dirty paper coding for the non-cooperative private message parts.
Cao and Chen \cite{CaoISIT2007} propose an achievable rate region for the interference channel with transmitter cooperation
using block Markov encoding, rate splitting and superposition coding, dirty paper coding and random binning.
This scheme includes the Han-Kobayashi region but not the decode-forward relaying rate.
Yang and Tuninetti \cite{YangTIT2011} study the interference channel with generalized feedback (also known as source cooperation)
and propose two schemes.
The first scheme uses rate splitting and block Markov superposition coding only, in which the two users send common messages cooperatively.
The second scheme extends the first one by using both block Markov superposition coding and binning, in which parts of both common and private messages are sent cooperatively.
This scheme also achieves the Han-Kobayashi region but not the decode-forward relaying rate.
We will discuss the schemes in \cite{CaoISIT2007, YangTIT2011} in more details in Section \ref{sec:FD_Tx_coop_compare}.
Tandon and Ulukus \cite{TandonTIT2011} study an outer bound for the MAC with generalized feedback based on dependence balance \cite{HekstraTIT1989} and extend this idea to the interference channel with user cooperation. We will apply this outer bound in Section \ref{sec:FD_Outer_compare}.

\subsubsection{Half-duplex case}
For half-duplex communications, results also exist for the above channels, albeit fewer than in the full-duplex case.

Host-Madsen and Zhang study capacity bounds for the half-duplex relay channel based on time-division in \cite{HostVTC2002, HostTIT2005}
and give achievable rates for the Gaussian relay channel using partial decode-forward and compress-forward.
Zhang, Jiang, Goldsmith and Cui
\cite{Zhang2011TIT} study the half-duplex Gaussian relay channel with arbitrary correlated noises at the relay and destination.
They also evaluate the achievable rates using decode-forward, compress-forward and amplify-forward,
showing none of these schemes strictly outperforms the others.

Peng and Rajan \cite{Peng2009ISIT} study the half-duplex Gaussian interference channel and compute several inner and outer bounds for
transmitter or receiver cooperation. Transmitter cooperation uses decode-forward and divides the transmission into $3$ phases: $2$ broadcast phases and $1$ MIMO cooperative phase.
Wu, Prabhakaran and Viswanath
\cite{Wu2010ISIT} further study source cooperation for the half-duplex interference channel in the symmetric linear deterministic case and compute its sum rate.

For the half-duplex cognitive interference channel,
Devroye, Mitran and Tarokh
\cite{Devroye2006TIT} propose four protocols in which the secondary user obtains the message from the primary user causally.
Time-sharing these 4 protocols can achieve the Han-Kobayashi rate region but not the decode-forward relaying rate.
Chatterjee, Tong and Oyman
\cite{ChatterjeeISIT2010} further propose a new achievable rate region by a 2-phase scheme based on rate splitting, block Markov encoding, Gelfand-Pinsker binning and backward decoding. This scheme can only achieve the rate of decode-forward relaying, which is less than the partial decode-forward rate in the half-duplex mode.
We will discuss these two schemes in more details in Section \ref{sec:numerical_sections}.

\subsection{Summary of Main Results}
In this paper, we fully define the cognitive relay channel in both the full- and half-duplex modes and propose several coding schemes based on
partial decode-forward relaying, Gelfand-Pinsker binning and Han-Kobayashi coding.

\subsubsection{Full-duplex case}
The full-duplex cognitive relay channel is a four-node channel with two sender-receiver pairs $S_1$-$T_1$ and $S_2$-$T_2$, as in Figure \ref{fig:FD_and_HD_DM_CRC}(a).
$S_1$ and $S_2$ want to transmit messages to $T_1$ and $T_2$, respectively.
$S_2$ also serves as a relay by forwarding $S_1$'s message to $T_1$ while transmitting its own message to $T_2$.
Since $S_2$ can both relay and apply cognitive coding at the same time, this gives rise to the name Cognitive Relay Channel (CRC).

We propose two new coding schemes, in which the second scheme is built successively on top of the first one to illustrate the effect of each
technique used.
\begin{itemize}
\item
The first scheme is called \emph{partial decode-forward binning (PDF-binning)}, which utilizes rate splitting, block Markov encoding, partial decode-forward relaying, Gelfand-Pinsker binning and forward joint decoding across two blocks.
$S_1$ divides its message into two parts: one as a private message sent to $T_1$ directly, the other as a forwarding message, which is sent to $T_1$ with the help of $S_2$.
$S_2$ first causally decodes the forwarding message part from $S_1$, then uses the decoded codeword as the binning state.
In this case, however, the binning also allows $S_2$ to forward a part of the state to $T_1$, who then uses joint decoding across two blocks to decode its messages from both $S_1$ and $S_2$.
Different from state amplification in \cite{Kim2008TIT}, here we want to decode the state at a different receiver ($T_1$) from decoding the message ($T_2$).
This scheme achieves the partial decode-forward relaying rate for user 1 and Gelfand-Pinsker rate for user 2.
\item
The second scheme is called \emph{Han-Kobayashi PDF-binning (HK-PDF-binning)}, which combines PDF-binning with Han-Kobayashi coding by having both users further split their messages.
$S_1$ divides its message into three parts: one as the Han-Kobayashi (HK) private message decoded only at $T_1$, another as the HK public message decoded at both $T_1$ and $T_2$, and the final part as the forwarding message.
$S_2$ divides its message into two parts: one as the HK private message and the other as the HK public message.
There are three ideas additional to PDF-binning.
First, in performing partial decode-forward, $S_2$ uses conditional binning instead of traditional binning to bin only its private message part.
Second, although $T_1$ uses joint decoding in both schemes, the decoding rule here is relaxed as $T_1$ also decodes the public message from $S_2$ without requiring it to be correct.
Third, instead of simple Gelfand-Pinsker decoding, $T_2$ uses joint decoding of the binning auxiliary random variable and the HK public messages from the two senders which are encoded independently of the state. HK-PDF-binning achieves both the Han-Kobayashi and the PDF-binning rate regions.

\end{itemize}

\subsubsection{Half-duplex case}
For the half-duplex CRC, the transmission is divided into two phases as in Figure \ref{fig:FD_and_HD_DM_CRC} (b).
In the first phase, $S_1$ transmits to $S_2$, $T_1$ and $T_2$.
In the second phase, the two senders transmit messages simultaneously, during which $S_2$ can both relay and apply cognitive encoding.

We adapt the above two coding schemes to the half-duplex case.
The main challenges in adapting full-duplex schemes to the half-duplex mode include deciding which message parts should be sent in which
phase and changing the destination decoding rule to joint decoding across both phases.

Specifically, we propose two half-duplex (HD) schemes: HD-PDF-binning and HD-HK-PDF-binning.
At the end of the first phase in both schemes, $S_2$ decodes a message part from $S_1$ then applies PDF-binning,
but neither $T_1$ nor $T_2$ decode here. Both $T_1$ and $T_2$ only decode at the end of the second phase.
There are several differences from full-duplex coding.
First, not all message parts are sent in each phase.
Second, there is no need for block Markovity, instead, we superposition codewords in the two phases of the same block.
Third, we use joint decoding at the destinations over two phases of the same block instead of over two consecutive blocks.

\subsubsection{Applications to Gaussian channels}
When applied to the Gaussian channel, a major difference between PDF-binning and the traditional binning in dirty paper coding (DPC) \cite{Costa83TIT}
is that we introduce a correlation between the transmit signal and the state.
This correlation allows both binning and forwarding at the same time, thus helps improve the transmission rate for the first user
and still allows the second user to achieve the interference-free rate.
We derive the closed-form optimal binning parameter for each coding scheme.
This PDF-binning parameter contains the DPC-binning parameter as a special case.

Results show that the HK-PDF-binning scheme outperforms all existing schemes in both the full- and half-duplex modes
for the cognitive relay channel.
Our analysis also shows clearly the impact on rate region of each of the techniques used.
Furthermore, the maximum rate for the primary sender is the rate of partial decode-forward relaying and the maximum rate for the secondary sender is the interference-free rate as in dirty paper coding.

\section{CRC Channel Models}
\subsection{Full-duplex DM-CRC model}
The full-duplex cognitive relay channel consists of two input alphabet $\mathcal{X}_1, \mathcal{X}_2$,
and three output alphabets $\mathcal{Y}_1, \mathcal{Y}_2, \mathcal{Y}$. The channel is
characterized by a channel transition probability $p(y_1,y_2, y|x_1,x_2)$, where $x_1$ and $x_2$ are the
transmit signals of $S_1$ and $S_2$, $y_1$, $y_2$ and $y$ are the received signals of $T_1$, $T_2$ and $S_2$.
Figure \ref{fig:FD_and_HD_DM_CRC}(a) illustrates the channel model, where $W_1$ and $W_2$ are the messages of $S_1$ and $S_2$.
For notation, we use upper case letters to indicates
random variables and lower case letters to indicate their realizations. We use $x^n$ and $x_k^n$ to represent
the vectors ($x_1, \ldots, x_n$) and ($x_k, \ldots, x_n$) respectively.

The cognitive relay channel has tight relationships with
the interference and the relay channels. For example, this channel model can be converted
to the interference channel\cite{Carleial78} if $S_2$ does not
forward any information to $T_1$.
Similarly, this channel reduces to the relay channel \cite{Meulen71}, \cite{Cover79}
if $S_2$ does not have any message for $T_2$.

A ($2^{nR_1}$, $2^{nR_2}$, $n$) code, or a communication strategy for $n$ channel uses with rate pair ($R_{1}, R_2$), consists of the following:
\begin{itemize}
  \item         Two message sets $\mathcal{W}_{1}\times\mathcal{W}_{2}=[1, 2^{nR_1}]\times[1, 2^{nR_2}]$ and independent messages $W_{1}, W_2 $ uniformly distributed over $\mathcal{W}_{1}$ and $\mathcal{W}_{2}$, respectively.
  \item         Two encoders: one maps message $w_{1}$ into codeword $x_1^n(w_1) \in \mathcal{X}_1^n$, and one maps $w_2$ and each received sequence $y^{k-1}$ into a symbol $x_{2k}(w_2,y^{k-1}) \in \mathcal{X}_2$.
  \item         Two decoders:
            one maps $y_1^n$ into  $\hat{w}_{1} \in \mathcal{W}_{1}$;
            one maps $y_2^n$ into $\hat{w}_2 \in \mathcal{W}_2$.
\end{itemize}

The probability of error when the message pair ($W_{1}, W_2$) is sent is defined as $P_e(W_{1}, W_2) = P\{(\widehat{W}_{1}, \widehat{W}_2) \neq (W_{1}, W_2)\}$.
A rate pair ($R_{1}, R_2$) is said to be \emph{achievable} if, for any $\epsilon > 0 $, there exists a code such that the average error probability $P_e\leq \epsilon$ as $n\rightarrow\infty$.
The capacity region is the convex closure of the set of all achievable rate pairs.

\subsection{Half-duplex DM-CRC model}
The half-duplex cognitive relay channel also consists of four nodes: two senders $S_1$, $S_2$ and two receivers $T_1$, $T_2$.
$S_1$ wants to send a message to $T_1$. $S_2$ serves as a causal relay node and helps forward messages from $S_1$ to $T_1$, while also sending its own message to $T_2$.
The transmission in the half-duplex mode is divided into two phases. In the first phase, $S_1$ transmits its message and $S_2$, $T_1$ and $T_2$ listen.
In the second phase, both $S_1$ and $S_2$ transmit and $T_1$ and $T_2$ listen.
This 2-phase transmission allows, for example, $S_2$ to decode a part of the message from $S_1$ in the first phase and then forwards this part with its own message in the second phase.

Formally, the half-duplex cognitive relay channel consists of three input alphabet $\mathcal{X}_{11}$, $\mathcal{X}_{12}$, $\mathcal{X}_{22}$,
and five output alphabets $\mathcal{Y}_{11}$, $\mathcal{Y}_{21}$, $\mathcal{Y}$, $\mathcal{Y}_{12}$, $\mathcal{Y}_{22}$.
The channel is
characterized by a channel transition probability \\$p_c(y_{11},y_{21}, y, y_{12},y_{22},|x_{11},x_{12},x_{22})$ defined as
\begin{align}\label{eq:p_c}
p_c(y_{11},y_{21}, y, y_{12},y_{22},|x_{11},x_{12},x_{22})=
\begin{cases}
                   p(y_{11},y_{21},y|x_{11}) &\quad \text{if   $0 \leq t \leq \tau$}, \\
                   p(y_{12},y_{22}|x_{12},x_{22}) &\quad \text{if   $\tau \leq t \leq 1$},
\end{cases}
\end{align}
where $t$ is the normalized transmission time within $1$ block, $x_{11}$ and $x_{21}$ refer to the
transmit signals of $S_1$ in the first and second phases, respectively;
$x_{22}$ refers to the transmit signal of $S_2$ in the second phase ($S_2$ does not send any signal in the first phase);
$y_{11}$ and $y_{12}$ are the received signals of $T_1$ in the first and second phases;
$y_{21}$ and $y_{22}$ are the received signals of $T_2$ in the two phases;
and $y$ is the received signal of $S_2$ in the first phase.
We assume the channel is memoryless.
Figure \ref{fig:FD_and_HD_DM_CRC}(b) illustrates the channel model, where $W_1$ and $W_2$ are the messages of $S_1$ and $S_2$.
We use the notation $x^{\tau n} = (x_1, x_2, \cdots, x_{\tau n})$ and $x^{\bar{\tau}n} = (x_{\tau n+1}, \cdots, x_n )$,
which correspond to the codewords sent during the first and second phases.

A ($2^{nR_1}$, $2^{nR_2}$, $n$) code, or a communication strategy for $n$ channel uses with rate pair ($R_{1}, R_2$), consists of the following:
\begin{itemize}
  \item         Two message sets $\mathcal{W}_{1}\times\mathcal{W}_{2}=[1, 2^{nR_1}]\times[1, 2^{nR_2}]$ and independent messages $W_{1}, W_2 $ that are uniformly distributed over $\mathcal{W}_{1}$
and $\mathcal{W}_{2}$.
  \item         Three encoders: two that map message $w_{1}$ into  codewords
$x_{11}^n(w_1) \in \mathcal{X}_{11}^n$ and $x_{12}^n(w_1) \in \mathcal{X}_{12}^n$,
and one that maps $w_2$ and $y^{\tau n}$ into a codeword $x_{22}^n(w_2,y^{\tau n}) \in \mathcal{X}_{22}^n$.
  \item         Two decoders:
            One maps $y_1^n$ into  $\hat{w}_{1} \in \mathcal{W}_{1}$;
            and one maps $y_2^n$ into $\hat{w}_2 \in \mathcal{W}_2$.
\end{itemize}

The probability of error, achievable rate and capacity region are defined in a similar way to the full-duplex case.
\section{Full-duplex partial decode-forward binning schemes}
\subsection{PDF-binning scheme}\label{sec:FD_DMC_Bin}
The first scheme uses block Markov superposition encoding at $S_1$ and partial decode-forward relaying and Gelfand-Pinsker binning at $S_2$.
$T_1$ uses joint decoding across two blocks while $T_2$ uses normal Gelfand-Pinsker decoding.
The first sender $S_1$ splits its message $w_1$ into two parts ($w_{10}, w_{11}$), which correspond to the common (forwarding) and private parts.
We use block Markov encoding at $S_1$, such that the current-block common message $w_{10}$ is superimposed on the previous-block common message $w_{10}'$.
Then, message $w_{11}$ is superimposed on both $w_{10}'$ and $w_{10}$.
The second sender $S_2$ decodes the previous common message $w_{10}'$ from the first sender $S_1$ then uses binning to bin against the codeword for this message part.
Depending on the joint distribution between the binning auxiliary random variable and the state that $S_2$ can also forward a part of the state (i.e. message $w_{10}'$) to $T_1$.
The encoding and decoding structure can be seen in Figure \ref{fig:FD_DM_PDF_Binning}, in which $w_{10}'$ corresponds to $w_{10[i-1]}$.

\begin{figure}[t]
\centering
\includegraphics[scale=0.55]{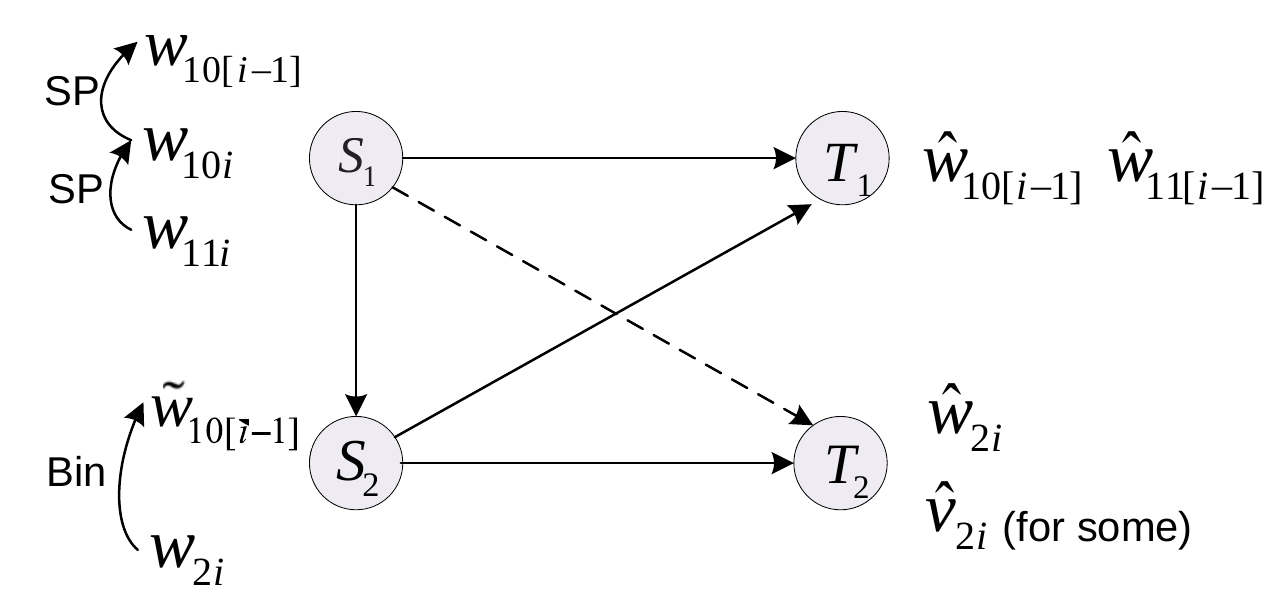}
\caption{Coding structure for the full-duplex PDF-binning scheme at block i. (SP stands for superposition) }
\label{fig:FD_DM_PDF_Binning}
\end{figure}
\begin{thm}
\label{thm:FD_DMC_Bin}
The convex hull of the following rate region is achievable for the full-duplex cognitive relay channel using PDF-binning:
\begin{align}\label{eq:FD_DMC_Bin}
\bigcup_{\substack{P_{1}}}
\left\{\begin{array}{ll}
                   R_{1} &\leq I(U_{10}; Y|T_{10})+ I(X_1;Y_1|U_{10},T_{10})\\
                   R_{1} &\leq I(T_{10}, U_{10} ,  X_1; Y_1)\\
                   R_2   &\leq I(U_2;Y_2) - I( U_2; T_{10})
\end{array}\right.
\end{align}
where
\begin{align*}
    P_{1} = p(t_{10})p(u_{10}|t_{10})p(x_1|t_{10}, u_{10})p(u_2|t_{10}) p(x_2|t_{10},u_2)p(y_1,y_2,y|x_1,x_2).
\end{align*}
\end{thm}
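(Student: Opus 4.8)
The plan is to establish achievability by a random-coding argument over $B$ transmission blocks, combining block-Markov superposition encoding at $S_1$, partial-decode-forward relaying with Gelfand--Pinsker binning at $S_2$, forward joint typicality decoding across two blocks at $T_1$, and ordinary Gelfand--Pinsker decoding at $T_2$. I first split $S_1$'s rate as $R_1=R_{10}+R_{11}$ into a common (forwarding) rate and a private rate. For the codebook I generate cloud centers $t_{10}^n(m)$ i.i.d.\ $\sim p(t_{10})$, superimpose $u_{10}^n(m'|m)\sim p(u_{10}|t_{10})$, and then $x_1^n(m''|m,m')\sim p(x_1|t_{10},u_{10})$; for $S_2$ I generate $2^{n\tilde R_2}$ sequences $u_2^n(l)\sim p(u_2)$, where $p(u_2)$ is the marginal induced by $p(t_{10})p(u_2|t_{10})$, partition them into $2^{nR_2}$ bins, and draw $x_2^n\sim p(x_2|t_{10},u_2)$. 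Here $T_{10}$ plays the role of the Gelfand--Pinsker state that $S_2$ bins against.

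In block $i$, $S_1$ transmits using cloud center $t_{10}^n(w_{10[i-1]})$, layer $u_{10}^n(w_{10[i]}|w_{10[i-1]})$ and top layer $x_1^n(w_{11[i]}|\cdot)$, so the previous common message is carried as the cloud center. Having decoded $w_{10[i-1]}$ in the previous block, $S_2$ knows $t_{10}^n(w_{10[i-1]})$; it searches bin $w_{2[i]}$ for a $u_2^n$ jointly typical with this state and transmits the corresponding $x_2^n$, thereby forwarding the previous common message through the $t_{10}$-component of $x_2$ while simultaneously binning its own message against it. The covering step succeeds with high probability provided $\tilde R_2-R_2 > I(U_2;T_{10})$.

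I then bound the probability of error event by event. At $S_2$, decoding $w_{10[i]}$ from $y[i]$ given the known $t_{10}^n(w_{10[i-1]})$ gives the relay constraint $R_{10}\le I(U_{10};Y\mid T_{10})$. At $T_2$, decoding $u_2^n$ from $y_2^n$ requires $\tilde R_2 < I(U_2;Y_2)$, which together with the covering condition eliminates $\tilde R_2$ to yield $R_2\le I(U_2;Y_2)-I(U_2;T_{10})$. The crux is $T_1$'s two-block decoder, which recovers $(w_{10[i]},w_{11[i]})$ from $y_1^n[i]$ and $y_1^n[i+1]$ assuming $w_{10[i-1]}$ already decoded: the common index $w_{10[i]}$ appears directly as $u_{10}^n$ in block $i$ and is re-forwarded as the cloud center $t_{10}^n(w_{10[i]})$ by both senders in block $i+1$. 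The error event with only the private index wrong is confined to block $i$ and yields $R_{11}<I(X_1;Y_1\mid U_{10},T_{10})$; the event with the common index wrong multiplies the block-$i$ term $2^{-nI(U_{10},X_1;Y_1\mid T_{10})}$ with the block-$(i{+}1)$ forwarding term $2^{-nI(T_{10};Y_1)}$, and the chain rule collapses the exponent sum $I(U_{10},X_1;Y_1\mid T_{10})+I(T_{10};Y_1)$ into $I(T_{10},U_{10},X_1;Y_1)$.

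Finally I collect $R_{10}\le I(U_{10};Y\mid T_{10})$, $R_{11}\le I(X_1;Y_1\mid U_{10},T_{10})$ and $R_{10}+R_{11}\le I(T_{10},U_{10},X_1;Y_1)$ and eliminate $R_{10},R_{11}$ by Fourier--Motzkin: summing the first two reproduces $R_1\le I(U_{10};Y\mid T_{10})+I(X_1;Y_1\mid U_{10},T_{10})$, while the third is already $R_1\le I(T_{10},U_{10},X_1;Y_1)$, matching \eqref{eq:FD_DMC_Bin}; the $R_2$ bound carries over unchanged, taking $B\to\infty$ renders the block-Markov initialization loss negligible, and the convex hull follows by the usual time-sharing. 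I expect the main obstacle to be the $T_1$ analysis: carefully accounting for the common index living in two consecutive blocks (once as $U_{10}$, once as the forwarded $T_{10}$) and verifying that the binning signal $X_2$, correlated with $T_{10}$ through $p(x_2|t_{10},u_2)$, adds coherently into the forwarding term $I(T_{10};Y_1)$ rather than as pure interference, so that the sliding-window error exponents combine into exactly the two stated bounds and no extra constraint appears.
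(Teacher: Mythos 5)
Your proposal is correct and follows essentially the same route as the paper: identical rate split and block-Markov superposition codebook, the same Gelfand--Pinsker covering step at $S_2$ (your bin-partition with $\tilde R_2$ is just the paper's double index $(w_2,v_2)$ with $\tilde R_2 = R_2+R_2'$), the same two-block sliding-window decoder at $T_1$ whose wrong-common-index event factors into $2^{-nI(U_{10},X_1;Y_1|T_{10})}\cdot 2^{-nI(T_{10};Y_1)}$ and collapses by the chain rule, and the same Fourier--Motzkin elimination. The coherence concern you flag at the end is handled automatically by the joint typicality with respect to $P_1$, exactly as in the paper.
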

\begin{rem}The maximum rate for each user.
    \begin{itemize}
    \item
    The first user $S_1$ achieves the maximum rate of partial decode-forward relaying if we set $U_2=\emptyset$, $X_2=T_{10}$.
    \begin{align}\label{eq:FD_DMC_Bin_maxR1}
        &R_{1}^{\max} =\max_{\substack{p(u_{10},x_2)p(x_1|u_{10},x_2)}} \min 
        \{I(U_{10};Y|X_2)+I(X_1;Y_1|U_{10},X_2),
                         I(X_1,X_2;Y_1)\}
    \end{align}
In this case, there is no binning but only forwarding at $S_2$.
    \item
    The second user $S_2$ achieves the maximum rate of Gelfand-Pinsker's binning if we set $T_{10}=U_{10}=X_1$.
    \begin{align}\label{eq:FD_DMC_Bin_maxR2}
        R_{2}^{\max} = \max_{\substack{p(x_1,u_2)p(x_2|x_1,u_2)}} \{I(U_2;Y_2) - I(U_2;X_1)\}
    \end{align}
In this case, there is no forwarding of the state at $S_2$.
    \end{itemize}
\end{rem}
\begin{proof}
The transmission is done in $B$ blocks, each consists of $n$ channel uses.
$S_1$ splits each message $w_1$ into two independent parts $(w_{10}, w_{11})$.
During the first $B-1$ blocks, $S_1$ encodes and sends a message tuple ($w_{10[i-1]}, w_{10i},w_{11i}) \in [1,2^{nR_{10}}]\times [1,2^{nR_{10}}]\times [1,2^{nR_{11}}]$;
$S_2$ encodes and sends message ($w_{10[i-1]},w_{2i}) \in [1,2^{nR_{10}}]\times[1,2^{nR_{2}}]$, where $i=1,2,\ldots,B-1$ denotes the block index.
When $B\rightarrow \infty$,
the average rate triple $\left(R_{10}\frac{B-1}{B},R_{11}\frac{B-1}{B},R_2\frac{B-1}{B}\right)$ approaches to ($R_{10},R_{11},R_2$).

We use random codes and fix a joint probability distribution
\begin{align*}
    p(t_{10})p(u_{10}|t_{10})p(x_1|t_{10}, u_{10})p(u_2|t_{10})p(x_2|t_{10},u_2).
\end{align*}

\subsubsection{Codebook generation}
For each block i (We can also just generate two independent codebooks for the odd and even blocks to make the error events of two consecutive blocks independent \cite{Liang07a}.):
\begin{itemize}
  \item
      Independently generate $2^{nR_{10}}$ sequences $t_{10}^n \sim \prod_{k=1}^{n}p(t_{10k})$.
      Index these codewords as  $t_{10}^n(w_{10}')$, $w_{10}'\in [1,2^{nR_{10}}]$.

  \item
      For each $ t_{10}^n( w _{10} ' )$, independently generate $2^{nR_{10}}$ sequences $u_{10}^n \sim \prod_{k=1}^{n}p(u_{10k}|t_{10k})$.
      Index these codewords as $ u_{10}^n( w_{10}|w _{10} ')$, $w_{10} \in [1,2^{nR_{10}}]$.
      $w_{10}$ contains the common message of the current block, while $w_{10}'$ contains the common message of the previous block.

  \item
      For each $ t_{10}^n( w _{10} ' )$ and $ u_{10}^n( w_{10}|w _{10} ')$, independently generate $2^{nR_{11}}$ sequences
      $x_1^n \sim \prod_{k=1}^{n}p(x_{1k}|t_{10k}, u_{10k}$).
       Index these codewords as $x_1^n(w_{11},w_{10}|w_{10}')$, $w_{11} \in [1,2^{nR_{11}}]$, $w_{10} \in [1,2^{nR_{10}}]$.

  \item
      Independently generate $2^{n(R_{2}+ R_{2}')}$ sequences $u_2^n \sim \prod_{k=1}^{n}p(u_{2k})$.
      Index these codewords as  $ u_2^n( w_{2}, v_{2})$, $w_{2} \in [1,2^{nR_{2}}]$
      and  $v_{2} \in [1,2^{nR_{2}'}]$.
  \item For each $t_{10}(w _{10} ')$
      and $ u_2^n( w _{2}, v_{2})$, generate one $x_2^n \sim \prod_{k=1}^{n}p(x_{2k}|t_{10k}, u_{2k})$.
      Denote $x_2^n$ by       $x_2^n(w_{10} ', w_{2}, v_2)$.
\end{itemize}

\subsubsection{Encoding}
At the beginning of block $i$, let ($w_{10i},w_{11i},w_{2i}$) be the new messages to be sent in block $i$, and
($w_{10[i-1]},w_{11[i-1]},w_{2[i-1]}$) be the messages sent in block $i-1$.
\begin{itemize}
  \item
      $S_1$ knows $w_{10[i-1]}$, in order to send ($w_{10i},w_{11i}$), $S_1$ transmits $x_1^n(w_{11i},w_{10i}|w_{10[i-1]})$.
  \item
      $S_2$ searches for a $v_{2i}$ such that
            \begin{align}
                (t_{10}^n{(w_{10[i-1]})}, u_2^n( w_{2i}, v_{2i}) ) \nonumber
                \in A_\epsilon^{(n)}(P_{T_{10} U_2}).
            \end{align}
      Such a $v_{2i}$ exists with high probability if
            \begin{align}\label{eq:FD_DMC_Bin_begin}
                R_{2}' \geq I( U_2; T_{10}).
            \end{align}
      $S_2$ then transmits $x_2^n(w_{10[i-1]}, w_{2i}, v_{2i})$   .
\end{itemize}
\subsubsection{Decoding}
At the end of block $i$:
\begin{itemize}
\item
$S_2$ knows $w_{10[i-1]}$ and declares message $\hat{w}_{10i}$ was sent if it is the unique message
such that
\begin{align*}
    (t_{10}^n{(w_{10[i-1]})}, u_{10}^n( \hat{w}_{10i} | w_{10[i-1]}),
    y^n(i))
    \in A_\epsilon^{(n)}(P_{T_{10} U_{10} Y}),
\end{align*}
where $y^n(i)$ indicates the received signal at $S_2$ in block $i$. We can show that the decoding error probability goes to 0 as $n\rightarrow \infty$ if
\begin{align}
    R_{10}  &\leq I(U_{10}; Y|T_{10}).
\end{align}
\item
$T_1$ knows $w_{10[i-2]}$ and decodes ($w_{10[i-1]},w_{11[i-1]}$) based on the signals received at block $i-1$ and block $i$.
It declares that message pair ($\hat{w}_{10[i-1]}, \hat{w}_{11[i-1]}$) was sent if it is the unique pair such that
\begin{align*}
    ( t_{10}^n{(w_{10[i-2]})}, u_{10}^n(\hat{w}_{10[i-1]} | w_{10[i-2]}),x_1^n(\hat{w}_{11[i-1]},\hat{w}_{10[i-1]}|w_{10[i-2]}),
        y_1^n(i-1))&\in A_\epsilon^{(n)}(P_{T_{10} U_{10} X_1 Y_1})\\
     \text{and}  \quad  (t_{10}^n{(\hat{w}_{10[i-1]})},  y_1^n(i)) &\in A_\epsilon^{(n)}(P_{T_{10} Y_1}).
\end{align*}
The decoding error probability goes to 0 as $n\rightarrow \infty$ if
\begin{align}
             R_{11} &\leq I(X_1;Y_1|U_{10},T_{10}) \nonumber \\
    R_{10} + R_{11} &\leq I(T_{10}, U_{10} ,  X_1; Y_1).
\end{align}
\item
$T_2$ treats $T_{10}$, a part of the signal from $S_1$, as the state and decodes $w_{2i}$ based on the signal received at block $i$.
Specifically, $T_2$ decodes $w_{2i}$ directly using joint typicality between $u_2$ and $y_2$.
It declares that message $\hat{w}_{2i}$ was sent if it is unique such that
\begin{align*}
    (u_2^n(\hat{w}_{2i} ,\hat{v}_{2i}  ), y_2^n(i)) \in   A_\epsilon^{(n)}(P_{U_2 Y_2})
\end{align*}
for some $\hat{v}_{2i}$.
The decoding error probability goes to 0 as $n\rightarrow \infty$ if
\begin{align} \label{eq:FD_DMC_Bin_end}
             R_2 + R_2'  &\leq I(U_2;Y_2).
\end{align}
\end{itemize}
Let $R_1 = R_{10} + R_{11}$, apply Fourier-Motzkin Elimination \cite{Ziegler95} on constraints \eqref{eq:FD_DMC_Bin_begin}-\eqref{eq:FD_DMC_Bin_end}, we get the rate region in \eqref{eq:FD_DMC_Bin}.
\end{proof}

\begin{rem}
While the idea of the basic PDF-binning scheme is straightforward, this scheme allows the understanding of binning to achieve the maximum rates of
partial decode-forward relaying at user 1 as in \eqref{eq:FD_DMC_Bin_maxR1} and Gelfand-Pinsker coding at user 2 as in \eqref{eq:FD_DMC_Bin_maxR2}.
The importance magnifies in the Gaussian application in Section \ref{sec:FD_Gaussian}. This scheme helps build the base for more complicated schemes later.
\end{rem}

\subsection{Han-Kobayashi PDF-binning scheme}\label{sec:FD_DMC_Combined}
\begin{figure}[t]
\centering
\includegraphics[scale=0.55]{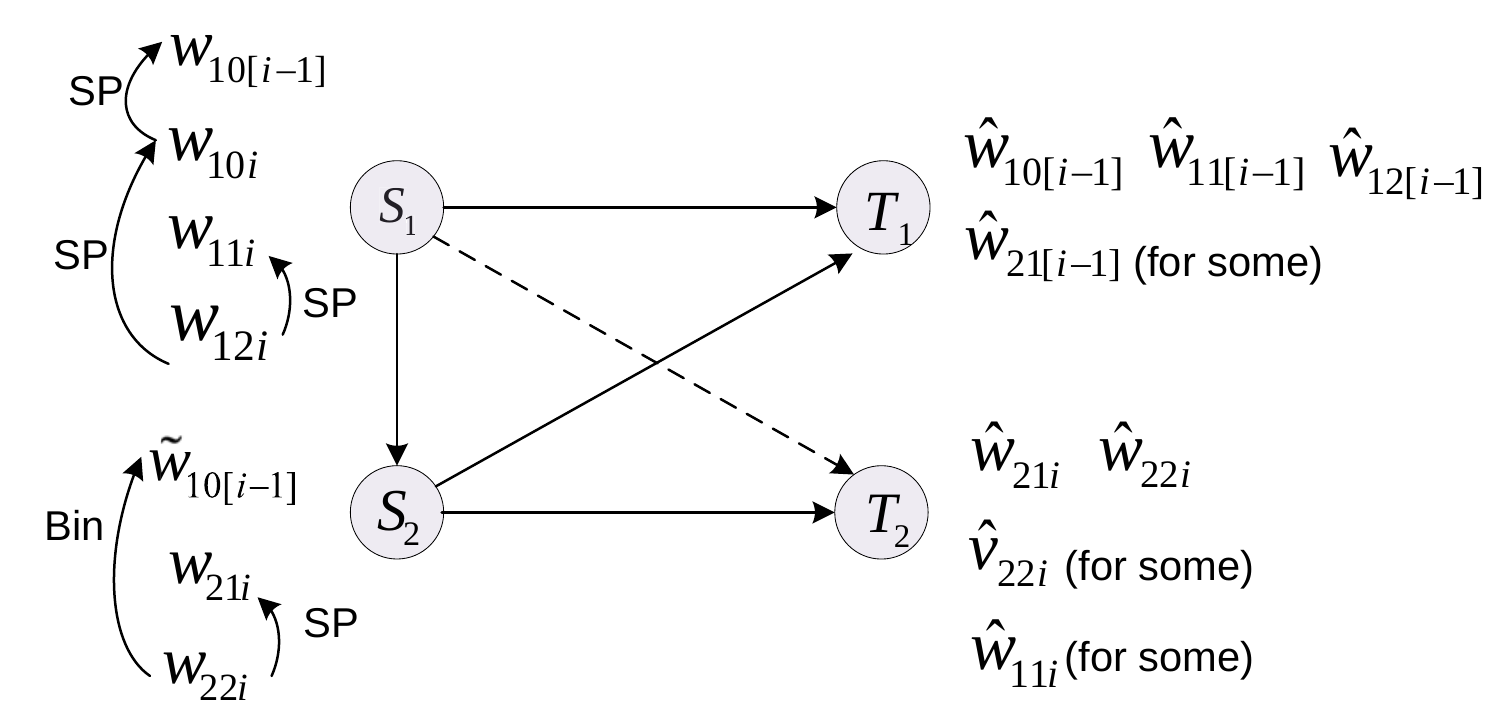}
\caption{Coding structure for the full-duplex Han-Kobayashi PDF-binning scheme at block i. }
\label{fig:FD_DM_Combined}
\end{figure}

Figure \ref{fig:FD_DM_Combined} illustrates the idea of the full-duplex Han-Kobayashi PDF-binning scheme.
Built upon PDF-binning, each user further splits its message to incorporate Han-Kobayashi coding.
Message $w_1$ is split into three parts: $w_{10}, w_{11}, w_{12}$, corresponding to the common (forwarding), public and private parts,
and message $w_2$ is split into two parts: $w_{21}, w_{22}$, corresponding to the public and private parts.
Take the transmission in block $i$ as an example.
At $S_1$, the current common message $w_{10i}$ is superimposed on the previous commons message $w_{10[i-1]}$;
message $w_{11i}$ is encoded independently of both $w_{10[i-1]}$ and $w_{10i}$;
message  $w_{12i}$ is then superimposed on all three messages $w_{10[i-1]}$, $w_{10i}$ and $w_{10i}$.
$S_2$ decodes $\tilde{w}_{10[i-1]}$ of the previous block and uses conditional binning to bin its private part $w_{22[i]}$ against $\tilde{w}_{10[i-1]}$, conditionally on knowing the public part $w_{21[i]}$.
At the end of block $i$, $T_1$ uses joint decoding over two blocks to decode a unique tuple ($\hat{w}_{10[i-1]},\hat{w}_{11[i-1]},\hat{w}_{12[i-1]})$ for some $\hat{w}_{21[i-1]}$
without requiring this message part to be correct.
$T_2$ treats the codeword for $w_{10[i-1]}$ as the state and searches for a unique pair $(w_{21i}, w_{22i})$ for some $w_{11i}$.
The detailed coding and decoding procedures are shown in the proof of Theorem \ref{thm:FD_DMC_combined} below.

\begin{thm}
\label{thm:FD_DMC_combined}
The convex hull of the following rate region is achievable for the cognitive relay channel using HK-PDF-binning:
\begin{align}\label{eq:FD_DMC_combined}
\bigcup_{\substack{P_{2}}}
\left\{\begin{array}{ll}
                             R_{1} &\leq \min\{I_2 + I_5, I_6\}\\
                             R_{2} &\leq I_{12}- I_1 \\
                     R_{1} + R_{2} &\leq \min\{I_2 + I_7,I_8 \} + I_{13}- I_1 \\
                     R_{1} + R_{2} &\leq \min\{I_2 + I_3,I_4\} + I_{14}- I_1 \\
                     R_{1} + R_{2} &\leq \min\{I_2 + I_9, I_{10} \} + I_{11} - I_1 \\
                    2R_{1} + R_{2} &\leq \min\{ I_2 + I_3,I_4\}  
                                   +\min\{I_2 + I_9, I_{10} \} + I_{13}- I_1\\
                    R_{1} + 2R_{2} &\leq \min\{I_2 + I_7, I_8 \}+ I_{11} - I_1 
                                   + I_{14}- I_1
\end{array}\right.
\end{align}
where
\begin{align}
    P_{2} = &p(t_{10})p(u_{10}|t_{10})p(u_{11})p(x_1|t_{10}, u_{10}, u_{11})p(u_{21})
    p(u_{22}|u_{21}, t_{10})p(x_2|t_{10},u_{21},u_{22})p(y_1,y_2,y|x_1,x_2),
\end{align}
and $I_1$ --- $I_{14}$ are defined as
\begin{align}
             I_1 &= I( U_{22}; T_{10}|U_{21}) \nonumber \\
             I_2 &= I(U_{10}; Y|T_{10}) \nonumber \\
             I_3 &= I(X_1;Y_1|T_{10},U_{10}, U_{11}, U_{21}) \nonumber\\
             I_4 &= I(U_{10}, X_1; Y_1|T_{10}, U_{11}, U_{21}) + I(T_{10};Y_1) \nonumber \\
             I_5 &= I(U_{11}, X_1; Y_1|T_{10}, U_{10}, U_{21}) \nonumber \\
             I_6 &= I(U_{10}, U_{11}, X_1; Y_1|T_{10}, U_{21})+ I(T_{10};Y_1) \nonumber \\
             I_7 &= I(X_1, U_{21};Y_1|T_{10},U_{10}, U_{11}) \nonumber \\
             I_8 &=  I(U_{10}, X_1, U_{21}; Y_1|T_{10}, U_{11}) + I(T_{10};Y_1) \nonumber \\
             I_9 &=  I(U_{11}, X_1, U_{21}; Y_1|T_{10}, U_{10}) \nonumber \\
             I_{10} &=  I(T_{10}, U_{10}, U_{11}, X_1, U_{21}; Y_1) \nonumber\\
             I_{11} &=  I(U_{22};Y_2|U_{21},U_{11}) \nonumber \\
             I_{12} &=  I(U_{21},U_{22};Y_2|U_{11})\nonumber \\
             I_{13} &=  I(U_{11},U_{22};Y_2|U_{21})\nonumber \\
             I_{14} &=  I(U_{11},U_{21},U_{22};Y_2).
\end{align}
\end{thm}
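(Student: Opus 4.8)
The plan is to extend the block-Markov achievability argument of Theorem~\ref{thm:FD_DMC_Bin} by splitting each message one step further and layering Han-Kobayashi superposition and conditional binning onto the PDF-binning structure. I would transmit over $B$ blocks, splitting $w_1$ into $(w_{10},w_{11},w_{12})$ at rates $(R_{10},R_{11},R_{12})$ with $R_1=R_{10}+R_{11}+R_{12}$, and $w_2$ into $(w_{21},w_{22})$ at rates $(R_{21},R_{22})$ with $R_2=R_{21}+R_{22}$. The codebook would follow $P_2$: generate $T_{10}$ indexed by the previous-block common message $w_{10}'$; superimpose $U_{10}$ (indexed by $w_{10}$) on $T_{10}$; generate $U_{11}$ independently (indexed by $w_{11}$); superimpose $X_1$ (indexed by $w_{12}$) on $(T_{10},U_{10},U_{11})$; generate $U_{21}$ independently (indexed by $w_{21}$); and for the private part generate a binned codebook of $2^{n(R_{22}+R_{22}')}$ sequences $U_{22}$ conditioned on $U_{21}$, with $X_2$ superimposed on $(T_{10},U_{21},U_{22})$.

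The encoding at $S_2$ is the crux of the conditional-binning step. After decoding $w_{10[i-1]}$ from its block-$(i-1)$ observation---which requires $R_{10}\le I_2$ by the covering/packing argument of Theorem~\ref{thm:FD_DMC_Bin}---$S_2$ searches within bin $w_{22i}$ for an index $v_{22i}$ making $(t_{10}^n(w_{10[i-1]}),u_{21}^n(w_{21i}),u_{22}^n(w_{22i},v_{22i}))$ jointly typical; such an index exists with high probability provided $R_{22}'\ge I(U_{22};T_{10}|U_{21})=I_1$. This is the only place where the state $T_{10}$ enters $S_2$'s codeword, and it is what produces the $-I_1$ penalty on every one of user~2's bounds.

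The main work lies in the two destinations' error analyses under relaxed joint decoding. At $T_1$ I would use joint typicality across blocks $i-1$ and $i$ to decode $(w_{10},w_{11},w_{12})$ for some $w_{21}$, not requiring $w_{21}$ to be correct; partitioning the error event by which of $w_{10},w_{11},w_{12},w_{21}$ differ from the truth---an error arises only when at least one of the first three is wrong---yields exactly eight constraints on $(R_{10},R_{11},R_{12},R_{21})$ governed by $I_3$ through $I_{10}$. The block-Markov coupling accounts for the extra $I(T_{10};Y_1)$ summand in $I_4,I_6,I_8,I_{10}$: whenever $w_{10[i-1]}$ is in error the codeword $t_{10}^n(w_{10[i-1]})$ transmitted in block $i$ is also wrong, so the second-block check $(t_{10}^n(\hat w_{10[i-1]}),y_1^n(i))$ contributes that term, which is exactly why these four quantities are precisely the ones in which $U_{10}$ appears. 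Symmetrically, $T_2$ treats $T_{10}$ as known state and decodes $(w_{21},w_{22})$ for some $w_{11}$; since $U_{22}$ is generated conditioned on $U_{21}$, a wrong $w_{21}$ forces a wrong $U_{22}$, and partitioning by which of $U_{21},U_{22},U_{11}$ differ gives four constraints governed by $I_{11}$ through $I_{14}$, each carrying $R_{22}'$ whenever $w_{22}$ changes, i.e.\ $R_{22}+R_{22}'\le I_{11}$, $R_{21}+R_{22}+R_{22}'\le I_{12}$, $R_{11}+R_{22}+R_{22}'\le I_{13}$, and $R_{11}+R_{21}+R_{22}+R_{22}'\le I_{14}$.

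Collecting all inequalities---$R_{10}\le I_2$, the eight $T_1$ bounds, and the four $T_2$ bounds with $R_{22}'=I_1$ substituted at equality---I would apply Fourier-Motzkin elimination \cite{Ziegler95} to project out $R_{10},R_{11},R_{12},R_{21},R_{22},R_{22}'$ and obtain \eqref{eq:FD_DMC_combined}. The $\min\{\cdot,\cdot\}$ terms arise because $R_{10}$ is simultaneously bounded by $I_2$ and by several $T_1$ inequalities, so regrouping $R_{10}+(R_{11}+R_{12})$ produces an $I_2+I_k$ alternative alongside each direct bound; for instance $R_1\le I_6$ and $R_1\le I_2+I_5$ give the first line, while adding the $I_8$ and $I_{13}$ bounds yields $R_1+R_2\le I_8+I_{13}-I_1$, and the weighted-sum lines come from summing two $T_1$ bounds with one $T_2$ bound (e.g.\ $I_4+I_{10}+I_{13}$ for $2R_1+R_2$). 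I expect the elimination itself---tracking which combinations of the eight $T_1$ and four $T_2$ inequalities survive---to be the most laborious step, though it is mechanical; the conceptually delicate part is arranging the relaxed decoding so that an erroneous $w_{21}$ at $T_1$ and an erroneous $w_{11}$ at $T_2$ do not by themselves trigger a decoding failure, which is precisely what removes the otherwise-present constraints and enlarges the region.
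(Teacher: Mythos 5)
Your proposal is correct and follows essentially the same route as the paper's own proof: identical codebook structure under $P_2$, the same conditional-binning step at $S_2$ with $R_{22}'\ge I_1$, the same relaxed joint decoding at $T_1$ (across two blocks, with $w_{21}$ not required correct) and at $T_2$ (with $w_{11}$ not required correct), yielding the same thirteen intermediate constraints, followed by Fourier--Motzkin elimination. Your accounting of the error events --- the eight $T_1$ bounds with the $I(T_{10};Y_1)$ contribution appearing exactly when $w_{10}$ is in error, and the four $T_2$ bounds each carrying $R_{22}'$ --- matches the paper's analysis.
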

\begin{rem}\label{rem:FD_DMC_Combined_Inclusions}
Inclusion of PDF-binning and Han-Kobayashi schemes.
    \begin{itemize}
    \item
    The HK-PDF-binning scheme becomes PDF-binning if $U_{11}=U_{21}=\emptyset$.
    \item
    The HK-PDF-binning scheme becomes the Han-Kobayashi scheme if $T_{10}=U_{10}=\emptyset$ and $X_{2} = U_{22}$.
    \item
    The maximum rates for $S_1$ and $S_2$ are the same as in the PDF-binning scheme in \eqref{eq:FD_DMC_Bin_maxR1} and \eqref{eq:FD_DMC_Bin_maxR2}.
    \end{itemize}
\end{rem}
\begin{proof}
We use random codes and fix a joint probability distribution
\begin{align*}
    p(t_{10})p(u_{10}|t_{10})p(u_{11})p(x_1|t_{10}, u_{10}, u_{11})
    p(u_{21})p(u_{22}|u_{21}, t_{10})p(x_2|t_{10},u_{21},u_{22}).
\end{align*}

\subsubsection{Codebook generation}
For each block i (or for odd and even blocks):
\begin{itemize}
  \item
      Independently generate $2^{nR_{10}}$ sequences $t_{10}^n \sim \prod_{k=1}^{n}p(t_{10k})$.
      Index these codewords as  $t_{10}^n(w_{10}')$, $w_{10}'\in [1,2^{nR_{10}}]$.

  \item
      For each $t_{10}^n(w_{10}')$, independently generate $2^{nR_{10}}$ sequences $u_{10}^n \sim \prod_{k=1}^{n}p(u_{10k}|t_{10k})$.
      Index these codewords as $ u_{10}^n( w_{10}|w_{10} ')$, $w_{10} \in [1,2^{nR_{10}}]$.
      $w_{10}$ is the common message of the current block, while $w_{10}'$ is the common message of the previous block.
  \item
      Independently generate $2^{nR_{11}}$ sequences $u_{11}^n \sim \prod_{k=1}^{n}p(u_{11k})$.
      Index these codewords as  $u_{11}^n(w_{11})$, $w_{11}\in [1,2^{nR_{11}}]$.

  \item
      For each $t_{10}^n(w_{10}')$, $u_{10}^n(w_{10}|w_{10}')$ and $u_{11}^n(w_{11})$, independently generate $2^{nR_{12}}$ sequences\\
      $x_1^n \sim \prod_{k=1}^{n}p(x_{1k}|t_{10k}, u_{10k}, u_{11k}$).
       Index these codewords as $x_1^n(w_{12}|w_{11},w_{10},w_{10}')$, $w_{12} \in [1,2^{nR_{12}}]$.
  \item
      Independently generate $2^{nR_{21}}$ sequences $u_{21}^n \sim \prod_{k=1}^{n}p(u_{21k})$.
      Index these codewords as  $u_{21}^n(w_{21})$, $w_{21}\in [1,2^{nR_{21}}]$.

  \item
      For each $u_{21}^n(w_{21})$, independently generate $2^{n(R_{22}+ R_{22}')}$ sequences $u_{22}^n \sim \prod_{k=1}^{n}p(u_{22k}|u_{21k})$. Index these codewords as $u_{22}^n(w_{22}, v_{22}|w_{21})$, $w_{22} \in [1,2^{nR_{22}}]$
      and  $v_{22} \in [1,2^{nR_{22}'}]$.
  \item For each $t_{10}(w_{10}')$, $u_{21}^n(w_{21})$ and $ u_{22}^n(w_{22}, v_{22}|u_{21})$, generate one $x_2^n \sim \prod_{k=1}^{n}p(x_{2k}|t_{10k}, u_{21i}, u_{22i})$. Denote $x_2^n$ by $x_2^n(w_{10}', w_{21}, w_{22}, v_{22})$.
\end{itemize}

\subsubsection{Encoding}
At the beginning of block $i$, let ($w_{10i},w_{11i},w_{12i},w_{21i},w_{22i}$) be the new messages to be sent in block $i$, and
($w_{10[i-1]},w_{11[i-1]},w_{12[i-1]},w_{21[i-1]},w_{22[i-1]}$) be the messages sent in block $i-1$.
\begin{itemize}
  \item
      $S_1$ knows $w_{10[i-1]}$, in order to send ($w_{10i},w_{11i},w_{12i}$), it transmits $x_1^n(w_{12}|w_{11i},w_{10i},w_{10[i-1]})$.
  \item
      $S_2$ searches for a $v_{22i}$ such that
            \begin{align}\label{eq:FD_DMC_Bin_binstep}
                (t_{10}^n{(w_{10[i-1]})}, u_{21}^n(w_{21i}), u_{22}^n(w_{22i}, v_{22i}|w_{21i}) ) 
                \in A_\epsilon^{(n)}(P_{T_{10} U_{22}|U_{21}}).
            \end{align}
      Such a $v_{22i}$ exists with high probability if
            \begin{align}\label{eq:FD_DMC_combined_begin}
                R_{22}' \geq I( U_{22}; T_{10}|U_{21}).
            \end{align}
      $S_2$ then transmits $x_2^n(w_{10[i-1]},w_{21i}, w_{22i}, v_{22i})$.
\end{itemize}
\subsubsection{Decoding}
At the end of block $i$:
\begin{itemize}
\item $S_2$ knows $w_{10[i-1]}$ and declares message $\hat{w}_{10i}$ was sent if it is the unique message such that
\begin{align*}
    (t_{10}^n{(w_{10[i-1]})}, u_{10}^n(\hat{w}_{10i}|w_{10[i-1]}),y^n(i))
    \in A_\epsilon^{(n)}(P_{T_{10} U_{10} Y}),
\end{align*}
where $y^n(i)$ indicates the received signal at $S_2$ in block $i$. We can show that the decoding error probability goes to 0 when $n\rightarrow \infty$ if
\begin{align}
    R_{10}  &\leq I(U_{10}; Y|T_{10}).
\end{align}
\item $T_1$ knows $w_{10[i-2]}$ and searches for a unique tuple
($\hat{w}_{10[i-1]},\hat{w}_{11[i-1]},\hat{w}_{12[i-1]})$ for some $\hat{w}_{21[i-1]}$ such that
\begin{align}\label{eq:FD_DMC_Combined_Dec_T1}
    ( t_{10}^n{(w_{10[i-2]})}, u_{10}^n(\hat{w}_{10[i-1]}|w_{10[i-2]}),u_{11}^n{(w_{11[i-1]})}, 
    &x_1^n(\hat{w}_{12[i-1]}|\hat{w}_{11[i-1]},\hat{w}_{10[i-1]},w_{10[i-2]}), \nonumber\\
        u_{21}^n(\hat{w}_{21[i-1]}), y_1^n(i-1))&\in A_\epsilon^{(n)}(P_{T_{10} U_{10} U_{11} X_1 U_{21} Y_1})\nonumber\\
     \text{and}  \quad  (t_{10}^n{(\hat{w}_{10[i-1]})},  y_1^n(i)) &\in A_\epsilon^{(n)}(P_{T_{10} Y_1}).
\end{align}
The decoding error probability goes to 0 as $n\rightarrow \infty$ if
\begin{align}
                                         R_{12} &\leq I(X_1;Y_1|T_{10},U_{10}, U_{11}, U_{21}) \nonumber \\
                                R_{10} + R_{12} &\leq I(U_{10}, X_1; Y_1|T_{10}, U_{11}, U_{21}) 
                                                 + I(T_{10};Y_1)\nonumber \\
                                R_{11} + R_{12} &\leq I(U_{11}, X_1; Y_1|T_{10}, U_{10}, U_{21})\nonumber \\
                       R_{10} + R_{11} + R_{12} &\leq I(U_{10}, U_{11}, X_1; Y_1|T_{10}, U_{21})
                                                + I(T_{10};Y_1)\nonumber \\
                                R_{12} + R_{21} &\leq I(X_1, U_{21};Y_1|T_{10},U_{10}, U_{11}) \nonumber \\
                       R_{10} + R_{12} + R_{21} &\leq I(U_{10}, X_1, U_{21}; Y_1|T_{10}, U_{11})  
                                                + I(T_{10};Y_1)\nonumber \\
                       R_{11} + R_{12} + R_{21} &\leq I(U_{11}, X_1, U_{21}; Y_1|T_{10}, U_{10})\nonumber \\
              R_{10} + R_{11} + R_{12} + R_{21} &\leq I(T_{10}, U_{10}, U_{11}, X_1, U_{21}; Y_1).
\end{align}
    \item $T_2$ treats $T_{10}^n{(w_{10[i-1]}')}$ as the state and decodes $(w_{21i}, w_{22i}, v_{22i})$ based on the signal received in block $i$.
    Specifically, $T_2$ searches for a unique $(\hat{w}_{21i}, \hat{w}_{22i})$ for some $(\hat{w}_{11i}, \hat{v}_{22i})$ such that
\begin{align}\label{eq:FD_DMC_Combined_Dec_T2}
    (u_{11}^n(\hat{w}_{11i}), u_{21}^n(\hat{w}_{21i}), &u_{22}^n(\hat{w}_{22i},\hat{v}_{22i}|\hat{w}_{21i}), y_2^n(i)) 
    \in   A_\epsilon^{(n)}(P_{U_{11} U_{21} U_{22} Y_2}).
\end{align}
The decoding error probability goes to 0 as $n\rightarrow \infty$ if
\begin{align}\label{eq:FD_DMC_combined_end}
                     R_{22} + R_{22}' &\leq I(U_{22};Y_2|U_{21},U_{11}) \nonumber \\
            R_{21} + R_{22} + R_{22}' &\leq I(U_{21},U_{22};Y_2|U_{11}) \nonumber \\
            R_{11} + R_{22} + R_{22}' &\leq I(U_{11},U_{22};Y_2|U_{21}) \nonumber \\
   R_{11} + R_{21} + R_{22} + R_{22}' &\leq I(U_{11},U_{21},U_{22};Y_2).
\end{align}
\end{itemize}
Applying Fourier-Motzkin Elimination to \eqref{eq:FD_DMC_combined_begin}-\eqref{eq:FD_DMC_combined_end}, we get rate region \eqref{eq:FD_DMC_combined}.
\end{proof}
\begin{rem}\label{rem:FD_DMC_Combined_features}
Several features of the HK-PDF-binning scheme are worth noting:
\begin{itemize}
    \item
        In encoding, $w_{10}$ and $w_{11}$ are encoded independently, then $w_{12}$ is superpositioned on both.
        This independent coding between the forwarding part ($w_{10}$) and Han-Kobayashi public part ($w_{11}$), rather than superposition, is
        important to ensure the rate region includes both PDF-binning and Han-Kobayashi regions.
    \item
        In the binning step \eqref{eq:FD_DMC_Bin_binstep} at $S_2$, we use conditional binning instead of the usual (unconditional) binning.
        The binning is only between the Han-Kobayashi private message part ($w_{22}$) and the state ($w_{10}'$), conditionally on knowing the Han-Kobayashi
        public messsage part $w_{21}$. This conditional binning is possible since $w_{21}$ is decoded at both destinations.
    \item
        In the decoding step \eqref{eq:FD_DMC_Combined_Dec_T2} at $T_2$, we use joint decoding of both the Gelfand-Pinsker auxiliary random variable ($u_{22}$) and the Han-Kobayashi public message parts ($w_{11}$ and $w_{21}$), instead of decoding Gelfand-Pinsker and Han-Kobayashi codewords separately.
        This joint decoding  is possible since the codewords for $w_{11}$ and $w_{21}$ (i.e. $u_{11}^n$ and $u_{21}^n$) are independent of the state in
        Gelfand-Pinsker coding (i.e. $t_{10}^n$). Joint decoding at both $T_1$ \eqref{eq:FD_DMC_Combined_Dec_T1} and $T_2$ \eqref{eq:FD_DMC_Combined_Dec_T2} help achieve the largest rate region for this coding structure.
\end{itemize}
\end{rem}
\subsection{Comparison with existing schemes for the interference channel with source cooperation}\label{sec:FD_Tx_coop_compare}
In this section, we analyze in detail two existing schemes \cite{CaoISIT2007, YangTIT2011} for the interference channel with source cooperation,
which are most closely related to the proposed schemes.
The interference channel with source cooperation is a 4-node channel in which both $S_1$ and $S_2$ can receive signal from each other and use that cooperatively in
sending messages to $T_1$ and $T_2$. This channel therefore includes the CRC as a special case (when $S_2$ sends no information to $S_1$).
\subsubsection{IC with conferencing}
Cao and Chen \cite{CaoISIT2007} propose an achievable rate region for the interference channel with source cooperation based on rate splitting, block Markov encoding, superposition encoding, dirty paper coding and random binning. Each user splits its message into three parts: common, private and cooperative messages and divides the cooperative message into cells.
The second user generates independent codewords for the current common message
($u_2^n$), previous cooperative cell index ($s_2^n$) and current cooperative message ($w_2^n$).
The codewords for the current private message are then superimposed on the current common message and previous cooperative cell index ($v_2^n|u_2^n, s_2^n$).
Then, the first user treats the previous cooperative-cell-index codeword ($s_2^n$) as the state and jointly bins its codewords for the current common message ($n_1^n$),
previous cooperative cell index ($h_1^n$) and current cooperative message ($g_1^n$).
Finally, the codewords for the first user's private message ($m_1^n$) is conditionally binned with $s_2^n$ given $n_1^n$ and $h_1^n$.
A two-step decoding with list decoding is then used at each destination.

The common, private and cooperative message parts in \cite{CaoISIT2007} correspond roughly to our HK public, HK private and forwarding (common) part, respectively.
As such, when applied to the CRC, their scheme differs from the proposed HK-PDF-binning scheme in the following aspects:
\begin{itemize}
    \item
        Block Markovity is applied only on the HK private part, whereas in our scheme, block Markovity is applied on
        all message parts.
    \item
        Block Markovity is based on cell division of the previous cooperative message, while in our scheme, block Markovity is on the whole previous common message. This, however, is a minor difference since if each cell contains only one message, then cell index reduces to message index.
    \item
        The first user bins both its HK public and private parts (the user labels are switched in \cite{CaoISIT2007}), whereas we only bin the HK private part (see Remark \ref{rem:FD_DMC_Combined_features}).
    \item
        The scheme in \cite{CaoISIT2007} cannot achieve the decode-forward relaying rate because of no block Markovity between the current cooperative-message codeword ($w_2^n$) and the previous cooperative-cell codeword ($s_2^n$).
        In other words, there is no coherent transmission between the source and relay, which can be readily verified from the code distribution.
        Consider setting $V_1 = V_2 = U_1 = U_2 = 0$, $M_1 = M_2 = N_1 = N_2 = 0$ and $W_2 = S_2 = G_2 = H_2 = 0$ in equation (8) of \cite{CaoISIT2007}, then the code distribution reduces to
        \begin{align*}
            p(q)p(g_{1}|q)p(h_{1}|q)p(x_{1}|g_{1},q)p(x_{2}|h_{1},q) = p(q,g_{1},x_{1})p(q,h_{1},x_{2}) \neq p(q, x_{1}, x_{2} ),
        \end{align*}
        where $q$ is the time sharing variable.
        This distribution implies that the first user splits its message into two parts and independently encodes each of them (by $g_1$ and $h_1$).
        The second user then decodes one part in $g_1$ and forwards this part to the destination.
        But because of the independence between $g_1$ and $h_1$, the achievable rate is less than in coherent decode-forward relaying.

        Thus, the claim in Remark 2 of \cite{CaoISIT2007} that this scheme achieves the capacity region of the degraded relay channel is in fact unfounded.
\end{itemize}

\subsubsection{IC with generalized feedback}
Yang and Tuninetti \cite{YangTIT2011} propose two schemes for the interference channel with generalized feedback based on block Markov superposition coding, binning and backward decoding.
Since the first scheme is a special case of the second, we only analyze their second scheme.
Each user splits its message into four parts: cooperative common ($w_{10c}$), cooperative private ($w_{11c}$),
non-cooperative common ($w_{10n}$) and non-cooperative private ($w_{11n}$).
Consider the transmission in block $b$.
First, generate independent codewords for the previous cooperative-common messages of both users
($Q^n(w_{10c,b-1}, w_{20c,b-1})$).
Then the cooperative-common ($w_{10c,b}$), non-cooperative common ($w_{10n,b}$) and non-cooperative private ($w_{11n,b}$) messages
are superimposed on each other successively as $V_1$, $T_1$, $U_1$, respectively (according to $p(v_1, t_1, u_1|q)$).
There are three binning steps after the above codebook generation.
First, the codewords $S_1$, $S_2$ for the previous cooperative-private messages of both users are binned with each other given $Q$.
Second, $V_1$, $U_1$ and $T_1$ are binned with $S_1$ and $S_2$ given $Q$.
Third, the codeword $Z_1$ for the cooperative-private message ($w_{11c,b}$) is conditionally binned with $S_2$, $U_1$ and $T_1$ given $V_1$, $S_1$ and $Q$.
Backward decoding is used, in which each destination applies relaxed joint decoding of all interested messages.


The non-cooperative messages in \cite{YangTIT2011} correspond to our HK public and private parts.
Their scheme has two cooperative message parts (the common is decoded at both destinations while the private is not), whereas the proposed HK-PDF-binning has only one common part.
To compare these two schemes, we consider the following two special settings to make the message parts equivalent:

i) Set the cooperative-common message ($w_{10c}$) to $\emptyset$: Their cooperative private message then corresponds to our forwarding (common) message. Their scheme differs markedly from HK-PDF-binning as follows.
\begin{itemize}
    \item
        User 1 uses binning among the three message parts instead of superposition coding as in HK-PDF-binning.
        Block Markov superposition is also replaced by binning with the codeword for the previous cooperative message.
    \item
        User 2 applies joint binning of both the non-cooperative common and private parts instead of conditional binning of only the non-cooperative private part, given the non-cooperative common part (see Remark \ref{rem:FD_DMC_Combined_features}).
\end{itemize}

ii) Set the cooperative-private message ($w_{11c}$) to $\emptyset$: Their cooperative common message then corresponds to our forwarding (common) message. Their scheme is more similar to HK-PDF-binning, but there are several important differences as follows.
\begin{itemize}
    \item
        User 1 now uses superposition coding, but superimposes all three message parts successively, whereas we generate codewords for the forwarding part and the HK public part independently (see Remark \ref{rem:FD_DMC_Combined_features}).
    \item
        User 2 also applies joint binning of both non-cooperative message parts instead of conditional binning, similar to case i).
    \item
        Destination 2 decodes the cooperative-common part of user 1, thus limits the rate of user 1 to below the decode-forward relaying rate because of the extra rate constraint at destination 2 (this applies even with relaxed decoding).
        In our proposed scheme, the forwarding part of user 1 is not decoded at destination 2.
\end{itemize}
As a result, both schemes in \cite{CaoISIT2007} and \cite{YangTIT2011}, when applied to the CRC, achieve the Han-Kobayashi region but not the decode-forward relaying rate for the first user. Thus, the maximum rates for user 1 in both schemes are smaller than in \eqref{eq:FD_DMC_Bin_maxR1}.

Another point is that, in both \cite{CaoISIT2007} and \cite{YangTIT2011}, joint decoding of both the state and the binning auxiliary random variables
is used at the destinations, but this joint decoding is invalid and results in a rate region larger than is possible.
In our proposed scheme, all message parts that are jointly decoded with the binning auxiliary variable at the second destination are encoded independently of the state.
\begin{rem}
    Based on our analysis, we conjecture that splitting the common (forwarding) message further into two parts is not necessary for the CRC.
    In \cite{YangTIT2011, WangTIT2011}, the common message is split into two parts: one for decoding at the other destination and the other for binning.
    Our analysis shows that both these operations can be included in one-step binning by varying the joint distribution between the state and the auxiliary
    random variable. This joint distribution becomes apparent when applying to the Gaussian channel as in Section \ref{sec:FD_Gaussian} next.
\end{rem}

\begin{figure}[t]
\centering
\includegraphics[scale=0.45]{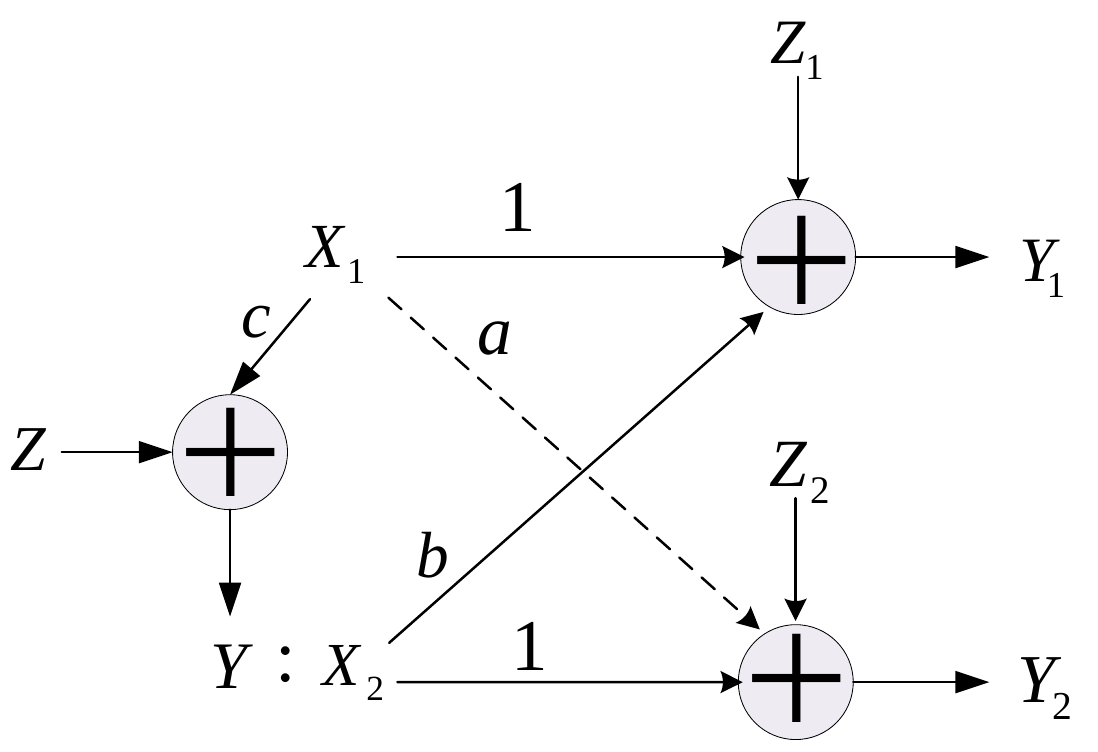}
\caption{The standard full-duplex Gaussian cognitive relay channel.}
\label{fig:FD_Gaussian_CRC}
\end{figure}
\section{Full-duplex Gaussian CRC rate regions}\label{sec:FD_Gaussian}
\subsection{Full-duplex Gaussian CRC model}
In this section, we analyze the standard full-duplex Gaussian cognitive relay channel model as follows.
\begin{align}\label{eq:FD_Gaussian_Y}
    Y_{1} &= X_1 + bX_2 + Z_1 \nonumber \\
    Y_2   &= aX_1 + X_2 + Z_2  \nonumber  \\
    Y     &= cX_1 + Z,
\end{align}
where $Z_1$, $Z_2$, $Z \sim \mathcal{N}(0,1)$ are independent Gaussian noises.
Assume that the transmit signals $X_1$ and $X_2$ are subject to power constraints $P_1$ and $P_2$, respectively.

The standard Gaussian CRC is shown in Figure \ref{fig:FD_Gaussian_CRC}. If the original channel is not in this standard form,
we can always transform it into the standard form using a procedure similar to the interference channel \cite{Carleial78}.
\subsection{Signaling and rates for full-duplex PDF-binning}\label{sec:signal_FD_PDF_Binning}
In the Gaussian channel, the signals $T_{10}$, $U_{10}$, $U_2$, $X_1$ and  $X_2$ of the PDF-binning scheme in Section \ref{sec:FD_DMC_Bin} can be represented as follows.
\begin{align}
    T_{10} &= \alpha S_{10}'(w_{10}'),\nonumber\\
    U_{10} &= \alpha S_{10}'(w_{10}') + \beta  S_{10}(w_{10}), \nonumber \\
    X_1 &= \alpha S_{10}'(w_{10}') + \beta  S_{10}(w_{10}) + \gamma  S_{11}(w_{11}), \nonumber  \\
    X_2 &= \mu\left(\rho S_{10}'(w_{10}')+\sqrt{1-\rho^2}S_{22}\right),\nonumber \\
    U_2 &= X_2 + \lambda S_{10}' = (\mu\rho+\lambda)S_{10}' +\mu\sqrt{1-\rho^2}S_{22} .\label{eq:FD_Gaussian_PDF_Bin_signaling}
\end{align}
where $S_{10}'$, $S_{10}$, $S_{11}$ and $S_{22}$ are independent $\mathcal{N}(0,1)$ random variables to encode $w_{10}'$, $w_{10}$, $w_{11}$ and $w_{2}$ respectively.
$U_2$ is the auxiliary random variable for binning that encodes $w_2$.
$X_1$ and $X_2$ are the transmit signals of $S_1$ and $S_2$.
The parameters $\alpha$, $\beta$, $\gamma$, $\mu$ are power allocation factors satisfying the power constraints
\begin{align}\label{eq:FD_Gaussian_PDF_power_constraints}
    \alpha^2 + \beta^2 + \gamma^2 \leq P_1, \nonumber\\
    \mu ^2 \leq P_2,
\end{align}
where $P_1$ and $P_2$ are transmit power constraints of $S_1$ and $S_2$.

An important feature of the signaling design in \eqref{eq:FD_Gaussian_PDF_Bin_signaling} is $\rho$ ($-1 \leq \rho \leq 1$), the correlation factor
between the transmit signal ($X_2$) and the state ($S_{10}'$) at $S_2$.
In traditional dirty paper coding, the transmit signal and the state are independent. Here we introduce correlation between them, which includes
dirty paper coding as a special case when $\rho=0$.
This correlation allows both signal forwarding and traditional binning at the same time.
$\lambda$ is the partial decode-forward binning parameter which will be optimized later.

Substitute $X_1$, $X_2$ into $Y_1$, $Y_2$ and $Y$ in \eqref{eq:FD_Gaussian_Y}, we get
\begin{align}
    Y_{1} &= (\alpha+b\mu\rho) S_{10}' + \beta  S_{10} + \gamma  S_{11} + b \mu\sqrt{1-\rho^2}  S_{22} + Z_1,  \nonumber   \\
    Y_{2} &= (a\alpha+\mu\rho) S_{10}' + a\beta  S_{10} + a \gamma  S_{11} +  \mu\sqrt{1-\rho^2}  S_{22} + Z_2,  \nonumber  \\
    Y     &= c \alpha  S_{10}' + c \beta  S_{10} + c \gamma  S_{11}  + Z.
\end{align}

\begin{cor}\label{cor:FD_Gaussian_PDF_Bin}
The achievable rate region for the full-duplex Gaussian-CRC using the PDF-binning scheme is the convex hull of all rate pairs ($R_1$, $R_2$) satisfying
\begin{align}\label{eq:FD_Gaussian_Bin}
           R_{1} &\leq  C \left(\frac  {c^2\beta ^2} {c^2\gamma ^2 + 1}\right)
                      + C \left( \frac  {\gamma ^2} {b^2\mu ^2(1-\rho^2) + 1}  \right)\nonumber\\
           R_{1} &\leq C \left( \frac  {(\alpha+b\mu\rho)^2 + \beta^2 + \gamma^2} {b^2 \mu ^2(1-\rho^2) + 1}  \nonumber\right)\\
           R_2   &\leq C\left(  \frac{\mu ^2(1-\rho^2)}{a^2\beta^2 + a^2\gamma^2 + 1}  \right)
\end{align}
where $-1 \leq \rho \leq 1$, $C(x) = \frac{1}{2}\log (1+x)$, and the power allocation factors $\alpha$, $\beta$, $\gamma$ and $\mu$ satisfy the power constraints \eqref{eq:FD_Gaussian_PDF_power_constraints}.
\end{cor}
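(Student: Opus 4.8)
The plan is to specialize the discrete-memoryless region of Theorem~\ref{thm:FD_DMC_Bin} by substituting the jointly Gaussian signaling \eqref{eq:FD_Gaussian_PDF_Bin_signaling} into the three mutual-information constraints and evaluating each term in closed form. First I would use the output representations already recorded above, in which $Y_1$, $Y_2$, $Y$ are linear combinations of the independent unit-variance generators $S_{10}'$, $S_{10}$, $S_{11}$, $S_{22}$ plus Gaussian noise. Because these four generators are independent, $X_1$ has variance $\alpha^2+\beta^2+\gamma^2$ and $X_2$ has variance $\mu^2$, so \eqref{eq:FD_Gaussian_PDF_power_constraints} is exactly the admissibility condition; this justifies restricting the union to the stated parameter set.

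For the two bounds on $R_1$, every conditioning variable is a deterministic invertible function of a subset of the generators, so conditioning merely subtracts known signal components and each term becomes a scalar Gaussian mutual information. Conditioning on $T_{10}$ fixes $S_{10}'$, leaving $Y=c\beta S_{10}+c\gamma S_{11}+Z$ with $U_{10}$ carrying only $\beta S_{10}$; treating $c\gamma S_{11}+Z$ as noise of variance $c^2\gamma^2+1$ gives the first summand $C(c^2\beta^2/(c^2\gamma^2+1))$. Further conditioning on $U_{10}$ also fixes $S_{10}$, so $X_1$ contributes only $\gamma S_{11}$ to $Y_1$ against residual noise $b\mu\sqrt{1-\rho^2}S_{22}+Z_1$ of variance $b^2\mu^2(1-\rho^2)+1$, producing the second summand. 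The bound $I(T_{10},U_{10},X_1;Y_1)$ is the single-user rate with all of $S_1$'s signal known, whose useful-signal power is $(\alpha+b\mu\rho)^2+\beta^2+\gamma^2$ against the same residual noise, matching the second line of \eqref{eq:FD_Gaussian_Bin}.

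The substantive step is the third constraint $R_2\le I(U_2;Y_2)-I(U_2;T_{10})$, the Gelfand--Pinsker rate and the only place the free parameter $\lambda$ enters. Writing $U_2=A S_{10}'+B S_{22}$ with $A=\mu\rho+\lambda$ and $B=\mu\sqrt{1-\rho^2}$, and $Y_2=D S_{10}'+B S_{22}+N$ with $D=a\alpha+\mu\rho$ and effective noise $N=a\beta S_{10}+a\gamma S_{11}+Z_2$ of variance $\sigma_N^2=a^2\beta^2+a^2\gamma^2+1$, I would evaluate both mutual informations for the jointly Gaussian triple $(U_2,T_{10},Y_2)$ from their covariances. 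Since $I(U_2;T_{10})=\tfrac12\log((A^2+B^2)/B^2)$, the subtraction cancels the common factor $A^2+B^2$ and the difference reduces to $\tfrac12\log\!\big(B^2(D^2+B^2+\sigma_N^2)/f(A)\big)$, where $f(A)=A^2(B^2+\sigma_N^2)-2ADB^2+B^2D^2+B^2\sigma_N^2$ is the determinant of the $(U_2,Y_2)$ covariance.

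Maximizing $R_2$ over $\lambda$ then amounts to minimizing the quadratic $f(A)$, and this is the one place where genuine computation rather than bookkeeping is required, so I expect it to be the main obstacle. Solving $f'(A)=0$ gives the Costa-type choice $A^\star=DB^2/(B^2+\sigma_N^2)$ (equivalently an explicit optimal $\lambda^\star$), at which $f(A^\star)=B^2\sigma_N^2(D^2+B^2+\sigma_N^2)/(B^2+\sigma_N^2)$; substituting back collapses the expression to $C(B^2/\sigma_N^2)=C(\mu^2(1-\rho^2)/(a^2\beta^2+a^2\gamma^2+1))$, the interference-free rate in the third line. The delicate part is carrying this minimization and back-substitution so that the dependence on $\alpha$ and $D$ cancels and only $B^2/\sigma_N^2$ survives; conceptually this is precisely Costa's dirty-paper argument, here with a binning state $T_{10}=\alpha S_{10}'$ correlated with the transmit signal through $\rho$, so that $\rho=0$ recovers ordinary DPC. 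Taking the convex hull over all admissible $(\alpha,\beta,\gamma,\mu,\rho)$ then yields \eqref{eq:FD_Gaussian_Bin}.
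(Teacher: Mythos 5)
Your proposal is correct and follows essentially the same route as the paper: specialize Theorem~\ref{thm:FD_DMC_Bin} to the Gaussian signaling \eqref{eq:FD_Gaussian_PDF_Bin_signaling}, evaluate each mutual information from the jointly Gaussian covariances, and obtain the $R_2$ line by optimizing the binning parameter. Your minimization of $f(A)$ is exactly the paper's minimization of $\det\,\text{cov}(U_2,Y_2)$ in Theorem~\ref{thm:FD_Gaussian_bin_lambda} (your $A^\star = DB^2/(B^2+\sigma_N^2)$ is equivalent to the paper's $\lambda^*$ in \eqref{eq:lambda_FD_Gaussian_Bin}); the only difference is that the paper states that optimization as a separate theorem rather than inside the corollary's proof.
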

\begin{proof}
Applying Theorem 1 with the signaling in \eqref{eq:FD_Gaussian_PDF_Bin_signaling}, we get the rate region in Corollary \ref{cor:FD_Gaussian_PDF_Bin}.
\end{proof}
\begin{rem}Maximum rates for each sender
    \begin{itemize}
        \item
        Setting $\rho=\pm 1$, $\mu =\rho \sqrt{P_2}$, we obtain the maximum rate for $R_1$ as in partial decode-forward relaying:
            \begin{align}\label{eq:FD_Gaussian_Bin_maxR1}
                R_{1}^{\max} = \max_{\substack{\alpha^2 + \beta^2 + \gamma^2 \leq P_1}} \min  \Bigg \{&C \left( \frac{c^2 \beta^2}{c^2 \gamma^2+1} \right) + C(\gamma^2), 
                                  C\left( \left(\alpha+b\sqrt{P_2}\right)^2 + \beta^2 + \gamma^2 \right) \Bigg \}.
            \end{align}
        \item
        Setting $\rho=0$, $\beta = \gamma = 0$ and $\mu=\sqrt{P_2}$, we obtain the maximum rate for $R_2$ as in dirty paper coding:
            \begin{align}\label{eq:FD_Gaussian_Bin_maxR2}
                R_{2}^{\max} = C(P_2).
            \end{align}
    \end{itemize}
\end{rem}
\subsection{Optimal binning parameter for full-duplex PDF-binning}
In this section, we derive in closed form the optimal binning parameter $\lambda$ for \eqref{eq:FD_Gaussian_PDF_Bin_signaling}
to achieve rate region \eqref{eq:FD_Gaussian_Bin}.
This optimal binning parameter is different from the optimal binning parameter in dirty paper coding,
as we introduce the correlation factor $\rho$ between the transmit signal and the state.
This correlation contains the function of message forwarding.
For example, if we set $\rho=\pm 1$, $X_2$ will only encode $w_{10}'$ without any actual binning, hence realize the function of message forwarding.
If we set $\rho=0$, PDF-binning becomes dirty paper coding without any message forwarding.
For $0 < \left|{\rho} \right| < 1$, PDF-binning has both the functions of binning and message forwarding.
Thus, PDF-binning generalizes dirty paper coding.
\begin{thm}\label{thm:FD_Gaussian_bin_lambda}
The optimal $\lambda$ for the full-duplex PDF-binning scheme is
\begin{align}\label{eq:lambda_FD_Gaussian_Bin}
    \lambda^* = \frac{ a\alpha\mu^2(1-\rho^2) - \mu\rho(a^2\beta^2+a^2\gamma^2+1) }{a^2\beta^2 + a^2\gamma^2 + \mu^2(1-\rho^2) + 1}.
\end{align}
\end{thm}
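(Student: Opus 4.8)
The plan is to treat $R_2 = I(U_2;Y_2) - I(U_2;T_{10})$ directly as a function of $\lambda$ and optimize it. Since $T_{10}$, $U_2$ and $Y_2$ are all linear combinations of the independent standard Gaussians $S_{10}'$, $S_{10}$, $S_{11}$, $S_{22}$ and $Z_2$, they are jointly Gaussian, so I can use $I(V;W) = \tfrac12\log(\operatorname{Var}(V)/\operatorname{Var}(V\mid W))$ to write
\begin{equation*}
R_2 = \tfrac12\log\frac{\operatorname{Var}(U_2)}{\operatorname{Var}(U_2\mid Y_2)} - \tfrac12\log\frac{\operatorname{Var}(U_2)}{\operatorname{Var}(U_2\mid T_{10})} = \tfrac12\log\frac{\operatorname{Var}(U_2\mid T_{10})}{\operatorname{Var}(U_2\mid Y_2)}.
\end{equation*}
The first simplification is to note that conditioning on $T_{10}=\alpha S_{10}'$ merely removes the $S_{10}'$ component of $U_2 = (\mu\rho+\lambda)S_{10}' + \mu\sqrt{1-\rho^2}\,S_{22}$, so that $\operatorname{Var}(U_2\mid T_{10}) = \mu^2(1-\rho^2)$, a quantity independent of $\lambda$. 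Maximizing $R_2$ over $\lambda$ is therefore equivalent to minimizing $\operatorname{Var}(U_2\mid Y_2)$ over $\lambda$.

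Next I would compute $\operatorname{Var}(U_2\mid Y_2) = \operatorname{Var}(U_2) - \operatorname{Cov}(U_2,Y_2)^2/\operatorname{Var}(Y_2)$ from the signaling in \eqref{eq:FD_Gaussian_PDF_Bin_signaling}. Writing $A = \mu\rho+\lambda$ for the only $\lambda$-dependent coefficient and $c_1 = a\alpha+\mu\rho$ for the coefficient of $S_{10}'$ in $Y_2$, one has $\operatorname{Cov}(U_2,Y_2) = A c_1 + \mu^2(1-\rho^2)$, while $\operatorname{Var}(Y_2)$ does not depend on $\lambda$. The objective thus reduces to the scalar quadratic $f(A) = A^2 + \mu^2(1-\rho^2) - \big(A c_1 + \mu^2(1-\rho^2)\big)^2/\operatorname{Var}(Y_2)$, whose leading coefficient $1 - c_1^2/\operatorname{Var}(Y_2)$ is strictly positive (since $\operatorname{Var}(Y_2) = c_1^2 + a^2\beta^2 + a^2\gamma^2 + \mu^2(1-\rho^2)+1$ exceeds $c_1^2$), so $f$ is strictly convex and its unique stationary point is the global minimum. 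Setting $f'(A)=0$ gives $A\big(\operatorname{Var}(Y_2)-c_1^2\big) = c_1\mu^2(1-\rho^2)$, and substituting $\operatorname{Var}(Y_2)-c_1^2 = a^2\beta^2 + a^2\gamma^2 + \mu^2(1-\rho^2)+1$ solves for $A$; then $\lambda^* = A - \mu\rho$ gives \eqref{eq:lambda_FD_Gaussian_Bin}.

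There is no conceptual obstacle here, as the problem is a textbook jointly-Gaussian MMSE/quadratic minimization; the only care needed is algebraic bookkeeping. Concretely, after forming $\lambda^* = \big(c_1\mu^2(1-\rho^2) - \mu\rho D\big)/D$ with $D = a^2\beta^2+a^2\gamma^2+\mu^2(1-\rho^2)+1$, one must expand $c_1\mu^2(1-\rho^2) - \mu\rho D$ with $c_1 = a\alpha+\mu\rho$ and verify that the two $\mu^3\rho(1-\rho^2)$ terms cancel, leaving the compact numerator $a\alpha\mu^2(1-\rho^2) - \mu\rho\,(a^2\beta^2+a^2\gamma^2+1)$ of \eqref{eq:lambda_FD_Gaussian_Bin}. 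As a consistency check I would also substitute $A$ back into $f$, obtaining after simplification $f(A^*) = \mu^2(1-\rho^2)\,(a^2\beta^2+a^2\gamma^2+1)/D$, so that $R_2 = \tfrac12\log\big(\mu^2(1-\rho^2)/f(A^*)\big) = C\big(\mu^2(1-\rho^2)/(a^2\beta^2+a^2\gamma^2+1)\big)$, which matches the third constraint of Corollary \ref{cor:FD_Gaussian_PDF_Bin} and confirms that $\lambda^*$ indeed attains the interference-free rate for $S_2$.
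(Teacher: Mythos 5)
Your proposal is correct and takes essentially the same route as the paper: both arguments single out $I(U_2;Y_2)-I(U_2;T_{10})$ as the only $\lambda$-dependent quantity in the region, exploit joint Gaussianity to reduce it to second-order statistics (noting that $H(Y_2)$ and the $T_{10}$-conditional term are $\lambda$-free), and solve the resulting scalar quadratic minimization. The only difference is cosmetic: the paper minimizes $\det\operatorname{cov}(U_2,Y_2)$ (via the joint entropy $H(U_2,Y_2)$) whereas you minimize $\operatorname{Var}(U_2\mid Y_2)$, and these objectives are equivalent because $\det\operatorname{cov}(U_2,Y_2)=\operatorname{Var}(Y_2)\,\operatorname{Var}(U_2\mid Y_2)$ with $\operatorname{Var}(Y_2)$ independent of $\lambda$; your explicit convexity check and the back-substitution recovering the interference-free rate $C\bigl(\mu^2(1-\rho^2)/(a^2\beta^2+a^2\gamma^2+1)\bigr)$ of Corollary \ref{cor:FD_Gaussian_PDF_Bin} are worthwhile additions that the paper omits.
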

\begin{proof}
The optimal $\lambda^*$ is obtained by maximizing both rates $R_1$ and $R_2$.
In rate region \eqref{eq:FD_DMC_Bin}, through the Fourier Motzkin Elimination process, we can see that if we maximize the term $I(U_{2};Y_{2}) - I(U_{2};T_{10})$, both $R_1$ and $R_2$ are maximized simultaneously. We have
\begin{align*}
    &I(U_{2};Y_{2}) - I(U_{2};T_{10}) \\
    &= H(Y_2) - H(Y_2|U_2) - H(U_2) + H(U_2|T_{10})\\
    &= H(Y_2) + H(U_2|T_{10}) - H(U_2, Y_2).
\end{align*}
Here $\lambda$ only affects the last term $H(U_2, Y_2)$.
The covariance matrix between $U_{2}$ and $Y_{2}$ is
\begin{align}\label{eq:FD_Gaussian_Bin_cov}
\text{cov}(U_{2}, Y_{2}) =
    \begin{bmatrix}
    \text{var}(U_{2})        & \text{E}(U_{2},Y_{2})\\
    \text{E}(U_{2},Y_{2})    & \text{var}(Y_{2})
    \end{bmatrix},
\end{align}
where
\begin{align*}
    \text{var}(U_{2}) &= \mu  ^2 + \lambda ^2 + 2\mu\rho\lambda, \\
    \text{E}(U_{2},Y_{2})&=(\mu\rho+\lambda)(a\alpha+\mu\rho) + \mu^2(1-\rho^2),\\
    \text{var}(Y_{2}) &= (a\alpha+\mu\rho)^2+ a^2\beta^2 + a^2\gamma^2 + \mu^2(1-\rho^2) + 1.
\end{align*}
Minimizing the determinant of the covariance matrix
in \eqref{eq:FD_Gaussian_Bin_cov}, we obtain the optimal $\lambda^*$ in \eqref{eq:lambda_FD_Gaussian_Bin}.
\end{proof}
\begin{figure}[t]
\centering
\includegraphics[scale=0.45]{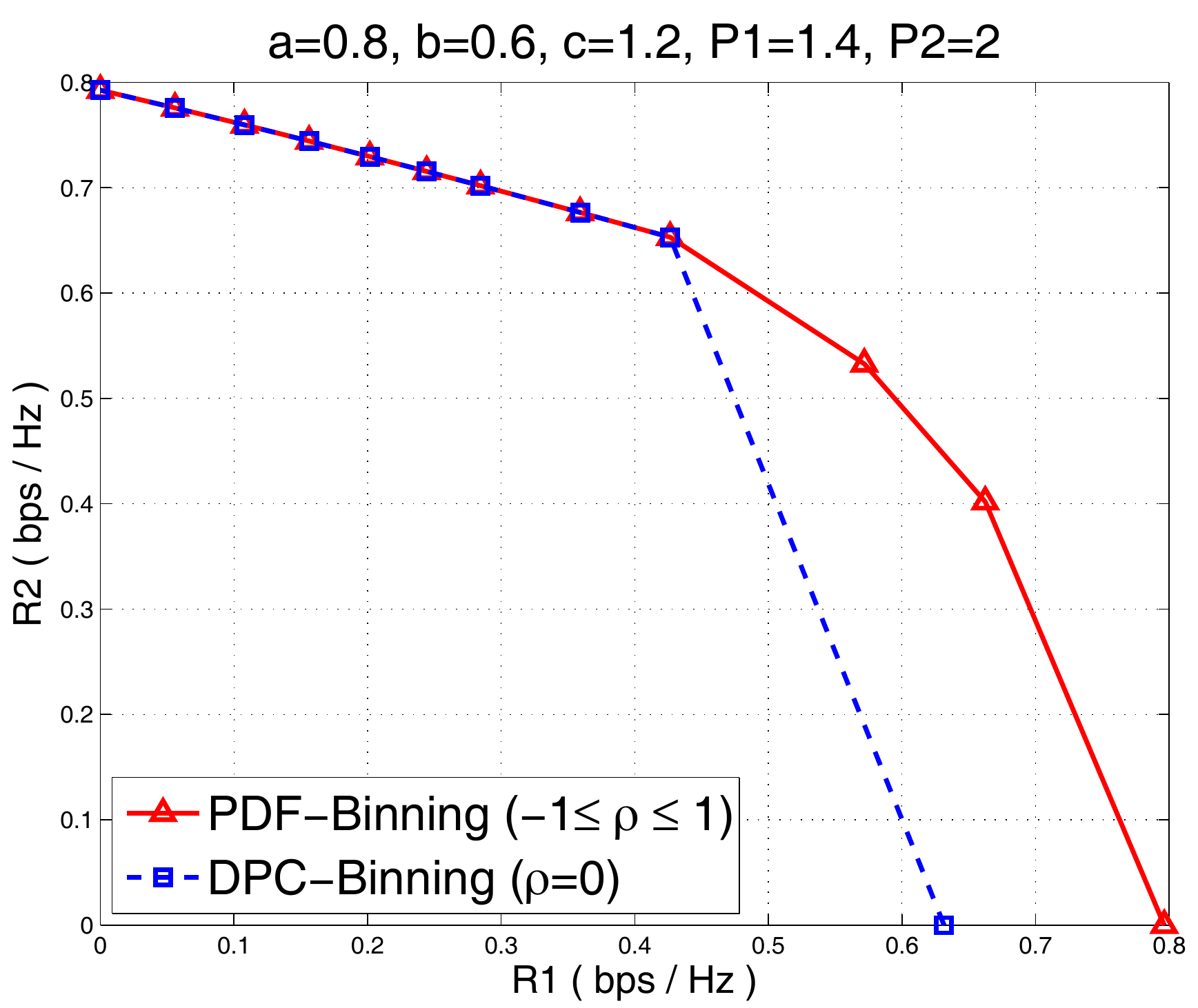}
\caption{Effect of the binning correlation factor $\rho$.}
\label{fig:FD_PDFBin_effect_of_rho}
\end{figure}
\begin{rem}Effect of $\rho$:
\begin{itemize}
\item
If $\rho=0$, $\lambda^*$ becomes the optimal $\lambda$ for traditional dirty paper coding \cite{Costa83TIT}, which achieves the maximum rate for $R_2$
as in \eqref{eq:FD_Gaussian_Bin_maxR2}.
\item
If $\rho=\pm 1$, $\lambda^*$ differs from the $\lambda$ in traditional dirty paper coding and achieves the maximum rate for $R_1$ as in \eqref{eq:FD_Gaussian_Bin_maxR1}.
\item
The effect of $\rho$ can be seen in Figure \ref{fig:FD_PDFBin_effect_of_rho}.
The dashed line represents the resulting rate region using only DPC-binning ($\rho=0$),
 while the solid line represents the region for PDF-binning when we adapt $\rho \in [-1, 1]$. Figure \ref{fig:FD_PDFBin_effect_of_rho} illustrates that the correlation factor $\rho$ can enlarge the rate region.
\end{itemize}
\end{rem}
\subsection{Signaling and rates for full-duplex Han-Kobayashi PDF-binning}\label{sec:FD_Gaussian_combined}
In the Gaussian channel, input signals for the HK-PDF-binning scheme in Section \ref{sec:FD_DMC_Combined} can be represented as
\begin{align}\label{eq:FD_Gaussian_Combined_signal}
    T_{10} &= \alpha S_{10}'(w_{10}'), \nonumber \\
    U_{10} &= \alpha S_{10}'(w_{10}') + \beta S_{10}(w_{10}), \nonumber \\
    U_{11} &= \gamma S_{11}(w_{11}),\nonumber \\
    X_1 &= \alpha S_{10}'(w_{10}') + \beta S_{10}(w_{10}) + \gamma S_{11}(w_{11}) + \delta S_{12}(w_{12}), \nonumber  \\
    U_{21} &=  \theta S_{21}(w_{21}),  \nonumber  \\
    X_2 &= \theta S_{21}(w_{21}) + \mu\left(\rho  S_{10}'(w_{10}')   + \sqrt{1-\rho^2}S_{22}\right),\nonumber \\
    U_{22} &= X_2  + \lambda S_{10}'  
           = (\mu\rho+\lambda)S_{10}' + \theta S_{21}(w_{21}) + \mu\sqrt{1-\rho^2}S_{22},
\end{align}
where $S_{10}'$, $S_{10}$, $S_{11}$, $S_{12}$, $S_{21}$, $S_{22}$ are independent $\mathcal{N}(0,1)$ random variables to encode $w_{10}'$, $w_{10}$, $w_{11}$, $w_{12}$, $w_{21}$, $w_{22}$, respectively.
$U_{22}$ is the auxiliary random variable for binning that encodes $w_{22}$.
$X_1$ and $X_2$ are the transmit signals of $S_1$ and $S_2$.
$\rho$ is the correlation coefficient between the transmit signal and the binning state at $S_2$ ($-1\leq \rho \leq 1$).
$\lambda$ is the PDF-binning parameter.
The   parameters $\alpha$, $\beta$, $\gamma$, $\delta$, $\theta$ and $\mu$ are power allocation factors satisfying the power constraints
\begin{align}\label{eq:FD_Gaussian_combined_power_constraints}
    \alpha^2 + \beta^2 + \gamma^2 + \delta^2 &\leq P_1, \nonumber\\
    \theta ^2 + \mu^2 &\leq P_2,
\end{align}
where $P_1$ and $P_2$ are transmit power constraints of $S_1$ and $S_2$.

Substitute these variables into the Gaussian channel in \eqref{eq:FD_Gaussian_Y}, we get
\begin{align}
    Y      &= c\alpha S_{10}' + c\beta S_{10} + c\gamma S_{11} + c\delta S_{12} + Z, \nonumber   \\
     Y_1   &= (\alpha + b\mu \rho) S_{10} ' + \beta S_{10} + \gamma S_{11} + \delta S_{12}  
            + b\theta S_{21} + b\mu \sqrt{1-\rho^2}S_{22}+Z_1,\nonumber \\
    Y_2    &= (a\alpha + \mu \rho) S_{10} ' + a\beta S_{10} + a\gamma S_{11} + a\delta S_{12} 
            + \theta S_{21} + \mu \sqrt{1-\rho^2}S_{22}+Z_2.
\end{align}
\begin{cor}\label{cor:FD_Gaussian_combined}
The achievable rate region for the full-duplex Gaussian-CRC using the Han-Kobayashi PDF-binning scheme is the convex hull of all rate pairs ($R_1$, $R_2$) satisfying
\begin{align}\label{eq:FD_Gaussian_Combined}
                             R_{1} &\leq \min\{I_2 + I_5, I_6\}\nonumber \\
                             R_{2} &\leq I_{12}- I_1\nonumber \\
                     R_{1} + R_{2} &\leq \min\{I_2 + I_7,I_8 \} + I_{13}- I_1\nonumber \\
                     R_{1} + R_{2} &\leq \min\{I_2 + I_3,I_4\} + I_{14}- I_1\nonumber \\
                     R_{1} + R_{2} &\leq \min\{I_2 + I_9, I_{10} \} + I_{11} - I_1 \nonumber\\
                    2R_{1} + R_{2} &\leq \min\{ I_2 + I_3,I_4\} + \min\{I_2 + I_9, I_{10} \} + I_{13}- I_1\nonumber \\
                    R_{1} + 2R_{2} &\leq \min\{I_2 + I_7, I_8 \}+ I_{11} - I_1 + I_{14}- I_1
\end{align}
where
\begin{align*}
    I_2 &= C\left(\frac{c^2\beta^2}{c^2\gamma^2 + c^2\delta^2 + 1} \right)\\
    I_3 &= C\left(\frac{\delta^2}{b^2\mu^2 (1-\rho^2)+1} \right)\\
    I_4 &= C\left(\frac{\beta^2 + \delta^2 }{b^2\mu^2 (1-\rho^2)+1} \right)
         + C\left(\frac{(\alpha + b\mu \rho)^2 }{\beta^2 + \gamma^2 + \delta^2 + b^2\theta^2 + b^2\mu^2 (1-\rho^2)+ 1} \right)\\
    I_5 &= C\left(\frac{\gamma^2 + \delta^2}{b^2\mu^2 (1-\rho^2)+1} \right)\\
    I_6 &= C\left(\frac{\beta^2 + \gamma^2 + \delta^2}{b^2\mu^2 (1-\rho^2)+1} \right)
         + C\left(\frac{(\alpha + b\mu \rho)^2 }{\beta^2 + \gamma^2 + \delta^2 + b^2\theta^2 + b^2\mu^2 (1-\rho^2)+ 1} \right)\\
    I_7 &= C\left(\frac{\delta^2 + b^2\theta^2}{b^2\mu^2 (1-\rho^2)+1} \right)\\
    I_8 &= C\left(\frac{\beta^2 + \delta^2 + b^2\theta^2}{b^2\mu^2 (1-\rho^2)+1} \right)
         + C\left(\frac{(\alpha + b\mu \rho)^2 }{\beta^2 + \gamma^2 + \delta^2 + b^2\theta^2 + b^2\mu^2 (1-\rho^2)+ 1} \right)\\
    I_9 &= C\left(\frac{\gamma^2 + \delta^2 + b^2\theta^2}{b^2\mu^2 (1-\rho^2)+1} \right)\\
    I_{10} &= C\left(\frac{(\alpha + b\mu \rho)^2 + \beta^2 + \gamma^2 + \delta^2 + b^2\theta^2}{b^2\mu^2 (1-\rho^2)+1} \right)\\
    I_{11}-I_{1}&=C \left( \frac {\mu^2(1-\rho^2) } {a^2\beta^2+a^2\delta^2 + 1} \right)\\
    I_{12} - I_1 &= C \left( \frac {\mu^2(1-\rho^2) } {a^2\beta^2+a^2\delta^2 + 1} \right) 
          + C\left( \frac{\theta^2 }{(a\alpha + \mu \rho)^2 + a^2\beta^2 + a^2\delta^2 + \mu^2 (1-\rho^2)+1} \right)
\end{align*}
\begin{align*}
    I_{13} - I_1 &= C \left( \frac {\mu^2(1-\rho^2) } {a^2\beta^2+a^2\delta^2 + 1} \right)
          + C\left( \frac{a^2\gamma^2}{(a\alpha + \mu \rho)^2 + a^2\beta^2 + a^2\delta^2 + \mu^2 (1-\rho^2)+1} \right)\\
    I_{14} - I_1 &= C \left( \frac {\mu^2(1-\rho^2) } {a^2\beta^2+a^2\delta^2 + 1} \right)
          + C\left( \frac{a^2\gamma^2+\theta^2}{(a\alpha + \mu \rho)^2 + a^2\beta^2 + a^2\delta^2 + \mu^2 (1-\rho^2)+1} \right)
\end{align*}
and $\alpha$, $\beta$, $\gamma$, $\delta$, $\theta$ and $\mu$ are power allocation factors satisfying the power constraints
\eqref{eq:FD_Gaussian_combined_power_constraints} and $-1\leq \rho \leq 1$.
\end{cor}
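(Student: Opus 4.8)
The plan is to invoke Theorem \ref{thm:FD_DMC_combined} with the Gaussian signaling of \eqref{eq:FD_Gaussian_Combined_signal} and to evaluate each of the information quantities $I_1,\dots,I_{14}$ in closed form. The structural fact I would exploit throughout is that $S_{10}',S_{10},S_{11},S_{12},S_{21},S_{22}$ and the noises $Z,Z_1,Z_2$ are mutually independent Gaussians, so every auxiliary variable and every channel output is a fixed linear combination of these components. Conditioning on a subset of the auxiliary variables is therefore equivalent to subtracting the corresponding components from both the quantity of interest and the output, and by the chain rule each conditional mutual information collapses into scalar terms of the form $C(\text{revealed signal power}/\text{residual interference-plus-noise power})$, with $C(x)=\tfrac12\log(1+x)$ as defined in the paper.

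First I would dispatch the receiver-$T_1$ quantities $I_2$--$I_{10}$. For each of these, conditioning on the relevant subset of $\{T_{10},U_{10},U_{11},U_{21}\}$ removes the components $S_{10}',S_{10},S_{11},S_{21}$ from $Y_1$; the numerator is then the total power of the components newly revealed by the argument variables and the denominator is the power of what remains, typically $b\mu\sqrt{1-\rho^2}S_{22}$ together with $Z_1$, i.e. $b^2\mu^2(1-\rho^2)+1$, unless $U_{21}$ is itself treated as interference. For instance $I_2=I(U_{10};Y\mid T_{10})$ reads the $S_{10}$-component out of $Y$ against the $S_{11},S_{12},Z$ background, giving $C(c^2\beta^2/(c^2\gamma^2+c^2\delta^2+1))$. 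The three quantities $I_4,I_6,I_8$ carry the extra summand $I(T_{10};Y_1)$ coming from the second, cross-block typicality check at $T_1$; this summand is the rate at which $T_1$ resolves the forwarding codeword $S_{10}'$ while treating every other component as noise, namely $C((\alpha+b\mu\rho)^2/(\beta^2+\gamma^2+\delta^2+b^2\theta^2+b^2\mu^2(1-\rho^2)+1))$. Once the component bookkeeping is fixed, all of $I_2$--$I_{10}$ are mechanical.

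The substantive step is the receiver-$T_2$ side, where the Gelfand--Pinsker overhead $I_1=I(U_{22};T_{10}\mid U_{21})$ must be subtracted. Because $U_{22}=(\mu\rho+\lambda)S_{10}'+\theta S_{21}+\mu\sqrt{1-\rho^2}S_{22}$ is correlated with the state $T_{10}=\alpha S_{10}'$ through the free binning parameter $\lambda$, the net private-message rate is $\max_\lambda\{I(U_{22};Y_2\mid U_{21},U_{11})-I_1\}$. I would optimize $\lambda$ by the same $2\times2$ covariance-determinant minimization used in Theorem \ref{thm:FD_Gaussian_bin_lambda}: conditioning on $U_{11},U_{21}$ fixes $S_{11},S_{21}$, and the determinant-minimizing $\lambda$ is the MMSE coefficient that annihilates the state contribution $(a\alpha+\mu\rho)S_{10}'$ in the effective channel. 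At this optimum the difference $I_{11}-I_1$ collapses to the clean, interference-free rate $C(\mu^2(1-\rho^2)/(a^2\beta^2+a^2\delta^2+1))$, exactly as if $T_2$ knew the state. The remaining quantities $I_{12},I_{13},I_{14}$ differ from $I_{11}$ only by additionally decoding $U_{21}$ and/or $U_{11}$, which are encoded independently of the state and are \emph{not} binned; each therefore contributes a plain $C(\cdot)$ term in which the state power $(a\alpha+\mu\rho)^2$ appears in the denominator as ordinary interference, producing the displayed $I_{12}-I_1$, $I_{13}-I_1$ and $I_{14}-I_1$.

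The main obstacle is thus isolated in the binning computation of the $I_{1j}-I_1$ terms: one must verify that the $\lambda$-optimization genuinely removes the state from the effective channel and that subtracting $I_1$ leaves precisely the single-component rate in $S_{22}$, with the correlation coefficient $\rho$ surviving only through the residual power $\mu^2(1-\rho^2)$. Everything else is a routine substitution of \eqref{eq:FD_Gaussian_Combined_signal} into the definitions of $I_1$--$I_{14}$ and a chain-rule accounting of Gaussian powers, after which the region \eqref{eq:FD_Gaussian_Combined} follows by reading off the seven inequalities of Theorem \ref{thm:FD_DMC_combined} with the evaluated terms and imposing the power constraints \eqref{eq:FD_Gaussian_combined_power_constraints}.
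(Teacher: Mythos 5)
Your proposal is correct and follows essentially the same route as the paper: Corollary \ref{cor:FD_Gaussian_combined} is obtained by substituting the Gaussian signaling \eqref{eq:FD_Gaussian_Combined_signal} into the mutual-information terms of Theorem \ref{thm:FD_DMC_combined}, with the $T_1$-side terms $I_2$--$I_{10}$ evaluated by the same component-wise Gaussian power bookkeeping and the $T_2$-side terms $I_{11}-I_1,\dots,I_{14}-I_1$ obtained by choosing $\lambda$ to minimize the determinant of $\mathrm{cov}(U_{22}',Y_2')$, exactly as the paper does in Corollary \ref{cor:FD_Gaussian_optimalLambda} and Appendix \ref{proof:FD_Gaussian_Combined_lambda}. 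Your observation that $I_{12},I_{13},I_{14}$ decompose via the chain rule into $I_{11}$ plus $\lambda$-independent terms for the unbinned codewords $U_{11},U_{21}$ (so a single $\lambda^*$ optimizes all constraints simultaneously, and the state enters those extra terms as ordinary interference) is precisely the structure underlying the paper's displayed closed forms.
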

\begin{proof}
Applying Theorem \ref{thm:FD_DMC_combined} with the signaling in \eqref{eq:FD_Gaussian_Combined_signal}, we obtain the rate region in Corollary \ref{cor:FD_Gaussian_combined}.
\end{proof}

Note that rate region \eqref{eq:FD_Gaussian_Combined} includes both the Han-Kobayashi rate region and the PDF-binning region
in \eqref{eq:FD_Gaussian_Bin}.
Furthermore, the maximum rates for user 1 and user 2 are the same as in \eqref{eq:FD_Gaussian_Bin_maxR1} and \eqref{eq:FD_Gaussian_Bin_maxR2}.
\subsection{Optimal binning parameter for full-duplex Han-Kobayashi PDF-binning}
\begin{cor}\label{cor:FD_Gaussian_optimalLambda}
The optimal $\lambda^*$ for the full-duplex HK-PDF-binning scheme is
\begin{align}\label{eq:lambda_FD_Gaussian_Combined}
    \lambda^* &= \frac{a\alpha\mu^2(1-\rho^2)-\mu\rho(a^2\beta^2+a^2\delta^2+1)}{a^2\beta^2+a^2\delta^2+\mu^2(1-\rho^2)+1}
\end{align}
\end{cor}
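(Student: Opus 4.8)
The plan is to follow the same strategy as the proof of Theorem \ref{thm:FD_Gaussian_bin_lambda}, adapted to the richer auxiliary structure of HK-PDF-binning. The first step is to locate where $\lambda$ enters rate region \eqref{eq:FD_Gaussian_Combined}. From the signaling \eqref{eq:FD_Gaussian_Combined_signal}, $\lambda$ appears only in the binning auxiliary variable $U_{22} = (\mu\rho+\lambda)S_{10}' + \theta S_{21} + \mu\sqrt{1-\rho^2}S_{22}$, while the transmit signals $X_1$ and $X_2$, and hence all three channel outputs $Y$, $Y_1$ and $Y_2$, are independent of $\lambda$. Consequently every mutual-information term in \eqref{eq:FD_DMC_combined} that does not involve $U_{22}$ is $\lambda$-free, and only the four quantities $I_{11}-I_1$, $I_{12}-I_1$, $I_{13}-I_1$ and $I_{14}-I_1$ depend on $\lambda$.

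The second step is to reduce these four terms to a single optimization. Under the code distribution $P_2$, the variable $U_{11}=\gamma S_{11}$ is independent of the triple $(U_{22},T_{10},U_{21})$, so that $I(U_{22};T_{10}\mid U_{21}) = I(U_{22};T_{10}\mid U_{21},U_{11})$. Applying the chain rule to $I_{12}$, $I_{13}$, $I_{14}$ and peeling off the $U_{22}$ term then gives
\begin{align*}
    I_{12}-I_1 &= I(U_{21};Y_2\mid U_{11}) + (I_{11}-I_1),\\
    I_{13}-I_1 &= I(U_{11};Y_2\mid U_{21}) + (I_{11}-I_1),\\
    I_{14}-I_1 &= I(U_{11},U_{21};Y_2) + (I_{11}-I_1),
\end{align*}
where each leading term is $\lambda$-free. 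Hence all four $\lambda$-dependent quantities, and therefore every constraint of \eqref{eq:FD_Gaussian_Combined} (including those containing a sum of two such terms), are maximized simultaneously by the single $\lambda$ that maximizes
\begin{align*}
    g(\lambda) := I_{11}-I_1 = I(U_{22};Y_2\mid U_{11},U_{21}) - I(U_{22};T_{10}\mid U_{11},U_{21}).
\end{align*}

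The third step mirrors Theorem \ref{thm:FD_Gaussian_bin_lambda}. Writing $V=(U_{11},U_{21})$, I decompose $g(\lambda) = H(Y_2\mid V) + H(U_{22}\mid T_{10},V) - H(U_{22},Y_2\mid V)$. Given $T_{10}=\alpha S_{10}'$ and $V$, the variable $U_{22}$ reduces to $\mu\sqrt{1-\rho^2}\,S_{22}$ plus a constant, so $H(U_{22}\mid T_{10},V)$ is $\lambda$-free, and $H(Y_2\mid V)$ is likewise $\lambda$-free. Thus maximizing $g$ is equivalent to minimizing the conditional joint entropy $H(U_{22},Y_2\mid V)$, i.e.\ to minimizing the determinant of the $2\times2$ conditional covariance matrix of $(U_{22},Y_2)$ given $V$. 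Since conditioning on $U_{11}$ and $U_{21}$ simply removes the $S_{11}$ and $S_{21}$ components, this matrix has entries
\begin{align*}
    \text{var}(U_{22}\mid V) &= \mu^2 + \lambda^2 + 2\mu\rho\lambda,\\
    \text{E}(U_{22},Y_2\mid V) &= (\mu\rho+\lambda)(a\alpha+\mu\rho) + \mu^2(1-\rho^2),\\
    \text{var}(Y_2\mid V) &= (a\alpha+\mu\rho)^2 + a^2\beta^2 + a^2\delta^2 + \mu^2(1-\rho^2) + 1,
\end{align*}
which coincide with the covariance matrix \eqref{eq:FD_Gaussian_Bin_cov} of Theorem \ref{thm:FD_Gaussian_bin_lambda} after the replacement $\gamma^2\mapsto\delta^2$ (the HK public part $U_{11}$, absent from PDF-binning, is now decoded at $T_2$ and so is removed by the conditioning). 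Setting the derivative of the determinant with respect to $\lambda$ to zero gives a linear equation whose solution is exactly \eqref{eq:lambda_FD_Gaussian_Combined}.

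The routine determinant minimization in the last step is identical in form to Theorem \ref{thm:FD_Gaussian_bin_lambda} and poses no difficulty. The genuinely substantive step is the second one: establishing that a single value of $\lambda$ optimizes all of $I_{11}-I_1,\dots,I_{14}-I_1$ at once, so that ``the optimal $\lambda$'' is well defined and there is no trade-off among the competing rate constraints. This hinges on the independence of $U_{11}$ and $U_{21}$ from the binning pair $(U_{22},T_{10})$ built into $P_2$, together with the fact that the forwarding/state variable $T_{10}$ is \emph{not} decoded at $T_2$; had $U_{11}$ or $U_{21}$ been correlated with the state, the four differences would fail to share a common maximizer and no single closed-form $\lambda^*$ would exist.
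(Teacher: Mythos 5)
Your proof is correct and takes essentially the same route as the paper's Appendix~\ref{proof:FD_Gaussian_Combined_lambda}: reduce to maximizing $I_{11}-I_1$, decompose it as $H(Y_2|U_{11},U_{21})+H(U_{22}|T_{10},U_{21})-H(U_{22},Y_2|U_{11},U_{21})$ so that $\lambda$ enters only through the joint-entropy term, and minimize the determinant of exactly the conditional covariance matrix in \eqref{eq:FD_Gaussian_Combined_cov}. Your step~2 is in fact more rigorous than the paper, which merely asserts that maximizing $I_{11}-I_1$ simultaneously maximizes all constraints; the only inaccuracy is your closing counterfactual, since the decompositions of $I_{12}-I_1$, $I_{13}-I_1$ and $I_{14}-I_1$ into a $\lambda$-free term plus $I_{11}-I_1$ are pure chain-rule identities, so a common maximizer would exist even without the independence of $U_{11}$ from $(T_{10},U_{21},U_{22})$ --- that independence is needed only to justify writing $I_1=I(U_{22};T_{10}|U_{11},U_{21})$ in your entropy decomposition.
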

\begin{proof}
     $\lambda^*$ is obtained by maximizing the term $I_{11}-I_{1}$ in \eqref{eq:FD_DMC_combined}.
     See Appendix \ref{proof:FD_Gaussian_Combined_lambda} for details.
\end{proof}

     Note that the optimal $\lambda^*$ in \eqref{eq:lambda_FD_Gaussian_Combined} contains both the optimal $\lambda^*$ for PDF-binning in \eqref{eq:lambda_FD_Gaussian_Bin} and
     the optimal $\lambda$ for DPC binning \cite{Costa83TIT} as special cases.

\subsection{Numerical examples}\label{sec:FD_Outer_compare}
In this section, we provide numerical comparison among the proposed PDF-binning and HK-PDF-binning schemes, the original Han-Kobayashi scheme,
and an outer bound as discussed below.
\subsubsection{Outer bounds for the CRC capacity}
We obtain a simple outer bound for the CRC capacity by combining the capacity for the (non-causal) CIC and the ourter bound for interference channel with user cooperation (IC-UC) \cite{TandonTIT2011}.
Where the CIC capacity result is not available, we use the MISO broadcast capacity.
\begin{align*}
\text{CRC capacity} &\subset \text{CIC capacity} \bigcap \text{IC-UC outer bound} \\
                    &\subset \text{MISO BC capacity} \bigcap \text{IC-UC outer bound}.
\end{align*}

\emph{a)} {Capacity of the CIC as an outer bound:}
The capacity of the ideal CIC (with non-causal knowledge of $S_1$'s message at $S_2$) is an outer bound to the CRC rate region.
The CIC capacity is known in the cases of (i) weak interference \cite{Wu2007TIT,Jovicic2009TIT} (ii) very strong interference \cite{Maric2007TIT} (iii) the primary-decode-cognitive region \cite{Rini2011TIT_Gaussian}.
For strong interference, we can also use the outer bound to the CIC capacity in \cite{MaricEuroTele2008} as an outer bound to the CRC.

\emph{b)} {IC-UC outer bound:}
Tandon and Ulukus \cite{TandonTIT2011} obtain an outer bound for the MAC with generalized feedback based on dependence balance, which is first proposed by Hekstra and Willems \cite{HekstraTIT1989} to study outer bounds for the single-output two-way channels. The basic idea of dependence balance is that no more information can be consumed than produced. Tandon and Ulukus apply this idea to obtain a new outer bound for IC-UC.
It is shown that this dependence-balance-based outer bound is strictly tighter than the cutset bound (see Section V of \cite{TandonTIT2011}).
Thus, this bound can be used instead of the relay channel (RC) cutset bound for $R_1$.

\emph{c)} {Gaussian Vector Broadcast Outer Bound:}
Consider a $2 \times 1$ MISO broadcast system as
\begin{align}
    Y_1 = [1 \quad b] X + Z_1, \nonumber\\
    Y_2 = [a \quad 1] X + Z_2,
\end{align}
where $a$, $b$ are the channel gains,
$Z_1$ and $Z_2$ are white Gaussian noises with identity covariance.
The vector codeword $X$ consists of two independent parts:
\begin{align*}
    X = U+V,
\end{align*}
where
$X=\left(
                  \begin{array}{c}
                    X_1 \\
                    X_2 \\
                  \end{array}
                \right)
$,
$U=\left(
                  \begin{array}{c}
                    U_1 \\
                    V_1 \\
                  \end{array}
                \right)
$,
$V=\left(
                  \begin{array}{c}
                    U_2 \\
                    V_2 \\
                  \end{array}
                \right)
$, and $U_1$, $V_1$, $U_2$, $V_2$ are zero-mean Gaussian codewords with covariances:
\begin{align*}
K_U =
    \begin{bmatrix}
              \alpha^2                  & \rho_1 \alpha \beta \\
              \rho_1 \alpha \beta  & \beta^2
    \end{bmatrix}; \quad
K_V =
\begin{bmatrix}
          \gamma^2                  & \rho_2 \gamma \delta \\
          \rho_2 \gamma \delta  & \delta^2
\end{bmatrix},
\end{align*}
in which the power allocation factors satisfy
\begin{align}
    \alpha^2 + \beta^2 \leq P_1, \quad \gamma^2 + \delta^2 \leq P_2,
\end{align}
and the input correlation factors $\rho_1, \rho_2\in [-1,1]$.

The Gaussian vector broadcast capacity region is the convex closure of $R_{o1} \bigcup R_{o2}$ \cite{Gamal10notes},
where $R_{o1}$ is the region
\begin{align}
                R_1 &\leq 
                                         C\left(
                                             \frac{\alpha^2 + 2b \rho_1 \alpha \beta + b^2 \beta^2 }
                                             {\gamma^2 + 2b \rho_2 \gamma \delta + b^2 \delta^2+1}
                                             \right)\nonumber\\
                R_2 &\leq 
                                     C\left(a^2\gamma^2 + 2a \rho_2 \gamma \delta +  \delta^2 \right)
\end{align}

and $R_{o2}$ is the region
\begin{align}
                R_1 &\leq 
                             C\left(\alpha^2 + 2b \rho_1 \alpha \beta +  b^2 \beta^2 \right) \nonumber\\
                R_2 &\leq 
                                         C\left(
                                             \frac{a^2 \gamma^2 + 2a \rho_2 \gamma \delta + \delta^2}
                                             {a^2 \alpha^2 + 2a \rho_1 \alpha \beta +  \beta^2+1}
                                             \right).
\end{align}

\subsubsection{Numerical comparison}
\begin{figure}[t]
\centering
\includegraphics[scale=0.45]{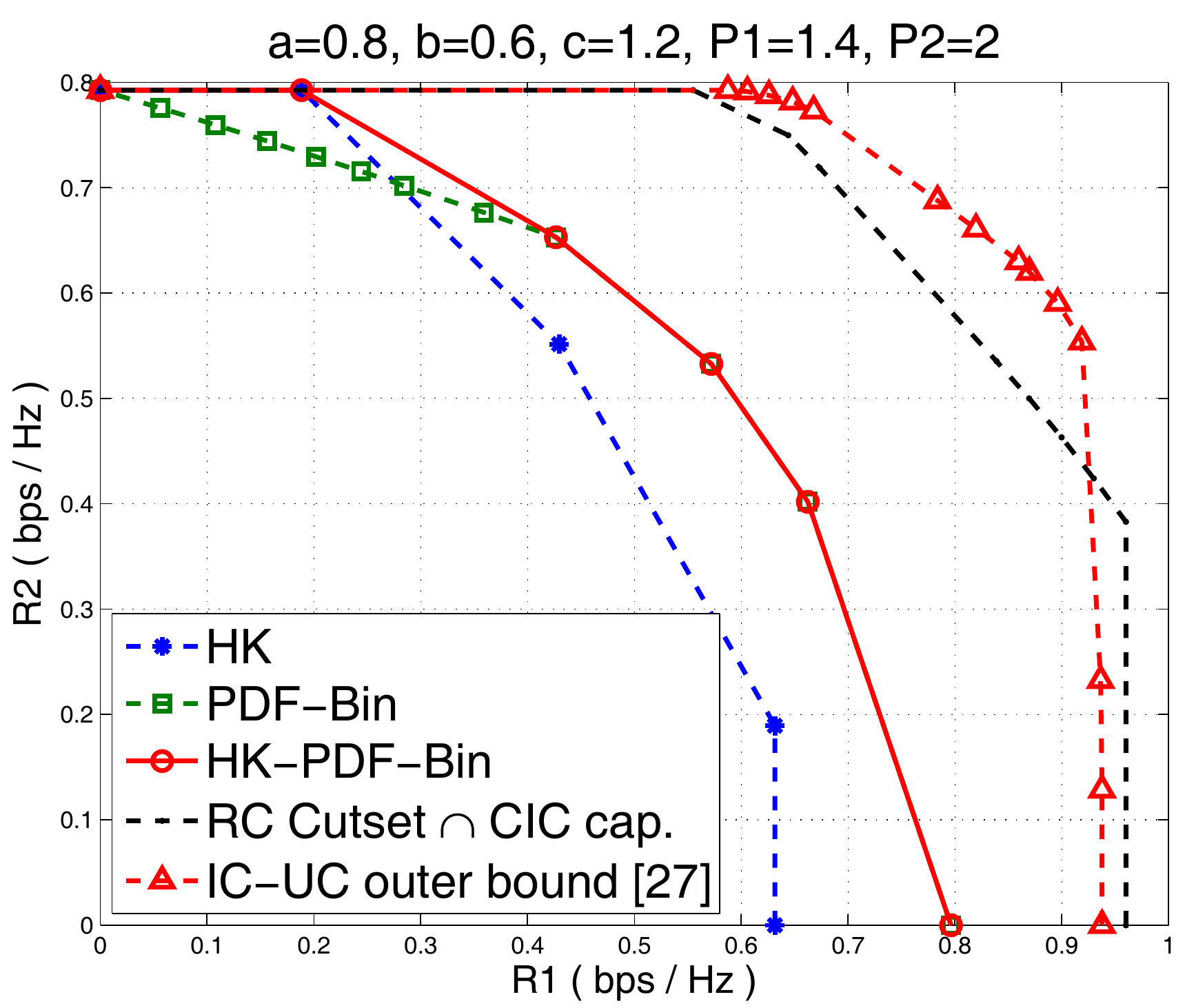}
\caption{Rate regions for full-duplex schemes in the Gaussian cognitive relay channel. }
\label{fig:FD_PDFBin_HKPDFBin_comparison}
\end{figure}

Figure \ref{fig:FD_PDFBin_HKPDFBin_comparison} shows the comparison in the full-duplex mode among the Han-Kobayashi scheme,
PDF-binning, HK-PDF-binning, and the outer bound.
We can see that the proposed HD-PDF-binning scheme contains both the Han-Kobayashi and the PDF-binning rate regions, as analyzed in Remark \ref{rem:FD_DMC_Combined_Inclusions}.
Note that the outer bound is the intersection of the two bounds drawn and is loose as this bound is not achievable.
However, we observe that as $b$ decreases, the HK-PDF-binning rate region becomes closer to the outer bound.

%

\section{Half-Duplex Coding schemes}
In this section, we adapt the two full-duplex schemes to the half-duplex mode.
The half-duplex schemes are also based on rate splitting, superposition encoding, partial
decode-forward binning and Han-Kobayashi coding.
There are several differences between the half- and full-duplex cases. First, under the half-duplex constraint, no node can
both transmit and receive at the same time, thus leading us to divide each transmission block into two phases.
In the first phase, $S_1$ sends a message to $S_2$, $T_1$ and $T_2$, while $S_2$ only receives but sends no messages.
In the second phase, both $S_1$  and $S_2$ send messages concurrently.
Second, $S_1$ sends different message parts in different phases.
Specifically, $S_1$ only sends one part of its message to other nodes in the first phase, but will send all message parts in the second phase.
Third, there is no block Markovity in the encoding since the superposition coding can be done between 2 phases of the same block instead of between 2 consecutive blocks.
Finally, both $T_1$ and $T_2$ apply joint decoding only at the end of the second phase to make use of the received signals in both phases.

\subsection{Half-duplex partial decode-forward binning scheme}
\begin{figure*}[t]
\centering
\includegraphics[scale=0.7]{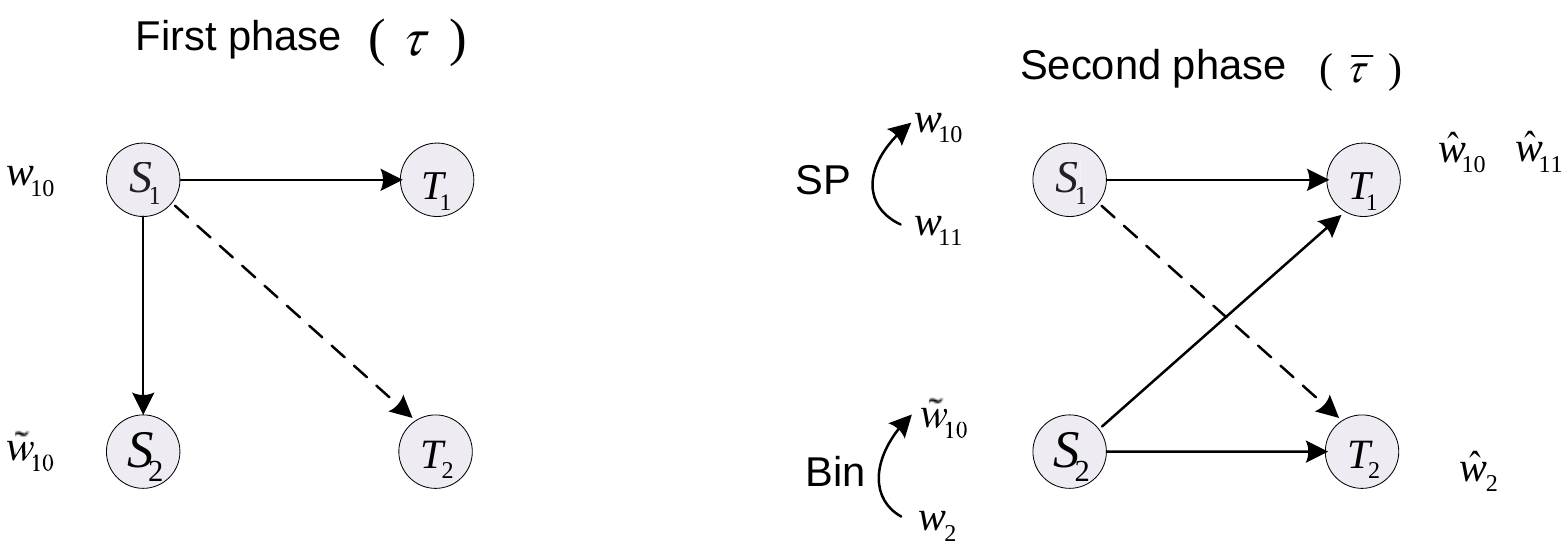}
\caption{Coding structure for the half-duplex CRC based on partial decode-forward binning.}
\label{fig:HD_DM_CRC_PDF_Bin}
\end{figure*}
The coding structure for the half-duplex PDF-binning scheme is shown in Figure \ref{fig:HD_DM_CRC_PDF_Bin}.
This scheme uses superposition encoding at the first sender, and partial decode-forward relaying and binning at the second sender.
The first sender $S_1$ splits its message into two parts $(w_{10}, w_{11})$, corresponding to the forwarding and private parts.
In the first phase, $S_1$ sends a codeword $X_{11}^{\tau n}$ containing the message part $w_{10}$;
$S_2$ sends no information but only listens. At the end of the first phase, $S_2$ decodes $w_{10}$ from $S_1$ .
Note that neither $T_1$ nor $T_2$ decodes during this phase.
In the second phase, $S_1$ sends a codeword $X_{12}^{\tau n}$ containing both parts $(w_{10}, w_{11})$, in which $w_{11}$ is superimposed on $w_{10}$.
$S_2$ now sends both $w_2$ and $w_{10}$ and uses Gelfand-Pinsker binning technique to bin against the codeword $X_{11}^{n}(w_{10})$ decoded from $S_1$ in the first phase.
At the destinations, $T_1$ uses joint decoding to decode $(w_{10},w_{11})$ from the signals received in both phases;
$T_2$ decodes $w_{2}$ using the received signal in the second phase.

Specifically, at the end of the first phase, $S_2$ searches for a unique $\hat{w}_{10}$ such that
\begin{align*}
    (x_{11}^{ \tau n}{(\hat{w}_{10})}, \mathbf{y})
    \in A_\epsilon^{(\tau n)}(P_{X_{11} Y}),
\end{align*}
where $\mathbf{y}$ is the received signal vector at $S_2$ in the first phase.
It then performs binning by looking for a $v_2$ such that
\begin{align*}
    (x_{11}^{ \bar{\tau} n}{(\hat{w}_{10})}, u_{2}^{ \bar{\tau} n}{(w_{2},v_{2})})
    \in A_\epsilon^{(\bar{\tau} n)}(P_{X_{1} U_{2}}),
\end{align*}
and sends $x_{22}^{ \bar{\tau} n}{(\mathbf{x}_{11}, \mathbf{u}_{2})}$ as a function of $x_{11}^{ \bar{\tau} n}$ and $u_{2}^{ \bar{\tau} n}$ in the second phase.

At the end of the second phase, $T_1$ searches for a unique ($\hat{w}_{10},\hat{w}_{11}$) such that
\begin{align*}
    ( x_{11}^{\bar{\tau}n}{(\hat{w}_{10})}, x_{12}^{\bar{\tau}n}(\hat{w}_{11}|\hat{w}_{10}), \mathbf{y_{12}})&\in A_\epsilon^{(\bar{\tau}n)}(P_{X_{11} X_{12} Y_{12}})&\\
     \text{and}  \quad  (x_{11}^{ \tau n}{(\hat{w}_{10})},  \mathbf{y_{11}}) &\in A_\epsilon^{(\tau n)}(P_{X_{11} Y_{11}}),&
\end{align*}
where $\mathbf{y_{11}}$ and $\mathbf{y_{12}}$ indicate the received vectors at $T_1$ during the first and second phases, respectively.
$T_2$ treats the codeword $X_{11}^n$ as the state and decodes $w_{2}$.
It searches for a unique $\hat{w}_2$ for some $\hat{v}_{2} $  such that
\begin{align*}
    (u_2^{\bar{\tau}n}(\hat{w}_{2} ,\hat{v}_{2}), \mathbf{y_{22}}) \in   A_\epsilon^{(\bar{\tau}n)}(P_{U_2 Y_{22}}),
\end{align*}
where $\mathbf{y_{22}}$ is the received vector at $T_2$ in the second phase.
\begin{thm}
\label{thm:HD_DMC_Bin}
The convex hull of the following rate region is achievable for the half-duplex cognitive relay channel using PDF-binning:
\begin{align}\label{eq:DMC_PDF_Bin}
\bigcup_{\substack{P_{3}}}
\left\{\begin{array}{ll}
             R_{1} &\leq  \tau  I(X_{11};Y) + \bar{\tau} I(X_{12};Y_{12}|X_{11})\\
             R_{1} &\leq  \tau  I(X_{11}; Y_{11}) + \bar{\tau} I(X_{11}, X_{12};Y_{12})\\
             R_2 &\leq \bar{\tau} I(U_2;Y_{22})- \bar{\tau} I( U_2; X_{11})
\end{array}\right.
\end{align}
where
\begin{align*}
    P_{3} = &p(x_{11})p(x_{12}|x_{11})p(u_2|x_{11})p(x_{22}|x_{11},u_2)
            p_c(y_{11},y_{21},y,y_{12},y_{22}|x_{11},x_{12},x_{22}),
\end{align*}
and $p_c$ is given in \eqref{eq:p_c}, $\bar{\tau} = 1 - \tau, 0\leq \tau \leq 1$.
\end{thm}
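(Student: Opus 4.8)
The plan is to mirror the full-duplex PDF-binning proof of Theorem \ref{thm:FD_DMC_Bin}, but operating within a single block split into two phases of lengths $\tau n$ and $\bar{\tau} n$ rather than across consecutive block-Markov blocks. First I would split $w_1 = (w_{10}, w_{11})$ into a forwarding part and a private part with rates $R_{10}$ and $R_{11}$ so that $R_1 = R_{10} + R_{11}$, and fix the factorization $p(x_{11})p(x_{12}|x_{11})p(u_2|x_{11})p(x_{22}|x_{11},u_2)$. For the codebook I would draw $2^{nR_{10}}$ codewords $x_{11}^n(w_{10})$ i.i.d.\ from $p(x_{11})$, whose phase-1 segment $x_{11}^{\tau n}$ is what $S_1$ actually transmits in the first phase and whose phase-2 segment $x_{11}^{\bar{\tau} n}$ serves simultaneously as the superposition cloud center for $x_{12}^{\bar{\tau} n}(w_{11}|w_{10})\sim p(x_{12}|x_{11})$, as the binning state at $S_2$, and as the coherent-forwarding reference known to both $S_1$ and $S_2$. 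I would then draw $2^{n(R_2+R_2')}$ binning codewords $u_2^{\bar{\tau} n}(w_2, v_2)$ i.i.d.\ from the marginal $p(u_2)$ and set $x_{22}^{\bar{\tau} n}$ as the prescribed function of $(x_{11}^{\bar{\tau} n}, u_2^{\bar{\tau} n})$.

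The error analysis then proceeds step by step, collecting one rate constraint from each operation. At the end of phase 1, $S_2$ decodes $w_{10}$ from its length-$\tau n$ observation, which by the packing lemma is reliable provided $R_{10}\leq \tau I(X_{11}; Y)$; knowing $w_{10}$, the binning step (covering lemma) finds a suitable $v_2$ with high probability provided $R_2' \geq \bar{\tau} I(U_2; X_{11})$. At $T_2$, treating $x_{11}^{\bar{\tau} n}$ as an unknown state and jointly typical-decoding $u_2^{\bar{\tau} n}$ against $y_{22}^{\bar{\tau} n}$ yields $R_2 + R_2' \leq \bar{\tau} I(U_2; Y_{22})$. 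The $T_1$ analysis is the one requiring care: since $w_{11}$ lives only in phase 2 while $w_{10}$ lives in both phases, I would separate the error events by which message is wrong. A wrong $w_{11}$ with correct $w_{10}$ is constrained by the phase-2 observation alone, giving $R_{11}\leq \bar{\tau} I(X_{12}; Y_{12}|X_{11})$, whereas a wrong $w_{10}$ forces mismatched codewords in both phases and must combine the two observations, giving $R_{10} + R_{11}\leq \tau I(X_{11}; Y_{11}) + \bar{\tau} I(X_{11}, X_{12}; Y_{12})$.

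Finally I would eliminate the auxiliary rates $R_{10}, R_{11}, R_2'$ by Fourier--Motzkin, using $R_1 = R_{10} + R_{11}$. Adding the $S_2$ decoding bound to the $T_1$ private bound produces the first inequality $R_1 \leq \tau I(X_{11}; Y) + \bar{\tau} I(X_{12}; Y_{12}|X_{11})$; the joint $T_1$ bound gives the second inequality directly; and combining $R_2' \geq \bar{\tau} I(U_2; X_{11})$ with $R_2 + R_2' \leq \bar{\tau} I(U_2; Y_{22})$ yields the third inequality $R_2 \leq \bar{\tau} I(U_2; Y_{22}) - \bar{\tau} I(U_2; X_{11})$, which is exactly region \eqref{eq:DMC_PDF_Bin}.

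I expect the main obstacle to be the $T_1$ two-phase joint-typicality argument: one must verify that the two observation windows are conditionally independent given the transmitted codewords so that the per-phase mutual information terms simply add with weights $\tau$ and $\bar{\tau}$, and then carefully bound the probability of the ``wrong $w_{10}$'' event in which the phase-1 codeword $x_{11}^{\tau n}$ and the phase-2 pair $(x_{11}^{\bar{\tau} n}, x_{12}^{\bar{\tau} n})$ are simultaneously incorrect. The remaining steps are routine adaptations of the standard packing and covering arguments to the length-$\tau n$ and length-$\bar{\tau} n$ segments, and, because there is no block Markovity, the single-block error analysis is in fact simpler than the forward joint decoding used in the full-duplex case.
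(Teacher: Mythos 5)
Your proposal matches the paper's own proof in all essentials: the same code construction (length-$n$ codewords $x_{11}^n(w_{10})$ whose two segments serve as the phase-1 transmission and the phase-2 superposition/binning/forwarding reference, with $x_{12}^n$ superimposed, $u_2^n$ binned via the covering step $R_2' \geq \bar{\tau} I(U_2;X_{11})$, and $x_{22}^n$ generated from $p(x_{22}|x_{11},u_2)$), the same five rate constraints from $S_2$, $T_1$ (two-phase joint typicality), and $T_2$ (Gelfand--Pinsker decoding), and the same Fourier--Motzkin elimination with $R_1 = R_{10}+R_{11}$. The technical point you flag about conditional independence of the two observation windows given the codewords is exactly what the paper's memoryless channel assumption in \eqref{eq:p_c} delivers, so your argument goes through as the paper's does.
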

\begin{proof}
See Appendix \ref{proof:HD_DMC_Bin} for the detailed proof.
\end{proof}

\begin{rem}The maximum rate for each user.
    \begin{itemize}
    \item
    The first user $S_1$ achieves the maximum rate of half-duplex partial decode-forward relaying if we set $U_2=\emptyset$.
    \begin{align}\label{eq:HD_DM_Bin_maxR1}
        R_{1}^{\max} =  \max_{\substack{0\leq \tau \leq 1 \\p(x_{11},x_{12})}}\min\{&\tau  I(X_{11};Y) + \bar{\tau} I(X_{12};Y_{12}|X_{11}), 
                          \tau  I(X_{11}; Y_{11}) + \bar{\tau} I(X_{11}, X_{12};Y_{12})\}.
    \end{align}
    This half-duplex $R_{1}^{\max}$ is slightly smaller than in the full-duplex case of \eqref{eq:FD_DMC_Bin_maxR1}.
    \item
    The second user $S_2$ achieves the maximum rate of Gelfand-Pinsker's binning if we set $\tau=0$, $X_{12}=X_{11}$.
    \begin{align}\label{eq:HD_DM_Bin_maxR2}
        R_{2}^{\max} = \max_{\substack{p(x_{11},u_2)p(x_{22}|x_{11},u_2)}} \{I(U_2;Y_{22})- I( U_2; X_{11})\}.
    \end{align}
    This half-duplex $R_{2}^{\max}$ is the same as in the full-duplex case of \eqref{eq:FD_DMC_Bin_maxR2}.
    Even though this equality seems somewhat surprising, it is indeed the case in the limit of $\tau \rightarrow 0$, given
    that user 1 sends just enough information for $S_2$ to be able to decode completely in the first phase and then bin against it in the second phase.
    At $\tau = 0$ and $X_{12} = X_{11} = \emptyset$, $S_2$ can achieve the interference-free rate.
    \end{itemize}
\end{rem}
\subsection{Half-duplex Han-Kobayashi PDF-binning scheme}\label{sec:HD_DMC_Combined}

\begin{figure*}[t]
\centering
\includegraphics[scale=0.7]{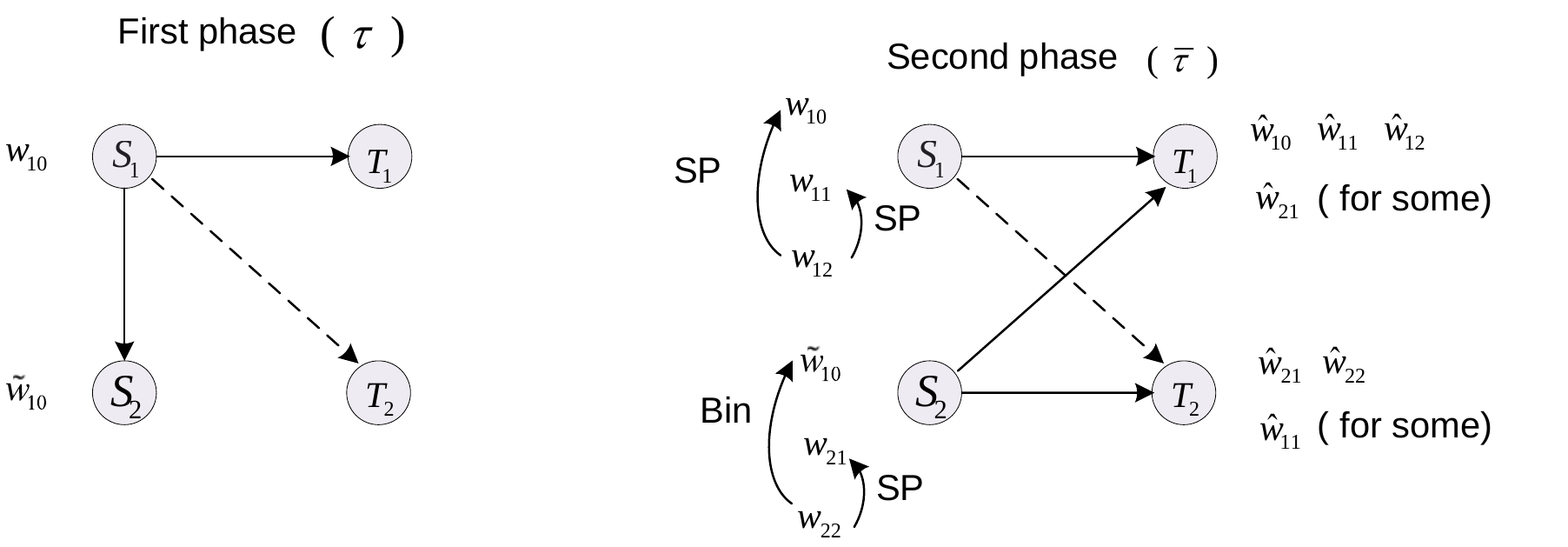}
\caption{Coding structure for the half-duplex CRC based on Han-Kobayashi partial decode-forward binning.}
\label{fig:HD_DM_CRC_Combined}
\end{figure*}

The first half-duplex coding scheme utilizes PDF-binning at the second sender and achieves the maximum possible rates for both user $1$ and user $2$.
But it does not include the Han-Kobayashi scheme for the interference channel.
In this section, we extend this scheme to combine with the Han-Kobayashi scheme by further splitting the messages in the second phase.

The coding structure for half-duplex HK-PDF-binning is shown in Figure \ref{fig:HD_DM_CRC_Combined}.
The encoding and decoding procedure in the first phase is the same as that of half-duplex PDF-binning.
The major difference is in the second phase.
Message $w_1$ of the first sender $S_1$ is split into three parts ($w_{10},w_{11},w_{12}$), corresponding to the forwarding, public and private parts.
Message $w_2$ is split into 2 parts ($w_{21}, w_{22}$), corresponding to the public and private parts.
We generate independent codewords for messages $w_{10}$ and $w_{11}$ and superimpose $w_{12}$ on both of them.
In the first phase, $S_1$ sends a codeword containing $w_{10}$, while $S_2$ does not send any message.
At the end of the first phase, $S_2$ decode $\tilde{w}_{10}$ using the received signal vector $\mathbf{y}$ and then bins its private part $w_{22}$ against the decoded message $\tilde{w}_{10}$, conditionally on knowing the public part $w_{21}$.
In the second phase, $S_1$ sends a codeword containing ($w_{10}, w_{11}, w_{12}$) while $S_2$ sends the binned signal containing ($w_{10}, w_{21}, w_{22}$).
At the end of the second phase, $T_1$ uses joint decoding across both phases and searches for a unique triple ($\hat{w}_{10},\hat{w}_{11},\hat{w}_{12}$) for some $w_{21}$.
$T_2$ also uses joint decoding based on the received signal in the second phase and searches for a unique pair $(\hat{w}_{21},\hat{w}_{22})$
for some $\hat{w}_{11}$.

Specifically, in the first phase, $S_1$ sends $x_{11}^{ \tau n}(w_{10})$; $S_2$ does not transmit.
In the second phase, $S_1$ sends $x_{12}^{\bar{\tau}n}(w_{12}|w_{10},w_{11})$;
      $S_2$ searches for some $v_{22}$ such that
            \begin{align}\label{eq:HD_DMC_Bin_binStep}
                (x_{11}^{\bar{\tau}n}{(w_{10})},&u_{21}^{\bar{\tau}n}( w_{21}), u_{22}^{\bar{\tau}n}( w_{22}, v_{22}|w_{21})  )
                \in A_\epsilon^{(\bar{\tau}n)}(P_{X_{11} U_{22}|U_{21}}),
            \end{align}
and then sends $x_{2}^{ \bar{\tau} n}(w_{10},w_{21},w_{22},v_{22})$.

For decoding, at the end of the first phase, $S_2$ searches for a unique $\hat{w}_{10}$ such that
\begin{align}
    (x_{11}^{ \tau n}{(\hat{w}_{10})}, \mathbf{y})
    \in A_\epsilon^{(\tau n)}(P_{X_{11} Y}).
\end{align}
At the end of the second phase, $T_1$ searches for a unique ($\hat{w}_{10},\hat{w}_{11},\hat{w}_{12}$) for some $\hat{w}_{21}$ such that
\begin{align}\label{eq:HD_DMC_Bin_jointDec1}
    ( x_{11}^{\bar{\tau}n}{(\hat{w}_{10})},u_{11}^{\bar{\tau}n}{(\hat{w}_{11})}, x_{12}^{\bar{\tau}n}(\hat{w}_{12}|\hat{w}_{10},\hat{w}_{11}),
    u_{21}^{\bar{\tau}n}{(\hat{w}_{21})}, \mathbf{y_{12}})&\in A_\epsilon^{(\bar{\tau}n)}(P_{X_{11} U_{11} X_{12} U_{21} Y_{12}})\nonumber\\
     \text{and} \quad x_{11}^{ \tau n}{(\hat{w}_{10})}, \mathbf{y_{11}}) &\in A_\epsilon^{(\tau n)}(P_{ X_{11} Y_{11}}).
\end{align}
$T_2$ searches for a unique $(\hat{w}_{21},\hat{w}_{22})$ for some $(\hat{w}_{11},\hat{v}_{22})$  such that
\begin{align}\label{eq:HD_DMC_Bin_jointDec2}
    (u_{11}^{\bar{\tau}n}(\hat{w}_{11} ),&u_{21}^{\bar{\tau}n}(\hat{w}_{21} ), u_{22}^{\bar{\tau}n}(\hat{w}_{22},\hat{v}_{22}|\hat{w}_{21} ),\mathbf{y_{22}}) 
    \in   A_\epsilon^{(\bar{\tau}n)}(P_{U_{11} U_{21} U_{22} Y_{22}}).
\end{align}

Note that similar to the full-duplex scheme in Section \ref{sec:FD_DMC_Combined}, we use conditional binning in step \eqref{eq:HD_DMC_Bin_binStep},
and joint decoding at both destinations in steps \eqref{eq:HD_DMC_Bin_jointDec1} and \eqref{eq:HD_DMC_Bin_jointDec2} (see Remark \ref{rem:FD_DMC_Combined_features}).
\begin{thm}
\label{thm:HD_DMC_combined}
The convex hull of the following rate region is achievable for the half-duplex cognitive relay channel using the HK-PDF-binning scheme:
\begin{align}\label{eq:HD_DMC_combined}
\bigcup_{\substack{P_{4}}}
\left\{\begin{array}{ll}
                             R_{1} &\leq \min\{I_2 + I_5, I_6\}\\
                             R_{2} &\leq I_{12}- I_1 \\
                     R_{1} + R_{2} &\leq \min\{I_2 + I_7,I_8 \} + I_{13}- I_1 \\
                     R_{1} + R_{2} &\leq \min\{I_2 + I_3,I_4\} + I_{14}- I_1 \\
                     R_{1} + R_{2} &\leq \min\{I_2 + I_9, I_{10} \} + I_{11} - I_1 \\
                    2R_{1} + R_{2} &\leq \min\{ I_2 + I_3,I_4\} 
                    + \min\{I_2 + I_9, I_{10} \} + I_{13}- I_1\\
                    R_{1} + 2R_{2} &\leq \min\{I_2 + I_7, I_8 \} + I_{11} - I_1
                                     + I_{14}- I_1
\end{array}\right.
\end{align}

where
\begin{align}
    P_{4} = &p(x_{11})p(u_{11})p(x_{12}|u_{11},x_{11})p(u_{21})p(u_{22}|u_{21},x_{11})\nonumber\\
            &p(x_{22}|x_{11},u_{21},u_{22})
            p_c(y_{11},y_{21},y,y_{12},y_{22}|x_{11},x_{12},x_{22}),
\end{align}
with $p_c$ as given in \eqref{eq:p_c} and
\begin{align}
       I_1 &= \bar{\tau} I( U_{22}; X_{11}|U_{21}) \nonumber \\
       I_2 &= \tau  I(X_{11};Y) \nonumber \\
       I_3 &= \bar{\tau} I(U_{12};Y_{12}|X_{11},U_{11},U_{21})  \nonumber \\
       I_4 &= \tau I(X_{11};Y_{11}) + \bar{\tau} I(X_{11},X_{12};Y_{12}|U_{11},U_{21})   \nonumber \\
       I_5 &= \bar{\tau} I(U_{11},X_{12};Y_{12}|X_{11},U_{21}) \nonumber \\
       I_6 &= \tau I(X_{11};Y_{11}) + \bar{\tau} I(X_{11},U_{11},X_{12};Y_{12}|U_{21})  \nonumber \\
       I_7 &= \bar{\tau} I(X_{12},U_{21};Y_{12}|X_{11},U_{11})\nonumber\\
       I_8 &= \tau I(X_{11};Y_{11}) + \bar{\tau} I(X_{11},X_{12},U_{21};Y_{12}|U_{11})  \nonumber
\end{align}
\begin{align}
       I_9 &= \bar{\tau} I(U_{11},X_{12},U_{21};Y_{12}|X_{11})  \nonumber\\
       I_{10} &= \tau I(X_{11};Y_{11}) + \bar{\tau} I(X_{11},U_{11},X_{12},U_{21};Y_{12}) \nonumber\\
       I_{11} &= \bar{\tau} I(U_{22};Y_{22}|U_{21},U_{11})  \nonumber\\
       I_{12} &= \bar{\tau} I(U_{21},U_{22};Y_{22}|U_{11})  \nonumber\\
       I_{13} &= \bar{\tau} I(U_{11},U_{22};Y_{22}|U_{21}) \nonumber\\
       I_{14} &= \bar{\tau} I(U_{11},U_{21},U_{22};Y_{22})
\end{align}
\end{thm}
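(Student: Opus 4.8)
The plan is to mirror the achievability proof of the full-duplex HK-PDF-binning scheme (Theorem~\ref{thm:FD_DMC_combined}), adapting it to the two-phase half-duplex structure in the same way the half-duplex PDF-binning scheme (Theorem~\ref{thm:HD_DMC_Bin}) was obtained from its full-duplex counterpart. I fix the joint distribution $P_4$ and use random coding over a single block of $n$ channel uses, split into a first phase of $\tau n$ uses and a second phase of $\bar\tau n$ uses, with no block Markovity; instead the superposition structure is applied between the two phases of the same block.

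First I would generate the codebook. In the first phase, independently generate $2^{nR_{10}}$ codewords $x_{11}^{\tau n}(w_{10})$. For the second phase, generate $2^{nR_{11}}$ codewords $u_{11}^{\bar\tau n}(w_{11})$ independently of $x_{11}$, then superimpose $x_{12}^{\bar\tau n}(w_{12}|w_{10},w_{11})$ on both. For $S_2$, generate $2^{nR_{21}}$ codewords $u_{21}^{\bar\tau n}(w_{21})$, then a bin of $2^{n(R_{22}+R_{22}')}$ codewords $u_{22}^{\bar\tau n}(w_{22},v_{22}|w_{21})$ superimposed on $u_{21}$, and finally $x_{22}^{\bar\tau n}$ as a function of $x_{11}$, $u_{21}$, $u_{22}$. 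Encoding follows the description preceding the theorem: $S_1$ sends $x_{11}$ in the first phase; at the end of that phase $S_2$ decodes $w_{10}$ and performs the conditional binning step \eqref{eq:HD_DMC_Bin_binStep} to select $v_{22}$; both senders transmit in the second phase.

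The core of the argument is the error analysis at the three receivers. Decoding $w_{10}$ at $S_2$ from the first-phase observation succeeds if $R_{10}\leq \tau I(X_{11};Y)=I_2$, and the binning step succeeds with high probability if $R_{22}'\geq \bar\tau I(U_{22};X_{11}|U_{21})=I_1$. The main obstacle is the relaxed joint decoding at $T_1$: I would enumerate the error events by which subset of $(\hat w_{10},\hat w_{11},\hat w_{12},\hat w_{21})$ differs from the truth, observing that a wrong $\hat w_{10}$ spoils joint typicality in \emph{both} phases, so its contribution splits as a first-phase $\tau I(X_{11};Y_{11})$ term plus a second-phase $\bar\tau$-weighted term, whereas errors confined to the second phase contribute only the $\bar\tau$-weighted term on $Y_{12}$. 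Carrying this out produces exactly the eight constraints defining $I_3$ through $I_{10}$. Decoding at $T_2$ is the Gelfand-Pinsker-style joint decoding of \eqref{eq:HD_DMC_Bin_jointDec2}; since $u_{11}$ and $u_{21}$ are encoded independently of the state $x_{11}$, the standard analysis yields the four constraints $R_{22}+R_{22}'\leq I_{11}$, $R_{21}+R_{22}+R_{22}'\leq I_{12}$, $R_{11}+R_{22}+R_{22}'\leq I_{13}$, and $R_{11}+R_{21}+R_{22}+R_{22}'\leq I_{14}$.

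Finally, setting $R_1=R_{10}+R_{11}+R_{12}$ and $R_2=R_{21}+R_{22}$, I would collect the binning constraint $R_{22}'\geq I_1$, the $S_2$ constraint $R_{10}\leq I_2$, and the eight $T_1$ and four $T_2$ constraints, then apply Fourier-Motzkin elimination to remove $R_{22}'$, $R_{10}$, $R_{11}$, $R_{12}$, $R_{21}$, $R_{22}$. Substituting the minimal binning rate $R_{22}'=I_1$ into the $T_2$ constraints accounts for the $-I_1$ offsets appearing throughout region \eqref{eq:HD_DMC_combined}. I expect the bookkeeping in the $T_1$ error analysis — correctly pairing the first- and second-phase typicality events for each error pattern so that the $\tau$ and $\bar\tau$ weights land on the right mutual information terms — together with the subsequent elimination, to be the most delicate steps, precisely as in the full-duplex proof but complicated by the phase split.
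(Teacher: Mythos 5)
Your proposal is correct and follows essentially the same route as the paper's own proof in Appendix~\ref{proof:HD_DMC_Combined}: the same single-block two-phase codebook (independent $x_{11}$ and $u_{11}$ with $x_{12}$ superimposed on both, conditional binning of $u_{22}$ against $x_{11}$ given $u_{21}$), the same relaxed joint decoding at $T_1$ across both phases and Gelfand--Pinsker-style joint decoding at $T_2$, and the same Fourier--Motzkin elimination with $R_1=R_{10}+R_{11}+R_{12}$, $R_2=R_{21}+R_{22}$. Your observation that constraints involving a wrong $\hat{w}_{10}$ pick up the extra $\tau I(X_{11};Y_{11})$ term while second-phase-only errors do not is precisely the mechanism producing the paper's eight $T_1$ constraints, and substituting $R_{22}'=I_1$ indeed yields the $-I_1$ offsets.
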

where $\bar{\tau} = 1 - \tau, 0\leq \tau \leq 1$.
\begin{proof}
See Appendix \ref{proof:HD_DMC_Combined} for the details.
\end{proof}
\begin{rem}Inclusion of half-duplex PDF-binning and Han-Kobayashi schemes.
    \begin{itemize}
    \item
    The half-duplex HK-PDF-binning scheme becomes half-duplex PDF-binning if $U_{11}=U_{21}=\emptyset$.
    \item
    The half-duplex HK-PDF-binning scheme becomes the Han-Kobayashi scheme if $\tau=0$, $X_{11}=\emptyset$ and $X_{22} = U_{22}$.
    \item
    The maximum rates for $S_1$ and $S_2$ are the same as in \eqref{eq:HD_DM_Bin_maxR1} and \eqref{eq:HD_DM_Bin_maxR2}.
    \end{itemize}
\end{rem}

\begin{figure*}[t]
\centering
\includegraphics[scale=0.60]{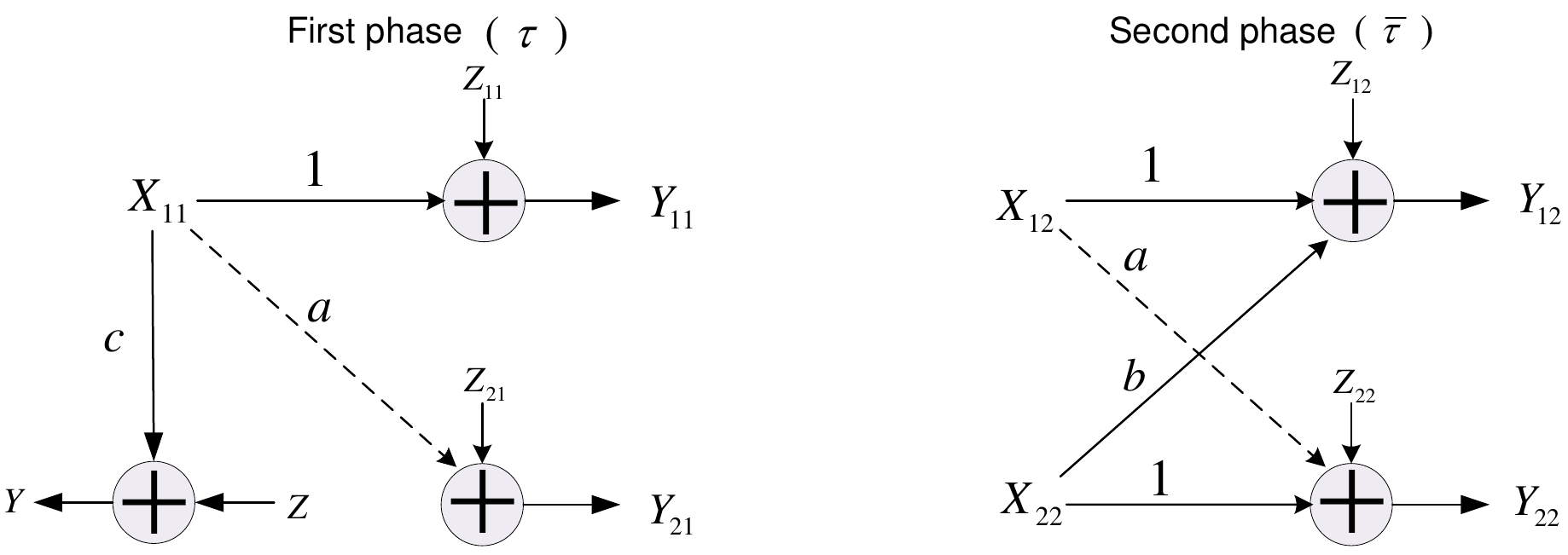}
\caption{The half-duplex Gaussian cognitive relay channel model.}
\label{fig:model_HD_CR}
\end{figure*}
\section{Half-duplex Gaussian CRC rate regions}
\subsection{ Half-duplex Gaussian CRC model}
The Gaussian model for the half-duplex cognitive relay channel is shown in Figure \ref{fig:model_HD_CR}. The input-output signals can be represented as
\begin{align}
\text{First phase :}\quad \;\, Y &= c X_{11} + Z, \nonumber\\
        Y_{11} &= X_{11} + Z_{11}, \nonumber\\
        Y_{21} &= a X_{11} + Z_{21};\label{eq:channel_first_phase}\\
\text{Second phase :} \quad Y_{12} &= X_{12} + b X_{22} + Z_{12}, \nonumber\\
        Y_{22} &= a X_{12} + X_{22} + Z_{22},\label{eq:channel_second_phase}
\end{align}
where $X_{11}$ is the transmit signal of $S_1$ in the first phase, $X_{12}$ and $X_{22}$ are the transmit signals of $S_1$ and $S_2$ in the second phase, respectively.
$Y$, $Y_{11}$ and $Y_{21}$ are the received signals at $S_2$, $T_1$ and $T_2$ in the first phase.
$Y_{21}$ and $Y_{22}$ are the received signals at $T_1$ and $T_2$ in the second phase. $a$, $b$, and $c$ are the channel gains where the direct links are normalized to $1$ as in the standard interference channel\cite{Carleial78}.
$Z$, $Z_{11}$, $Z_{21}$, $Z_{12}$, and $Z_{22}$ are independent white Gaussian noises with unit variance.

In the following section, we only provide the analysis for the half-duplex Gaussian HK-PDF-binning scheme and omit the analysis for half-duplex PDF-binning, which is a special case of HK-PDF-binning.

\subsection{Signaling and rates for the half-duplex HK-PDF-binning}\label{sec:HD_Gaussian_Combined}


In a Gaussian channel, input signals for the HK-PDF-binning scheme as in Section \ref{sec:HD_DMC_Combined} can be represented as
\begin{align}\label{eq:HD_Gaussian_Combined_signaling}
    X_{11} &= \alpha_1 S_{10}(w_{10}), \\
    X_{12} &= \alpha_2 S_{10}(w_{10}) + \beta_2  S_{11}(w_{11}) + \gamma_2 S_{12}(w_{12}), \nonumber \\
    X_{22} &= \theta S_{21}(w_{21}) + \mu\left(\rho S_{10}(w_{10}) +\sqrt{1-\rho^2}S_{22}\right),\nonumber \\
    U_{22} &= X_{22} + \lambda S_{10} =(\mu\rho+\lambda)S_{10}+\theta S_{21} +\mu\sqrt{1-\rho^2}S_{22},\nonumber
\end{align}
where $S_{10}$, $S_{11}$, $S_{12}$, $S_{21}$ and $S_{22}$ are independent $\mathcal{N}(0,1)$ random variables that encode $w_{10}$, $w_{11}$, $w_{12}$, $w_{21}$ and $w_{22}$ respectively,
$U_{22}$ is the Gelfand-Pinsker binning variable that encodes $w_{22}$.
The parameter $\rho$ is the correlation factor between the transmit signal $X_{22}$ and the state $X_{11}$, similar to that in Section \ref{sec:signal_FD_PDF_Binning}.
$\lambda$ is a parameter for binning.
Parameters $\alpha_1$, $\alpha_2$, $\beta_2$, $\gamma_2$, $\theta$ and $\mu$ are the corresponding power allocations that satisfy the power constraints
\begin{align}\label{eq:combined_power_cons}
    \tau \alpha_1^2 + \bar{\tau} (\alpha_2^2 + \beta_2^2 + \gamma_2^2) &\leq P_1, \nonumber\\
    \bar{\tau} (\mu ^2 + \theta^2) &\leq P_2,
\end{align}
where $\tau$ and $\bar{\tau}=1-\tau$ are the time duration for the two phases.

Substitute $X_{11}$, $X_{12}$, $X_{22}$ into $Y$, $Y_{11}$, $Y_{21}$, $Y_{12}$, $Y_{22}$ in \eqref{eq:channel_first_phase} and \eqref{eq:channel_second_phase}, we get
\begin{align}
    Y      &= c \alpha_1 S_{10} + Z, \nonumber   \\
    Y_{11} &= \alpha_1 S_{10} + Z_{11},\nonumber   \\
    Y_{21} &= a\alpha_1 S_{10} + Z_{21}, \nonumber   \\
    Y_{12} &=  (\alpha_2+b\mu\rho) S_{10} + \beta_2 S_{11} + \gamma_2 S_{12} + b\theta S_{21} 
            + b\mu\sqrt{1-\rho^2}S_{22} + Z_{12},\nonumber   \\
    Y_{22} &= (a\alpha_2+\mu\rho) S_{10} + a\beta_2  S_{11} + a\gamma_2 S_{12}+ \theta S_{21} 
              +  \mu\sqrt{1-\rho^2}S_{22} + Z_{22}.
\end{align}
\begin{cor}\label{cor:HD_Gaussian_Combined}
The achievable rate region for the half-duplex cognitive relay channel using Han-Kobayashi PDF-binning
 is  the convex hull of all rate pairs ($R_1$, $R_2$) satisfying
\begin{align}\label{eq:HD_Gaussian_Combined}
                             R_{1} &\leq \min\{I_2 + I_5, I_6\}, \nonumber\\
                             R_{2} &\leq I_{12}- I_1,\nonumber \\
                     R_{1} + R_{2} &\leq \min\{I_2 + I_7,I_8 \} + I_{13}- I_1,\nonumber \\
                     R_{1} + R_{2} &\leq \min\{I_2 + I_3,I_4\} + I_{14}- I_1,\nonumber \\
                     R_{1} + R_{2} &\leq \min\{I_2 + I_9, I_{10} \} + I_{11} - I_1, \nonumber\\
                    2R_{1} + R_{2} &\leq \min\{ I_2 + I_3,I_4\} + \min\{I_2 + I_9, I_{10} \} + I_{13}- I_1,\nonumber \\
                    R_{1} + 2R_{2} &\leq \min\{I_2 + I_7, I_8 \}+ I_{11} - I_1 + I_{14}- I_1,
\end{align}
where
\begin{align*}
    I_2 &= \tau C\left(c^2 \alpha_1^2 \right),\\
    I_3 &= \bar{\tau} C\left( \frac{\gamma_2^2}{b^2\mu^2(1-\rho^2) + 1} \right),\\
    I_4 &= \tau C\left(\alpha_1^2 \right) + \bar{\tau} C\left( \frac{(\alpha_2+b\mu\rho)^2 + \gamma_2^2 }{b^2\mu^2(1-\rho^2) + 1}\right), \\
    I_5 &= \bar{\tau}  C\left( \frac{\beta_2^2 + \gamma_2^2}{b^2\mu^2(1-\rho^2) + 1} \right),\\
    I_6 &= \tau C\left(\alpha_1^2 \right) + \bar{\tau} C\left( \frac{(\alpha_2+b\mu\rho)^2 + \beta_2^2 + \gamma_2^2 }{b^2\mu^2(1-\rho^2) + 1} \right),\\
    I_7 &= \bar{\tau} C\left( \frac{\gamma_2^2  + b^2\theta^2}{b^2\mu^2(1-\rho^2) + 1} \right),\\
    I_8 &= \tau C\left(\alpha_1^2 \right) + \bar{\tau} C\left( \frac{(\alpha_2+b\mu\rho)^2 + \gamma_2^2  + b^2\theta^2 }{b^2\mu^2(1-\rho^2) + 1}\right),\\
    I_9 &= \bar{\tau}  C\left( \frac{\beta_2 ^2 + \gamma_2 ^2  + b^2\theta^2}{b^2\mu^2(1-\rho^2) + 1} \right),\\
    I_{10} &= \tau C\left(\alpha_1^2 \right) 
            + \bar{\tau} C\left( \frac{(\alpha_2+b\mu\rho) ^2 + \beta_2 ^2 + \gamma_2 ^2  + b^2\theta ^2 }{b^2\mu^2(1-\rho^2) + 1} \right),\\
    I_{11}-I_{1} &=\bar{\tau} C \left( \frac {\mu^2(1-\rho^2) } {a^2\gamma_2^2 + 1} \right),\\
    I_{12} - I_1 &= \bar{\tau} C \left( \frac {\mu^2(1-\rho^2) } {a^2\gamma_2^2 + 1} \right) 
            + \bar{\tau} C\left( \frac{\theta^2}{(a\alpha_2+\mu\rho)^2 + a^2\gamma_2 ^2 +  \mu^2(1-\rho^2) + 1} \right),
\end{align*}
\begin{align*}
    I_{13} - I_1 &= \bar{\tau} C \left( \frac {\mu^2(1-\rho^2) } {a^2\gamma_2^2 + 1} \right) 
            + \bar{\tau} C\left( \frac{  a^2\beta_2 ^2}{(a\alpha_2+\mu\rho)^2 + a^2\gamma_2 ^2 +  \mu^2(1-\rho^2) + 1} \right),\\
    I_{14} - I_1 &= \bar{\tau} C \left( \frac {\mu^2(1-\rho^2) } {a^2\gamma_2^2 + 1} \right) 
             + \bar{\tau} C\left( \frac{ a^2\beta_2^2  + \theta^2 }{(a\alpha_2+\mu\rho)^2 + a^2\gamma_2 ^2 +  \mu^2(1-\rho^2) + 1} \right),
\end{align*}
and $C(x)=0.5\log_2(1+x)$; $\tau \in{[0,1]}$ and $\tau+\bar{\tau} = 1$;
$\rho \in{[-1,1]}$ is the correlation factor between $S_2$'s transmit signal $X_{22}$ and the state $X_{11}$;
and the power allocations $\alpha_1$, $\alpha_2$, $\beta_2$, $\gamma_2$, $\theta$ and $\mu$ satisfy the power constraints \eqref{eq:combined_power_cons}.
\end{cor}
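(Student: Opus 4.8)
The plan is to specialize the discrete-memoryless region of Theorem \ref{thm:HD_DMC_combined} to the Gaussian channel by substituting the jointly Gaussian signaling of \eqref{eq:HD_Gaussian_Combined_signaling} and then evaluating each of the fourteen (conditional) mutual informations $I_1,\ldots,I_{14}$ in closed form. First I would make explicit the two auxiliary codewords that enter the $I_k$ but are not written out in \eqref{eq:HD_Gaussian_Combined_signaling}, namely the Han-Kobayashi public parts $U_{11}=\beta_2 S_{11}$ and $U_{21}=\theta S_{21}$ (read off as the $S_{11}$- and $S_{21}$-components of $X_{12}$ and $X_{22}$). Together with the listed $X_{11},X_{12},X_{22},U_{22}$ and the substituted channel outputs from \eqref{eq:channel_first_phase}--\eqref{eq:channel_second_phase}, all relevant variables are then jointly Gaussian.

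Next I would use that for jointly Gaussian variables every conditional mutual information is a difference of conditional differential entropies, each equal to $\tfrac12\log$ of a conditional variance, so that each $I_k$ becomes $\tfrac12\log\big(\mathrm{var}(\text{output}\mid\text{cond.})/\mathrm{var}(\text{output}\mid\text{cond.},\,\text{signal})\big)$ with the variances read directly off the substituted outputs. The time-sharing weights are assigned by phase: terms observed at $S_2$ or $T_1$ in the first phase (through $Y$ and $Y_{11}$) carry $\tau$, while all terms observed in the second phase (through $Y_{12}$ and $Y_{22}$) carry $\bar\tau$, since each phase occupies the corresponding fraction of the block and the channel is memoryless across phases by \eqref{eq:p_c}. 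The $T_1$-side bounds $I_3$--$I_{10}$ are then routine single- and two-variable Gaussian mutual informations; for example $I_6=\tau I(X_{11};Y_{11})+\bar\tau I(X_{11},U_{11},X_{12};Y_{12}\mid U_{21})$ splits cleanly into its first- and second-phase $C(\cdot)$ contributions, matching the corollary.

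The hard part will be the four differences $I_{11}-I_1,\ldots,I_{14}-I_1$ involving the binning variable $U_{22}$, which is correlated with the state $X_{11}=\alpha_1 S_{10}$ through the coefficient $\mu\rho+\lambda$; here $\lambda$ must be chosen so that the state penalty $I_1=\bar\tau I(U_{22};X_{11}\mid U_{21})$ exactly offsets the state-induced gain inside $I_{11}$. I would establish this as in Theorem \ref{thm:FD_Gaussian_bin_lambda}: writing $I_{11}-I_1=\bar\tau\big[h(Y_{22}\mid U_{21},U_{11})+h(U_{22}\mid X_{11},U_{21})-h(U_{22},Y_{22}\mid U_{21},U_{11})\big]$, only the joint term depends on $\lambda$ (the conditioning on $X_{11}$ removes the $\lambda$-dependence of the middle term), so minimizing the determinant of the conditional covariance of $(U_{22},Y_{22})$ gives the MMSE-type optimal $\lambda^\ast$. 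With $\lambda^\ast$ the $S_{10}$ contribution drops out, leaving the interference-free form $\bar\tau\,C\!\left(\mu^2(1-\rho^2)/(a^2\gamma_2^2+1)\right)$ for $I_{11}-I_1$; the remaining differences follow from the chain-rule identities $I_{12}-I_1=(I_{11}-I_1)+\bar\tau I(U_{21};Y_{22}\mid U_{11})$, $I_{13}-I_1=(I_{11}-I_1)+\bar\tau I(U_{11};Y_{22}\mid U_{21})$ and $I_{14}-I_1=(I_{11}-I_1)+\bar\tau I(U_{11},U_{21};Y_{22})$, each added term being a routine Gaussian mutual information and independent of $\lambda$. Since every sum-rate bound in \eqref{eq:HD_DMC_combined} contains $U_{22}$ only through one of these four differences, the single $\lambda^\ast$ maximizes all of them simultaneously; collecting the evaluated terms then yields the region \eqref{eq:HD_Gaussian_Combined}.
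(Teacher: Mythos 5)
Your proposal is correct and takes essentially the same route as the paper: the paper's proof of Corollary \ref{cor:HD_Gaussian_Combined} is exactly the substitution of the signaling \eqref{eq:HD_Gaussian_Combined_signaling} into Theorem \ref{thm:HD_DMC_combined}, with the optimal binning parameter $\lambda^*$ obtained by the same covariance-determinant minimization you describe (the paper states this as a separate corollary whose proof mirrors the full-duplex argument in Appendix A). Your explicit identification of $U_{11}=\beta_2 S_{11}$ and $U_{21}=\theta S_{21}$, the phase-wise bookkeeping of $\tau$ and $\bar{\tau}$, and the chain-rule decompositions of $I_{12}-I_1$, $I_{13}-I_1$, $I_{14}-I_1$ merely fill in details the paper leaves implicit.
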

\begin{proof}
    Applying the signaling in  \eqref{eq:HD_Gaussian_Combined_signaling} to Theorem \ref{thm:HD_DMC_combined},
    we obtain the rate region in Corollary \ref{cor:HD_Gaussian_Combined}.
\end{proof}
\begin{rem}{Inclusion of half-duplex PDF-binning and Han-Kobayashi schemes.}
\begin{itemize}
\item If we set $\tau=0$, $\alpha_1=\alpha_2=0$, $\rho=0$, rate region \eqref{eq:HD_Gaussian_Combined} becomes the Han-Kobayashi region \cite{Han}.
\item If we set $\beta_2=\theta=0$, rate region \eqref{eq:HD_Gaussian_Combined} becomes the half-duplex PDF-binning region.
\item
The half-duplex PDF-binning region is  the convex hull of all rate pairs ($R_1$, $R_2$) satisfying
\begin{align}
             R_{1} &\leq  \tau  C \left( { c^2 \alpha_1^2 }  \right) + \bar{\tau} C \left( \frac { \gamma_2^2 } {b^2 \mu^2(1-\rho^2)+ 1} \right) , \nonumber \\
             R_{1} &\leq  \tau  C \left( { \alpha_1^2 }  \right) + \bar{\tau} C \left( \frac {(\alpha_2+b\mu\rho)^2 +\gamma_2^2 } {b^2 \mu^2(1-\rho^2)+ 1} \right),\nonumber\\
             R_2 &\leq \bar{\tau} C \left( \frac {\mu^2(1-\rho^2) } {a^2 \gamma_2^2 + 1} \right),
\end{align}
where the power allocations $\alpha_1$, $\alpha_2$, $\gamma_2$ and $\mu$ satisfy the power constraints
\begin{align}\label{eq:bin_power_cons}
    \tau \alpha_1^2 + \bar{\tau} (\alpha_2^2 + \gamma_2^2) &\leq P_1, \nonumber\\
    \bar{\tau} \mu ^2 &\leq P_2.
\end{align}
\item
    The maximum rate for $S_1$ is achieved by setting $\beta_2=\theta=0$, $\rho = \pm 1$ and $\mu =\rho \sqrt{P_2}$ as
            \begin{align}\label{eq:HD_Gaussian_Bin_maxR1}
                R_{1}^{\max} = &\max_{\substack{\tau \alpha_1^2 + \bar{\tau} (\alpha_2^2 + \gamma_2^2) \leq P_1}}  \min 
                 \Bigg \{\tau C ({c^2 \alpha_1^2}) + \bar{\tau} C(\gamma_2^2), 
                                   \tau C(\alpha_1^2) + \bar{\tau} C\left( \left(\alpha_2+b\sqrt{P_2}\right)^2 + \gamma_2^2 \right) \Bigg \}.
            \end{align}
    A solution for this optimization problem is available in \cite{HostTIT2005}.
    Note that in the half-duplex mode, partial decode-forward achieves a strictly higher rate than pure decode-forward for the Gaussian channel.
\item
    The maximum rate for $S_2$ is achieved by setting $\tau=0$, $\rho=0$, $\alpha_1=\alpha_2=\beta_2=\gamma_2=\theta=0$, and $\mu=\sqrt{P_2}$ as
            \begin{align}\label{eq:HD_Gaussian_Bin_maxR2}
                R_{2}^{\max} = C(P_2).
            \end{align}
\end{itemize}
\end{rem}
\begin{rem}{The optimal binning parameter can be found similarly to the full-duplex case as follows.}
\begin{cor}
The optimal parameter $\lambda$ for the half-duplex Han-Kobayashi partial decode-forward binning scheme is
\begin{align}
    \lambda^* &= \frac{a\alpha_2\mu^2(1-\rho^2)-\mu\rho(a^2 \gamma_2^2+1)}{a^2 \gamma_2^2+\mu^2(1-\rho^2)+1}.
\end{align}
\end{cor}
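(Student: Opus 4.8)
The plan is to follow the route used for the full-duplex optimal binning parameter in Theorem~\ref{thm:FD_Gaussian_bin_lambda} and Corollary~\ref{cor:FD_Gaussian_optimalLambda}, now applied to the half-duplex signaling in \eqref{eq:HD_Gaussian_Combined_signaling}. First I would argue that, in the rate region of Theorem~\ref{thm:HD_DMC_combined}, the parameter $\lambda$ enters only through the combination $I_{11}-I_1$. Indeed, $\lambda$ appears solely inside the binning variable $U_{22}$, which in turn affects only the four $T_2$ terms $I_{11},\dots,I_{14}$ and the binning cost $I_1$; and since $I_{12}-I_1$, $I_{13}-I_1$, and $I_{14}-I_1$ each differ from $I_{11}-I_1$ by a quantity whose numerator and denominator are $\lambda$-free (the extra logarithms in Corollary~\ref{cor:HD_Gaussian_Combined} involve only $\theta^2$, $a^2\beta_2^2$ and the $\lambda$-independent factor $(a\alpha_2+\mu\rho)^2$), maximizing $I_{11}-I_1$ over $\lambda$ simultaneously relaxes every constraint in which $\lambda$ occurs. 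Hence it suffices to maximize $I_{11}-I_1=\bar{\tau}\bigl[I(U_{22};Y_{22}|U_{21},U_{11})-I(U_{22};X_{11}|U_{21})\bigr]$.

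Next I would reduce this to a single entropy term. Because $U_{11}$ (the public-message auxiliary, $\beta_2 S_{11}$) is independent of the pair $(U_{22},X_{11})$ given $U_{21}=\theta S_{21}$, I may add $U_{11}$ to the conditioning in the binning term without altering it, so that everything is conditioned on $(U_{21},U_{11})$. Writing the difference as
\begin{align*}
H(Y_{22}|U_{21},U_{11})+H(U_{22}|X_{11},U_{21},U_{11})-H(U_{22},Y_{22}|U_{21},U_{11}),
\end{align*}
I observe that only the joint entropy depends on $\lambda$: the term $H(Y_{22}|U_{21},U_{11})$ is plainly $\lambda$-free, and $H(U_{22}|X_{11},U_{21},U_{11})$ is also $\lambda$-free because conditioning on $X_{11}=\alpha_1 S_{10}$ determines $S_{10}$ and leaves only the residual $\mu\sqrt{1-\rho^2}\,S_{22}$ of variance $\mu^2(1-\rho^2)$. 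Maximizing $I_{11}-I_1$ therefore reduces to minimizing the $2\times2$ conditional covariance determinant of $(U_{22},Y_{22})$ given $(U_{21},U_{11})$, exactly as in the full-duplex proof.

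I would then read off that covariance matrix from \eqref{eq:HD_Gaussian_Combined_signaling}: conditioning removes the $\theta S_{21}$ and $\beta_2 S_{11}$ components, leaving $(\mu\rho+\lambda)S_{10}+\mu\sqrt{1-\rho^2}\,S_{22}$ and $(a\alpha_2+\mu\rho)S_{10}+a\gamma_2 S_{12}+\mu\sqrt{1-\rho^2}\,S_{22}+Z_{22}$, so that $\mathrm{var}(U_{22})=(\mu\rho+\lambda)^2+\mu^2(1-\rho^2)$, $\mathrm{var}(Y_{22})=(a\alpha_2+\mu\rho)^2+a^2\gamma_2^2+\mu^2(1-\rho^2)+1$, and $\mathrm{cov}(U_{22},Y_{22})=(\mu\rho+\lambda)(a\alpha_2+\mu\rho)+\mu^2(1-\rho^2)$. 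Since the determinant is a convex quadratic in $\lambda$ with positive leading coefficient $a^2\gamma_2^2+\mu^2(1-\rho^2)+1$, its unique minimizer is found by setting the derivative in $\lambda$ to zero; solving this linear equation and simplifying yields the claimed formula for $\lambda^*$, with no local-versus-global ambiguity.

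The main obstacle is not the algebra but the justification in the first paragraph that isolating $I_{11}-I_1$ is legitimate: I must confirm, via the Fourier--Motzkin structure underlying Theorem~\ref{thm:HD_DMC_combined}, that every $\lambda$-dependent constraint is loosened by the same scalar optimization, so that one minimization of the covariance determinant maximizes the entire region rather than trading one bound off against another. Once this is established, the remaining steps mirror the full-duplex derivation verbatim, with $\beta,\delta$ replaced by the half-duplex coefficients $\beta_2,\gamma_2$ and $\alpha$ by $\alpha_2$.
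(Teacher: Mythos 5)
Your proposal is correct and follows essentially the same route as the paper: the paper's proof simply invokes the full-duplex argument of Corollary~\ref{cor:FD_Gaussian_optimalLambda} (Appendix~\ref{proof:FD_Gaussian_Combined_lambda}), i.e., maximize $I_{11}-I_1$, note that $\lambda$ enters only the joint entropy $H(U_{22}',Y_{22}')$, and minimize the determinant of the conditional covariance of $(U_{22},Y_{22})$ given $(U_{21},U_{11})$. Your added justification that every $\lambda$-dependent constraint factors through $I_{11}-I_1$ plus $\lambda$-free terms, and your covariance entries and the resulting minimizer, all check out against the paper's computation with $\alpha,\beta,\delta$ replaced by $\alpha_2,\beta_2,\gamma_2$.
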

\begin{proof}
Similar approach to the proof of Corollary \ref{cor:FD_Gaussian_optimalLambda}.
\end{proof}
\end{rem}

\subsection{Performance Comparison}\label{sec:numerical_sections}
\subsubsection{Existing results}
Very few results currently exist for the CRC. We can only find two results for the half-duplex mode. Next we briefly discuss each of these results.

Devroye, Mitran and Tarokh
\cite{Devroye2006TIT} propose four half-duplex protocols with rate region as the convex hull of the four regions.
One protocol is the Han-Kobayashi scheme for the interference channel, and the other three are 2-phase protocols in which $S_2$ obtains $S_1$'s message causally in the first phase as in a broadcast channel, then transmits cognitively in the second phase.
All these 3 protocols have $T_1$ decode at the end of both phases instead of only at the end of the second phase, hence they are suboptimal.
Protocol 2 has the idea of decode-forward by keeping the same input distribution at $S_1$ in both phases, but because in the second phase,
it reduces rate at $S_1$, thus it does not achieve the rate of decode-forward relaying.
Thus, even though the rate region includes the Han-Kobayashi region (in protocol 3), it does not include partial decode-forward relaying.

Chatterjee, Tong and Oyman \cite{ChatterjeeISIT2010} propose an achievable rate region for the half-duplex CRC based on rate-splitting, block Markov encoding, Gelfand-Pinsker binning and backward decoding.
The transmission is performed in $B$ blocks, each is divided into two phases.
In each phase, each user splits its message into two parts, one common and one private.
The primary user ($S_1$) superimposes its messages in both phases of the current block on the messages in the first phase of the previous block.
The cognitive user ($S_2$) only transmits in the second phase and bins both its message parts against the private message of $S_1$ in the first phase of the previous block.
Backward decoding is then used to decode the messages after $B$ blocks.
We have several comments on this scheme:
\begin{itemize}
\item
    Block Markovity is not necessary in half-duplex mode. We can superimpose the second-phase signal on the first-phase signal of the same block, instead of superimposing both phase signals on the first-phase signal of the previous block and using backward decoding as in \cite{ChatterjeeISIT2010}. Such a half-duplex block Markovity incurs unnecessarily long decoding delay and also wastes power to transmit the first-phase information of the current block, which is decoded backwardly.
\item
    Joint decoding of both the state and the binning auxiliary random variable at $T_1$ is not valid (similar to \cite{CaoISIT2007,YangTIT2011}). The rate region is thus larger than possible, but can be corrected in this step.
\item
    This scheme only covers half-duplex decode-forward relaying (when there is no binning) instead of partial decode-forward relaying and hence achieves a maximum rate for $R_1$ smaller than \eqref{eq:HD_Gaussian_Bin_maxR1}.
\end{itemize}

\subsubsection{Numerical Examples}
In this section, we provide numerical results to compare the two proposed schemes with the Han-Kobayashi and other known coding schemes \cite{Devroye2006TIT,ChatterjeeISIT2010} for the half-duplex CRC.

Figure \ref{fig:HD_PDFBin_HKPDFBin_comparison} shows the comparison between half-duplex PDF-binning, HK-PDF-binning and the Han-Kobayashi scheme.
It can be seen that although PDF-binning has a larger maximum rate for $R_1$ than the Han-Kobayashi scheme,
it is not always better.
But the half-duplex HK-PDF-binning rate region encompasses both the Han-Kobayashi and the PDF-binning regions.
\begin{figure}[t]
\centering
\includegraphics[scale=0.45]{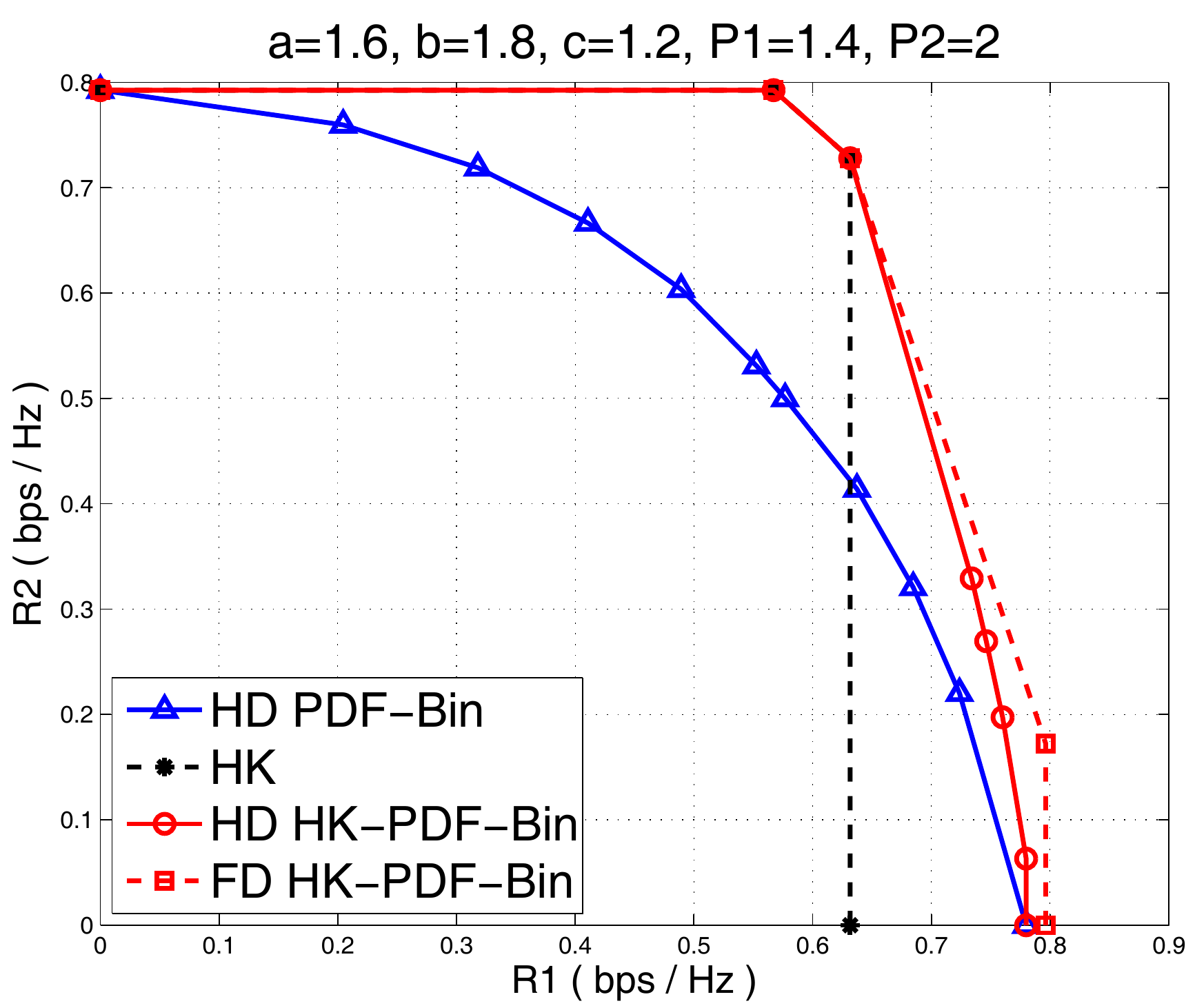}
\caption{Comparison of four coding schemes (HD = half-duplex, FD = full-duplex).}
\label{fig:HD_PDFBin_HKPDFBin_comparison}
\end{figure}

\begin{figure}[t]
\centering
\includegraphics[scale=0.45]{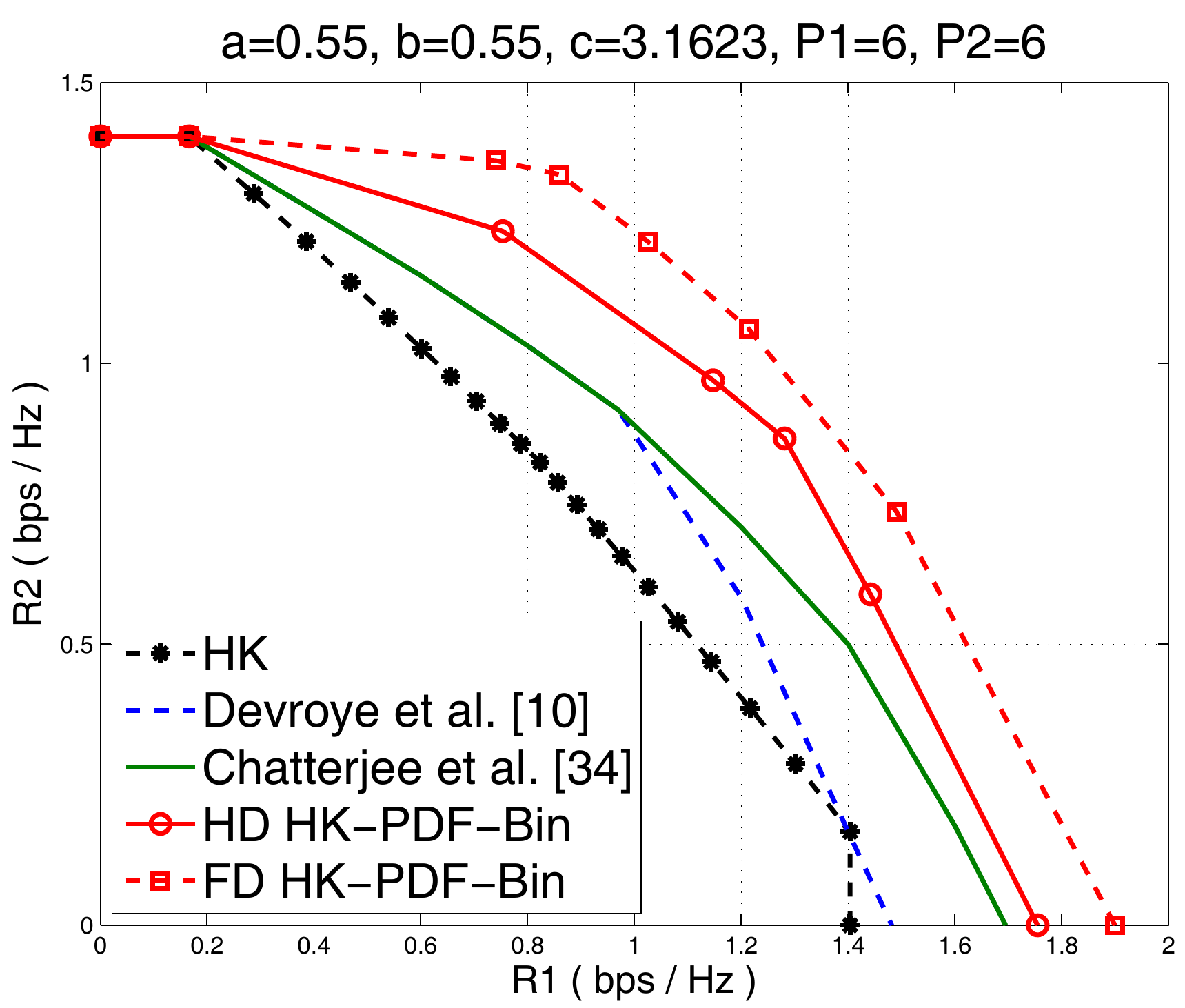}
\caption{Comparison of the HK-PDF-binning schemes with existing schemes.}
\label{fig:HD_HKPDFBin_Devroye_comparison}
\end{figure}

In Figure \ref{fig:HD_HKPDFBin_Devroye_comparison}, we compare the HK-PDF-binning schemes with existing half-duplex schemes for the CRC in \cite{Devroye2006TIT, ChatterjeeISIT2010}.
We can see that HK-PDF-binning is strictly better than
all existing schemes. Furthermore, the proposed scheme is more comprehensive than the protocols in \cite{Devroye2006TIT} and simpler than the scheme in \cite{ChatterjeeISIT2010}.

These figures also show that the gap in achievable rates by the HK-PDF-binning scheme in the half- and full-duplex modes is quite small.
Thus, the rate loss caused by the half-duplex constraint appears to be insignificant.

\section{Conclusion}
In this paper, we have proposed two new coding schemes for both the
full- and half-duplex cognitive relay channels. These two schemes are
based on partial decode-forward relaying, Gelfand-Pinsker binning and
Han-Kobayashi coding. The half-duplex schemes are adapted from the
full-duplex schemes by sending different message parts in different
phases, removing the block Markov encoding and applying joint decoding
across both phases.

When applied to Gaussian channels, different from the traditional
binning in dirty paper coding, in which the transmit signal is
independent of the state, here we introduce a correlation between the
transmit signal and the state, which enlarges the rate region by
allowing both binning and forwarding.  We also derive the optimal
binning parameter for each coding scheme.  Results show that the
proposed binning schemes achieve a higher rate than all existing
schemes for user $1$ by allowing user $2$ to also forward a part of
the message of user $1$.  Furthermore, the Han-Kobayashi PDF-binning
scheme contains both the Han-Kobayashi scheme and partial
decode-forward relaying and outperforms all existing schemes by
achieving a larger rate region for both users. Numerical results also
suggest that the difference in achievable rates between the half-
and full-duplex modes for the CRC is small.
\appendices
\section{Proof of the optimal binning parameter $\lambda^*$ for full-duplex HK-PDF-binning}\label{proof:FD_Gaussian_Combined_lambda}
To simultaneously maximize $R_1$ and $R_2$ in region \eqref{eq:FD_DMC_combined}, we can simply maximize the term $I_{11}-I_1$ as follows.
\begin{align*}
    &I(U_{22};Y_{2}|U_{21},U_{11}) - I(U_{22};T_{10}|U_{21}) \\
    &= H(Y_{2}|U_{21},U_{11}) - H(Y_{2}|U_{21},U_{22},U_{11}) 
     - H(U_{22}|U_{21}) + H(U_{22}|T_{10},U_{21})\\
    &= H(Y_{2}') - H(Y_{2}'|U_{22}') - H(U_{22}') + H(U_{22}|T_{10},U_{21})\\
    &= H(Y_{2}') +  H(U_{22}|T_{10},U_{21}) - H(U_{22}',Y_{2}'),
\end{align*}
where
\begin{align*}
    Y_{2}' &= Y_{2}|U_{21},U_{11} 
           = (a\alpha + \mu \rho) S_{10} ' + a\beta S_{10} + a\delta S_{12} 
            + \mu \sqrt{1-\rho^2}S_{22} + Z_2\\
    U_{22}' &= U_{22}|U_{21},U_{11} 
            = (\mu\rho+\lambda)S_{10}' + \mu\sqrt{1-\rho^2}S_{22}.
\end{align*}
Note that $\lambda$ only affects the last term $H(U_{22}',Y_{2}')$.
The covariance matrix between $U_{22}'$ and $Y_{2}' $ is
\begin{align}\label{eq:FD_Gaussian_Combined_cov}
\text{cov}(U_{22}', Y_{2}') =
    \begin{bmatrix}
    \text{var}(U_{22}')        & \text{E}(U_{22}',Y_{2}')\\
    \text{E}(U_{22}',Y_{2}')    & \text{var}(Y_{2}')
    \end{bmatrix},
\end{align}
where
\begin{align*}
    \text{var}(U_{22}') &= \mu  ^2 + \lambda ^2 + 2\mu\rho\lambda, \\
    \text{E}(U_{22}',Y_{2}')&=(\mu\rho+\lambda)(a\alpha+\mu\rho) + \mu^2(1-\rho^2),\\
    \text{var}(Y_{2}') &= (a\alpha+\mu\rho)^2+ a^2\beta^2 + a^2\delta^2+  \mu^2(1-\rho^2) + 1.
\end{align*}
Minimizing the determinant of the matrix in \eqref{eq:FD_Gaussian_Combined_cov} leads to
the optimal $\lambda$ as in \eqref{eq:lambda_FD_Gaussian_Combined}.

\section{Proof of Theorem \ref{thm:HD_DMC_Bin} (Half-duplex PDF-binning) }\label{proof:HD_DMC_Bin}
We use random codes and fix a joint probability distribution
\begin{align*}
    p(x_{11})p(x_{12}|x_{11})p(u_2|x_{11})p(x_{22}|x_{11},u_2).
\end{align*}

\subsection{Codebook generation}
\begin{itemize}
  \item
      Independently generate $2^{nR_{10}}$ sequences $x_{11}^n \sim \prod_{k=1}^{n}$ $p(x_{11k})$.
      Index these codewords as  $x_{11}^n(w_{10})$, $w_{10}\in [1,2^{nR_{10}}]$.

  \item
      For each $x_{11}^n(w_{10})$, independently generate $2^{nR_{11}}$ sequences
      $x_{12}^n \sim \prod_{k=1}^{n}p(x_{12k}|x_{11k}$).
       Index these codewords as $x_{12}^n(w_{11}|w_{10})$, $w_{11} \in [1,2^{nR_{11}}]$, $w_{10} \in [1,2^{nR_{10}}]$.

  \item
      Independently generate $2^{n(R_{2}+ R_{2}')}$ sequences $u_2^n \sim \prod_{k=1}^{n}p(u_{2k})$.
      Index these codewords as  $ u_2^n( w_{2}, v_{2})$, $w_{2} \in [1,2^{nR_{2}}]$
      and  $v_{2} \in [1,2^{nR_{2}'}]$.

  \item For each $x_{11}(w_{10})$
      and $u_2^n(w_{2}, v_{2})$, generate one $x_{22}^n \sim \prod_{k=1}^{n}p(x_{22i}|x_{11k}, u_{2k})$.
      Index these codewords as $x_{22}^n(w_{10}, w_{2}, v_2)$, $w_{2} \in [1,2^{nR_{2}}]$, $v_{2} \in [1,2^{nR_{2}'}]$.

\end{itemize}

\subsection{Encoding}
\begin{itemize}
  \item In the first phase, $S_1$ sends the codewords $x_{11}^{ \tau n}(w_{10})$. $S_2$ does not send anything.

  \item In the second phase, $S_1$ sends $x_{12}^{\bar{\tau}n}(w_{11}|w_{10})$.

      For $S_2$, it searches for a $v_{2}$ such that
            \begin{align}
                (x_{11}^{\bar{\tau}n}{(w_{10})}, u_2^{\bar{\tau}n}( w_{2}, v_{2}) ) \nonumber
                \in A_\epsilon^{(\bar{\tau}n)}(P_{X_{11} U_2}).
            \end{align}
      Such $v_{2}$ exists with high probability if
            \begin{align}
                R_{2}' \geq \bar{\tau} I( U_2; X_{11}).
            \end{align}
      $S_2$ then transmits $x_{22}^{\bar{\tau}n}(w_{10}, w_{2}, v_{2})$.
\end{itemize}
\subsection{Decoding}
\begin{itemize}
\item At the end of the first phase, $S_2$ searches for a unique $\hat{w}_{10}$ such that
\begin{align*}
    (x_{11}^{ \tau n}{(\hat{w}_{10})}, \mathbf{y})
    \in A_\epsilon^{(\tau n)}(P_{X_{11} Y}).
\end{align*}
We can show that the decoding error probability goes to 0 as $n\rightarrow \infty$ if
\begin{align}\label{eq:sp_bin_begin}
    R_{10}  &\leq  \tau  I(X_{11};Y).
\end{align}

\item At the end of the second phase, $T_1$ searches for a unique ($\hat{w}_{10},\hat{w}_{11}$) such that
\begin{align*}
    ( x_{11}^{\bar{\tau}n}{(\hat{w}_{10})}, x_{12}^{\bar{\tau}n}(\hat{w}_{11}|\hat{w}_{10}), \mathbf{y_{12}})&\in A_\epsilon^{(\bar{\tau}n)}(P_{X_{11} X_{12} Y_{12}})&\\
     \text{and}  \quad  (x_{11}^{ \tau n}{(\hat{w}_{10})},  \mathbf{y_{11}}) &\in A_\epsilon^{(\tau n)}(P_{X_{11} Y_{11}}).&
\end{align*}

Here $\mathbf{y_{11}}$ and $\mathbf{y_{11}}$ indicate the received vectors at $T_1$ during the first and second phases.
The decoding error probability goes to 0 as $n\rightarrow \infty$ if
\begin{align}
             R_{11} &\leq \bar{\tau} I(X_{12};Y_{12}|X_{11}) \nonumber \\
    R_{10} + R_{11} &\leq \bar{\tau} I(X_{11}, X_{12};Y_{12}) +  \tau  I(X_{11}; Y_{11}).
\end{align}
    \item $T_2$ treats the codeword $X_{11}^{\bar{\tau}n}$ from $S_1$ as the state and decodes $w_{2}$.
    It searches for a unique $\hat{w}_2$ for some $\hat{v}_{2} $  such that
\begin{align*}
    (u_2^{\bar{\tau}n}(\hat{w}_{2} ,\hat{v}_{2}), \mathbf{y_{22}}) \in   A_\epsilon^{(\bar{\tau}n)}(P_{U_2 Y_{22}}).
\end{align*}
The decoding error probability goes to 0 as $n\rightarrow \infty$ if
\begin{align} \label{eq:sp_bin_end}
             R_2 + R_2'  &\leq \bar{\tau} I(U_2;Y_{22}).
\end{align}
\end{itemize}
Combine all the above rate constraints, we get
\begin{align}
             R_{2}' &\geq \bar{\tau} I( U_2; X_{11})\nonumber \\
             R_{10}  &\leq  \tau  I(X_{11};Y)\nonumber \\
             R_{11} &\leq \bar{\tau} I(X_{12};Y_{12}|X_{11}) \nonumber \\
    R_{10} + R_{11} &\leq \bar{\tau} I(X_{11}, X_{12};Y_{12}) +  \tau  I(X_{11}; Y_{11}) \nonumber \\
             R_2 + R_2'  &\leq \bar{\tau} I(U_2;Y_{22}).
\end{align}
Let $R_1 = R_{10} + R_{11}$, apply Fourier-Motzkin Elimination, we get region \eqref{eq:DMC_PDF_Bin}.

\section{Proof of Theorem \ref{thm:HD_DMC_combined} (Half-duplex HK-PDF-binning) }\label{proof:HD_DMC_Combined}
We use random codes and fix a joint probability distribution
\begin{align*}
    &p(x_{11})p(u_{11})p(x_{12}|x_{11},u_{11})p(u_{21})
    p(u_{22}|u_{21},x_{11})p(x_{22}|x_{11},u_{21},u_{22}).
\end{align*}

\subsection{Codebook generation}
\begin{itemize}
  \item
      Independently generate $2^{nR_{10}}$ sequences $x_{11}^n \sim \prod_{k=1}^{n}p(x_{11k})$.
      Index these codewords as  $x_{11}^n(w_{10})$, $w_{10}\in [1,2^{nR_{10}}]$.

  \item
      Independently generate $2^{nR_{11}}$ sequences
      $u_{11}^n \sim \prod_{k=1}^{n}p(u_{11k}$).
       Index these codewords as $u_{11}^n(w_{11})$, $w_{11} \in [1,2^{nR_{11}}]$.

  \item
      For each $x_{11}^n(w_{10})$ and $u_{11}^n(w_{11})$, independently generate $2^{nR_{12}}$ sequences
      $x_{12}^n \sim \prod_{k=1}^{n}p(x_{12k}|x_{11k},u_{11k}$).
       Index these codewords as $x_{12}^n(w_{12}|w_{10},w_{11})$, $w_{12} \in [1,2^{nR_{12}}]$.

  \item
      Independently generate $2^{n(R_{21})}$ sequences $u_{21}^n \sim \prod_{k=1}^{n}p(u_{21k})$.
      Index these codewords as  $u_{21}^n( w_{21})$, $w_{21} \in [1,2^{nR_{21}}]$.

  \item
      For each $u_{21}^n( w_{21})$, independently generate $2^{n(R_{22}+R_{22}')}$ sequences
      $u_{22}^n \sim \prod_{k=1}^{n}p(u_{22k}|u_{21k}$).
       Index these codewords as $u_{22}^n(w_{22},v_{22}|w_{21})$, $w_{22} \in [1,2^{nR_{22}}]$, $v_{22} \in [1,2^{nR_{22}}]$.

  \item For each $x_{11}(w_{10})$, $u_{21}^n(w_{21})$
      and $u_{22}^n(w_{22}, v_{22}|w_{21})$, generate one $x_{22}^n \sim \prod_{k=1}^{n}p(x_{22k}|u_{22k},u_{21k},x_{11k})$.
      Index these codewords as $x_{22}^n(w_{10}, w_{21},w_{22}, v_{22})$.

\end{itemize}

\subsection{Encoding}
\begin{itemize}
  \item In the first phase, $S_1$ sends the codewords $X_{11}^{ \tau n}(w_{10})$. $S_2$ does not send anything.

  \item In the second phase, $S_1$ sends $x_{12}^{\bar{\tau}n}(w_{12}|w_{10},w_{11})$.

      For $S_2$, it searches for some $v_{22}$ such that
            \begin{align*}
                (x_{11}^{\bar{\tau}n}{(w_{10})},u_{21}^{\bar{\tau}n}( w_{21}), &u_{22}^{\bar{\tau}n}( w_{22}, v_{22}|w_{21})  )
                \in A_\epsilon^{(\bar{\tau}n)}(P_{X_{11} U_{22}|U_{21}}).
            \end{align*}
      Such $v_{22}$ exists with high probability if
            \begin{align}
                          R_{22}' &\geq \bar{\tau} I( U_{22}; X_{11}|U_{21}).
            \end{align}
      $S_2$ then transmits $x_{22}^{\bar{\tau}n}(w_{10},w_{21},w_{22}, v_{22})$.
\end{itemize}

\subsection{Decoding}
\begin{itemize}
\item At the end of the first phase, $S_2$ searches for a unique $\hat{w}_{10}$ such that
\begin{align*}
    (x_{11}^{ \tau n}{(\hat{w}_{10})}, \mathbf{y})
    \in A_\epsilon^{(\tau n)}(P_{X_{11} Y}).
\end{align*}
We can show that the decoding error probability goes to 0 as $n\rightarrow \infty$ if
\begin{align}\label{eq:hd_bin_begin}
             R_{10}  &\leq  \tau  I(X_{11};Y).
\end{align}

\item At the end of the second phase, $T_1$ searches for a unique ($\hat{w}_{10},\hat{w}_{11},\hat{w}_{12}$) for some $\hat{w}_{21}$ such that
\begin{align*}
    ( x_{11}^{\bar{\tau}n}{(\hat{w}_{10})},u_{11}^{\bar{\tau}n}{(\hat{w}_{11})},  x_{12}^{\bar{\tau}n}(\hat{w}_{12}|\hat{w}_{10},\hat{w}_{11}),
     u_{21}^{\bar{\tau}n}{(\hat{w}_{21})},\mathbf{y_{12}})&\in A_\epsilon^{(\bar{\tau}n)}(P_{X_{11} U_{11} X_{12} U_{21} Y_{12}})\\
     \text{and} \quad x_{11}^{ \tau n}{(\hat{w}_{10})}, \mathbf{y_{11}}) &\in A_\epsilon^{(\tau n)}(P_{ X_{11} Y_{11}}).
\end{align*}
The decoding error probability goes to 0 as $n\rightarrow \infty$ if
\begin{align}
                                        R_{12} &\leq \bar{\tau} I(X_{12};Y_{12}|X_{11},U_{11},U_{21}) \nonumber\\
                               R_{10} + R_{12} &\leq \tau I(X_{11};Y_{11}) 
                                                + \bar{\tau} I(X_{11},X_{12};Y_{12}|U_{11},U_{21})  \nonumber \\
                               R_{11} + R_{12} &\leq \bar{\tau} I(U_{11},X_{12};Y_{12}|X_{11},U_{21}) \nonumber \\
                      R_{10} + R_{11} + R_{12} &\leq \tau I(X_{11};Y_{11}) 
                                                + \bar{\tau} I(X_{11},U_{11},X_{12};Y_{12}|U_{21})  \nonumber \\
                               R_{12} + R_{21} &\leq \bar{\tau} I(X_{12},U_{21};Y_{12}|X_{11},U_{11}) \nonumber \\
                      R_{10} + R_{12} + R_{21} &\leq \tau I(X_{11};Y_{11}) 
                                                + \bar{\tau} I(X_{11},X_{12},U_{21};Y_{12}|U_{11})  \nonumber \\
                      R_{11} + R_{12} + R_{21} &\leq \bar{\tau} I(U_{11},X_{12},U_{21};Y_{12}|X_{11}) \nonumber \\
             R_{10} + R_{11} + R_{12} + R_{21} &\leq \tau I(X_{11};Y_{11}) 
                                                + \bar{\tau} I(X_{11},U_{11},X_{12},U_{21};Y_{12}).
\end{align}

    \item $T_2$ uses jointly decoding to decode $(w_{11},w_{21},w_{22})$.
    It searches for a unique $(\hat{w}_{21},\hat{w}_{22})$ for some $(\hat{w}_{11},\hat{v}_{22})$  such that
\begin{align*}
    (u_{11}^{\bar{\tau}n}(\hat{w}_{11} ),&u_{21}^{\bar{\tau}n}(\hat{w}_{21} ), u_{22}^{\bar{\tau}n}(\hat{w}_{22},\hat{v}_{22}|\hat{w}_{21} ),\mathbf{y_{22}}) 
    \in   A_\epsilon^{(\bar{\tau}n)}(P_{U_{11} U_{21} U_{22} Y_{22}}).
\end{align*}
The decoding error probability goes to 0 as $n\rightarrow \infty$ if
\begin{align}
                        R_{22} + R_{22}' &\leq \bar{\tau} I(U_{22};Y_{22}|U_{21},U_{11}) \nonumber\\
               R_{21} + R_{22} + R_{22}' &\leq \bar{\tau} I(U_{21},U_{22};Y_{22}|U_{11}) \nonumber\\
               R_{11} + R_{22} + R_{22}' &\leq \bar{\tau} I(U_{11},U_{22};Y_{22}|U_{21}) \nonumber\\
      R_{11} + R_{21} + R_{22} + R_{22}' &\leq \bar{\tau} I(U_{11},U_{21},U_{22};Y_{22}).
\end{align}
\end{itemize}
Let $R_1=R_{10}+R_{11}+R_{12}$ and $R_2=R_{21}+R_{22}$, apply Fourier-Motzkin Elimination  on the above constraints, we get region \eqref{eq:HD_DMC_combined}.

\bibliographystyle{IEEEtran}
\bibliography{references}




\end{document}